\newcolumntype{.}{D{.}{.}{-1}}
\newcolumntype{d}[1]{D{.}{.}{#1}}
\theoremstyle{plain}
\newtheorem{theorem}{Theorem}
\newtheorem{assumption}{Assumption}
\newtheorem{lemma}{Lemma}
\newcommand{\qed}{\hfill \ensuremath{\Box}}
\newcommand{\mc}[1]{\mathbb{#1}}
\renewcommand{\d}{\mathrm{d}}
\DeclareMathOperator\Cov{Cov}
\newenvironment{proof}{\vspace{1ex}\noindent{\bf Proof}\hspace{0.5em}}
{\hfill\qed\vspace{1ex}}
\tikzset{auto,node distance =1 cm and 1 cm,semithick,
	state/.style ={circle, draw, minimum width = 0.7 cm},
	point/.style = {circle, draw, inner sep=0.04cm,fill,node contents={}},
	bidirected/.style={Latex-Latex,dashed},
	el/.style = {inner sep=2pt, align=left, sloped}
}
\newcommand{\blind}{1}
\begin{document}
	
	\newcommand\ud{\mathrm{d}}
	\newcommand\dist{\buildrel\rm d\over\sim}
	\newcommand\ind{\stackrel{\rm indep.}{\sim}}
	\newcommand\iid{\stackrel{\rm i.i.d.}{\sim}}
	\newcommand\logit{{\rm logit}}
	\renewcommand\r{\right}
	\renewcommand\l{\left}
	\newcommand\cO{\mathcal{O}}
	\newcommand\cY{\mathcal{Y}}
	\newcommand\cZ{\mathcal{Z}}
	\newcommand\E{\mathbb{E}}
	\newcommand\cL{\mathcal{L}}
	\newcommand\V{\mathbb{V}}
	\newcommand\cA{\mathcal{A}}
	\newcommand\cB{\mathcal{B}}
	\newcommand\cD{\mathcal{D}}
	\newcommand\cE{\mathcal{E}}
	\newcommand\cM{\mathcal{M}}
	\newcommand\cU{\mathcal{U}}
	\newcommand\cN{\mathcal{N}}
	\newcommand\cT{\mathcal{T}}
	\newcommand\cX{\mathcal{X}}
	\newcommand\bA{\mathbf{A}}
	\newcommand\bH{\mathbf{H}}
	\newcommand\bB{\mathbf{B}}
	\newcommand\bP{\mathbf{P}}
	\newcommand\bQ{\mathbf{Q}}
	\newcommand\bU{\mathbf{U}}
	\newcommand\bD{\mathbf{D}}
	\newcommand\bS{\mathbf{S}}
	\newcommand\bx{\mathbf{x}}
	\newcommand\bX{\mathbf{X}}
	\newcommand\bV{\mathbf{V}}
	\newcommand\bW{\mathbf{W}}
	\newcommand\bM{\mathbf{M}}
	\newcommand\bZ{\mathbf{Z}}
	\newcommand\bY{\mathbf{Y}}
	\newcommand\bt{\mathbf{t}}
	\newcommand\bbeta{\bm{\beta}}
	\newcommand\bpi{\bm{\pi}}
	\newcommand\bdelta{\bm{\delta}}
	\newcommand\bgamma{\bm{\gamma}}
	\newcommand\balpha{\bm{\alpha}}
	\newcommand\bone{\mathbf{1}}
	\newcommand\bzero{\mathbf{0}}
	\newcommand\tomega{\tilde\omega}
	\newcommand{\argmax}{\operatornamewithlimits{argmax}}
	
	\newcommand{\R}{\textsf{R}}
	
	\newcommand\spacingset[1]{\renewcommand{\baselinestretch}%
		{#1}\small\normalsize}
	
	\spacingset{1}
	
	\newcommand{\tit}{\bf Experimental Evaluation of Individualized Treatment
		Rules}
	
	
	\if0\blind
	\title{\tit}
	\fi
	
	\if1\blind
	
	\title{\tit\thanks{We thank Naoki Egami, Colin Fogarty, Zhichao
				Jiang, Susan Murphy, Nicole Pashley, Stefan Wager, three
				anonymous reviewers, and especially the Associate Editor for
				many helpful comments. We thank Dr. Hsin-Hsiao Wang who inspired
				us to write this paper.  Imai thanks the Alfred P. Sloan
				Foundation for partial support of this research.  The proposed
				methodology is implemented through an open-source \R\ package,
				{\sf evalITR}, which is freely available for download at the
				Comprehensive R Archive Network (CRAN;
				\url{https://CRAN.R-project.org/package=evalITR}).  } }
		
		\author{Kosuke Imai\thanks{Professor, Department of Government and
				Department of Statistics, Harvard University, Cambridge, MA
				02138. Phone: 617--384--6778, Email:
				\href{mailto:Imai@Harvard.Edu}{Imai@Harvard.Edu}, URL:
				\href{https://imai.fas.harvard.edu}{https://imai.fas.harvard.edu}}
			\hspace{.75in} Michael Lingzhi Li\thanks{Operation Research
				Center, Massachusetts Institute of Technology, Cambridge, MA
				02139. \href{mailto:mlli@mit.edu}{mlli@mit.edu}}}
		
		\date{First version: May 15, 2019\\ This version: \today}
		
		\fi
		\maketitle
		
		\pdfbookmark[1]{Title Page}{Title Page}
		
		\thispagestyle{empty}
		\setcounter{page}{0}
		
		\begin{abstract}
			
			The increasing availability of individual-level data has led to
			numerous applications of individualized (or personalized) treatment
			rules (ITRs).  Policy makers often wish to empirically evaluate ITRs
			and compare their relative performance before implementing them in a
			target population.  We propose a new evaluation metric, the
			population average prescriptive effect (PAPE).  The PAPE compares
			the performance of ITR with that of non-individualized treatment
			rule, which randomly treats the same proportion of units.  Averaging
			the PAPE over a range of budget constraints yields our second
			evaluation metric, the area under the prescriptive effect curve
			(AUPEC).  The AUPEC represents an overall performance measure for
			evaluation, like the area under the receiver and operating
			characteristic curve (AUROC) does for classification, and is a
			generalization of the QINI coefficient utilized in uplift modeling.
			We use Neyman's repeated sampling framework to estimate the PAPE and
			AUPEC and derive their exact finite-sample variances based on random
			sampling of units and random assignment of treatment.  We extend our
			methodology to a common setting, in which the same experimental data
			is used to both estimate and evaluate ITRs.  In this case, our
			variance calculation incorporates the additional uncertainty due to
			random splits of data used for cross-validation.  The proposed
			evaluation metrics can be estimated without requiring modeling
			assumptions, asymptotic approximation, or resampling methods. As a
			result, it is applicable to any ITR including those based on complex
			machine learning algorithms.  The open-source software package is
			available for implementing the proposed methodology. 
			
			\bigskip
			\noindent {\bf Key Words:} causal inference, heterogenous treatment
			effects, machine learning, precision medicine, uplift modeling
			
		\end{abstract}
		\clearpage
		\spacingset{1.83}
		
		\section{Introduction}
		\label{sec:intro}
		
		In today's data-rich society, the individualized (or personalized)
		treatment rules (ITRs), which assign different treatments to
		individuals based on their observed characteristics, play an essential
		role.  Examples include personalized medicine and micro-targeting in
		business and political campaigns
		\citep[e.g.,][]{hamb:coll:10,imai:stra:11}.  In the causal inference
		literature, a number of researchers have developed methods to estimate
		optimal ITRs using a variety of machine learning algorithms \citep[see
		e.g.,][]{qian:murp:11,zhan:etal:12a,fu2016estimating,
			lued:vand:16a,lued:vand:16, zhou:etal:17,athe:wage:18,kita:tete:18}.
		In addition, applied researchers often use machine learning algorithms
		to estimate heterogeneous treatment effects and then construct ITRs
		based on the resulting estimates.
		
		In this paper, we consider a common setting, in which a policy-maker
		wishes to experimentally evaluate the empirical performance of an ITR
		before implementing it in a target population. Such evaluation is also
		essential for comparing the efficacy of alternative ITRs.
		Specifically, we show how to use a randomized experiment for
		evaluating ITRs. We propose two new evaluation metrics.  The first is
		the population average prescriptive effect (PAPE), which compares an
		ITR with a non-individualized treatment rule that randomly assigns the
		same proportion of units to the treatment condition. The PAPE
		represents the difference between the average outcome under the ITR
		and that under the random treatment rule.  The key idea is that a
		well-performing ITR should outperform the random treatment rule, which
		does not utilize any individual-level information.
		
		Averaging the PAPE over a range of budget constraints yields our
		second evaluation metric, the area under the prescriptive effect curve
		(AUPEC).  Like the area under the receiver and operating
		characteristic curve (AUROC) for classification, the AUPEC represents
		an overall summary measure of how well an ITR performs over the random
		treatment rule that treats the same proportion of units.
		
		We estimate these evaluation metrics using \citet{neym:23}'s repeated
		sampling framework \citep[see][Chapter~6]{imbe:rubi:15}.  An advantage
		of this approach is that it does not require any modeling assumption
		or asymptotic approximation.  As a result, we can evaluate a broad
		class of ITRs including those based on complex machine learning
		algorithms.  We show how to estimate the PAPE and AUPEC with a minimal
		amount of finite sample bias and derive the exact variance solely
		based on random sampling of units and random assignment of treatment.
		
		We further extend this methodology to a common evaluation setting, in
		which the same experimental data is used to both estimate and evaluate
		ITRs.  In this case, our finite-sample variance calculation is exact
		and directly incorporates the additional uncertainty due to random
		splits of data used for cross-validation.  We implement the proposed
		methodology through an open-source \R\ package\if0\blind, {\sf
			evalITR}\fi.
		
		Our simulation study demonstrates the accurate coverage of the
		proposed confidence intervals in small samples
		(Section~\ref{sec:synthetic}).  We also apply our methods to the
		Project STAR (Student-Teacher Achievement Ratio) experiment and
		compare the empirical performance of ITRs based on several popular
		methods (Section~\ref{sec:empirical}).  Our evaluation approach
		addresses theoretical and practical difficulties of conducting
		reliable statistical inference for ITRs.
		
		\paragraph{Relevant Literature.}
		A large number of existing studies have focused on the derivation of
		optimal ITRs that maximize the population average value.  For example,
		\citet{qian:murp:11} use penalized least squares whereas
		\citet{zhao:etal:12} show how a support vector machine can be used to
		derive an optimal ITR.  Another popular approach is based on
		doubly-robust estimation \citep[e.g.,][]{dudi:etal:11, zhan:etal:12a,
			chak:labe:zhao:14, jian:li:16, athe:wage:18, kall:18}.
		
		We propose a general methodology for empirically evaluating and
		comparing the performance of various ITRs including the ones proposed
		by these and other authors.  While many of these methods come with
		uncertainty measures, even those that produce standard errors rely on
		asymptotic approximation, modeling assumptions, or resampling
		methods. In contrast, our methodology utilizes Neyman's repeated
		sampling framework and does not require any of these assumptions or
		approximations.
		
		There also exists a related literature on policy evaluation.  Starting
		with \citet{mans:04}, many studies focus on the derivation of regret
		bounds given a class of ITRs.  For example, \citet{kita:tete:18} show
		that an ITR, which maximizes the empirical average value, is minimax
		optimal without a strong restriction on the class of ITRs, whereas
		\citet{athe:wage:18} establish a regret bound for an ITR based on
		doubly-robust estimation in observational studies \citep[see
		also][]{zhou:athe:wage:18}. In addition,
		\citet{lued:vand:16a,lued:vand:16} propose consistent estimators of
		the optimal average value even when an optimal ITR is not unique
		\citep[see also][]{rai:18}.
		
		Our goal is different from these studies.  We focus on statistical
		inference using the Neyman's repeated sampling framework for the
		experimental evaluation of arbitrary ITRs including optimal or
		non-optimal and simple or complex ones.  Our evaluation metric is also
		different from the existing metrics.  In particular, to the best of
		our knowledge, we are the first to formally study the AUPEC as an
		AUROC-like summary measure for evaluation.
		
		In contrast, much of the policy evaluation literature focus on the
		optimal average value, which is required to compute the regret of an
		ITR.  \citet{athe:wage:18} briefly discusses a quantity related to the
		PAPE in their empirical application, but this quantity evaluates an
		ITR against the treatment rule that randomly treats exactly one half
		of units rather than the same proportion as the one treated under the
		ITR.  Although empirical studies in the campaign and marketing
		literatures have used ``uplift modeling,'' which is based on the PAPE
		\citep[e.g.,][]{imai:stra:11,rzep:jaro:12,guti:gera:16,asca:18,fifi:18},
		none develops formal estimation and inferential methods.  We show that
		the AUPEC is a generalization of the QINI coefficient, which is a
		widely utilized statistic in uplift modeling
		\citep{radcliffe2007using,diemert2018large}.  Thus, our theoretical
		results for the AUPEC apply directly to the QINI coefficient as well.
		
		Another related literature is concerned with the estimation of
		heterogeneous effects.  Researchers have explored the use of
		tree-based methods
		\citep[e.g.,][]{imai:stra:11,athe:imbe:16,wage:athe:18,hahn2020bayesian},
		regularized regressions \citep[e.g.,][]{imai:ratk:13, kunz:etal:18},
		and ensamble methods \citep[e.g.,][]{vand:rose:11, grim:mess:west:17}.
		In practice, the estimated heterogeneous treatment effects based on
		these machine learning algorithms are used to construct ITRs.
		
		However, as \cite{cher:etal:19} point out, most machine learning
		algorithms, which require data-driven tuning parameters, cannot be
		regarded as consistent estimators of the conditional average treatment
		effect (CATE) unless strong assumptions are imposed.  They propose a
		methodology to estimate heterogeneous treatment effects without such
		assumptions.  Similar to theirs, our methodology does not depend on
		any modeling assumption and accounts for the uncertainty due to
		splitting of data.  The key difference is that we focus on the
		evaluation of ITRs.  In addition, our variance calculation is based on
		randomization and does not rely on asymptotic approximation.
		
		Finally, \cite{andr:kita:mccl:20} develops a conditional inference
		procedure, based on normal approximation, for the average value of the
		best-performing policy based on experimental or observational data.
		In contrast, we develop an unconditional exact inference for the
		difference in the average value between any pair of policies under a
		budget constraint.  We also consider the evaluation of estimated
		policies based on the same experimental data using cross-validation
		whereas \citeauthor{andr:kita:mccl:20} focus on the evaluation of
		fixed policies.
		
		\section{Evaluation Metrics}
		\label{sec:evalmetrics}
		
		In this section, we introduce our evaluation metrics.  We first
		propose the population average prescriptive effect (PAPE) that, unlike
		the population average value, adjusts for the proportion of units
		treated by an ITR.  The idea is that an efficacious ITR should
		outperform a non-individualized treatment rule, which randomly assigns
		the same proportion of units to the treatment condition.  We extend
		the PAPE to the settings with a binding budget constraint.  Finally,
		we propose the area under the prescriptive effect curve (AUPEC) as a
		univariate summary performance measure of an ITR under a range of
		budget constraint.
		
		\subsection{The Setup}
		
		Following the literature, we define an ITR as a deterministic map from
		the covariate space $\cX$ to the binary treatment assignment
		\citep[e.g.,][]{qian:murp:11,zhao:etal:12},
		\begin{equation*}
			f: \cX \longrightarrow \{0, 1\}.
		\end{equation*}
		Let $T_i$ denote the treatment assignment indicator variable, which is
		equal to $1$ if unit $i$ is assigned to the treatment condition, i.e.,
		$T_i \in \cT = \{0,1\}$. For each unit, we observe the outcome
		variable $Y_i \in \cY$ as well as the vector of pre-treatment
		covariates, $\bX_i \in \cX$, where $\cY$ is the support of the outcome
		variable.  We assume no interference between units and denote the
		potential outcome for unit $i$ under the treatment condition $T_i = t$
		as $Y_i(t)$ for $t=0,1$.  Then, the observed outcome is given by
		$Y_i = Y_i(T_i)$.
		\begin{assumption}[No Interference between Units] \label{asm:SUTVA}
			\spacingset{1} The potential outcomes for unit $i$ do not depend on
			the treatment status of other units.  That is, for all
			$t_1, t_2,\ldots,t_n \in \{0, 1\}$, we have,
			$Y_i(T_1 = t_1, T_2 = t_2, \ldots, T_n = t_n) = Y_i(T_i = t_i).$
		\end{assumption}
		Under this assumption, the existing literature almost exclusively
		focuses on the derivation of an optimal ITR that maximizes the
		following population average value
		\citep[e.g.,][]{qian:murp:11,zhao:etal:12,zhou:etal:17},
		\begin{equation}
			\lambda_f \ = \ \E\{Y_i(f(\bX_i))\}. \label{eq:PAV}
		\end{equation}
		Next, we show that $\lambda_f$ may not be the best evaluation metric
		in some cases.
		
		\subsection{The Population Average Prescriptive Effect}
		\label{subsec:PAPE}
		
		We now introduce our main evaluation metric, the Population Average
		Prescriptive Effect (PAPE).  The PAPE is based on two ideas.  First,
		it is reasonable to expect a good ITR to outperform a {\it
			non-individualized} treatment rule, which does not use any
		information about individual units when deciding who should receive
		the treatment.  Second, a budget constraint should be considered since
		the treatment is often costly.  This means that a good ITR should
		identify units who benefit from the treatment most.  These two
		considerations lead to the random treatment rule, which
		assigns, with equal probability, the same proportion of units to the
		treatment condition, as a natural baseline for comparison.
		
		\begin{figure}[t!]
			\centering \spacingset{1}
			\begin{tikzpicture}[scale=3.5]
				\draw[->] (0,0) -- (2.5,0) coordinate (x axis)  node[right, black]{Proportion treated, $p$};
				\draw[->] (0,0) -- (0,2.5) coordinate (y axis) node[above, black]{Average outcome};
				\node[circle,fill=black,inner sep=0pt,minimum size=3pt] (a) at (0,0.4) {};
				\node[left=0.1cm of a] {$\E[Y_i(0)]$};
				\node[circle,fill=black,inner sep=0pt,minimum size=3pt] (b) at (2.2,2.4) {};
				\node[right=0.1cm of b] {$\E[Y_i(1)]$};
				\node[circle,fill=black,inner sep=0pt,minimum size=3pt] (bx) at (2.2, 0) {};
				\node[below=0.1cm of bx] {$1$};
				\draw[dashed,-] (b) -- (bx);
				\draw[black,thick,-] (a) --  (b);
				\node[circle,fill=blue,inner sep=0pt,minimum size=3pt] (g) at (0.44, 1) {};
				\node[] (gy) at (0, 1) {};
				\node[above=0.1cm of g] {\color{blue} $\E[Y_i(f)]$};
				\node[circle,fill=blue,inner sep=0pt,minimum size=3pt] (gx) at (0.44, 0) {};
				\node[below=0.1cm of gx] {\color{blue} $0.2$};
				\draw[blue, dashed,-] (g) -- (gx);
				\node[circle,fill=red,inner sep=0pt,minimum size=3pt] (f) at (1.76, 1.3) {};
				\node[] (fy) at (0, 1.4) {};
				\node[above=0.1cm of f] {\color{red} $\E[Y_i(g)]$};
				\node[circle,fill=red,inner sep=0pt,minimum size=3pt] (fx) at (1.76, 0) {};
				\node[below=0.1cm of fx] {\color{red} $0.8$};
				\draw[red, dashed,-]  (f) -- (fx);
				\node[] at (1.4, 2.3) {Random assignment};
				\node[] at (1.6, 2.15) {rule};
				\node[] (g2) at (0.44, 0.8) {};
				\draw[decoration={brace,mirror,raise=3pt},decorate] (gx) -- node[right=4pt, align=left] {Population \\ Average Value} (g);
				\draw[decoration={brace,raise=3pt},decorate] (g2) -- node[left=3pt]
				{\bf PAPE} (g);
			\end{tikzpicture}
			\caption{The Importance of Accounting for the Proportion of
				Treated Units. In this illustrative example, an ITR $g$
				(red) outperforms another ITR $f$ (blue) in terms of the
				population average value, i.e., $\E[Y_i(f)] <\E[Y_i(g)]$.
				However, unlike $f$, the ITR $g$ is doing worse than the
				random treatment rule (black).  In contrast, the population
				average prescriptive effect (PAPE) measures the performance
				of an ITR as the difference in the average value between the
				ITR and random treatment
				rule.} \label{fig:PAVcounterexample}
		\end{figure}
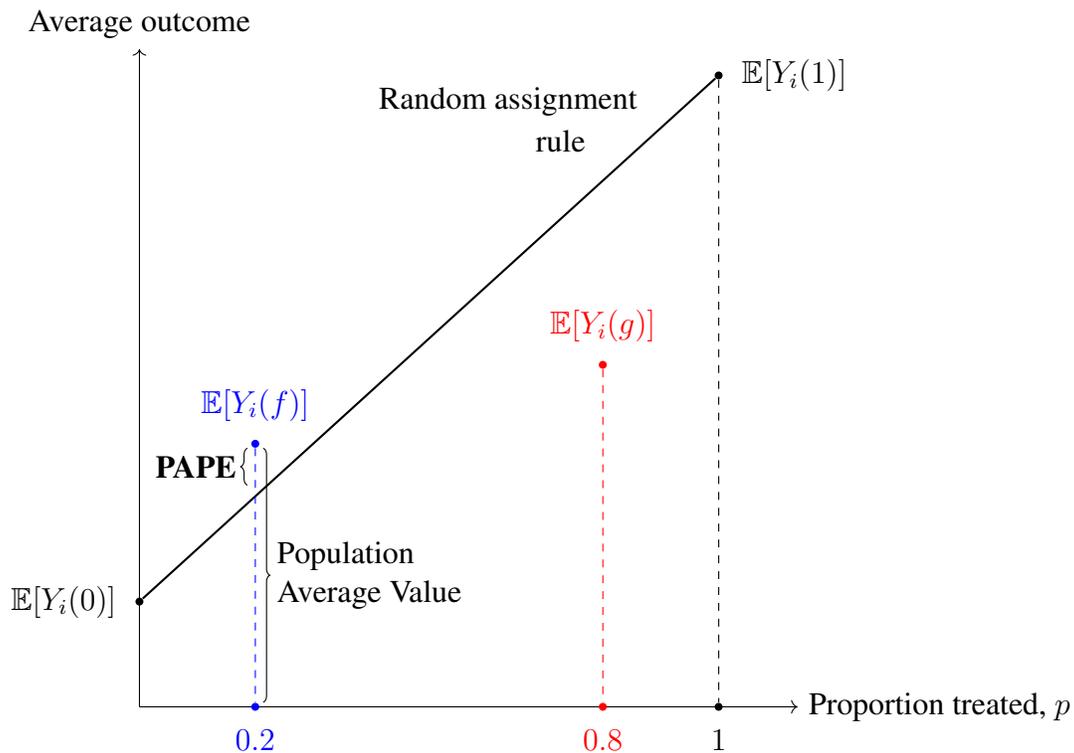
		
		Figure~\ref{fig:PAVcounterexample} illustrates the importance of
		accounting for the proportion of units treated by an ITR.  In this
		figure, the horizontal axis represents the proportion treated and the
		vertical axis represents the average outcome under an ITR.  The
		example shows that an ITR $g$ (red), which treats 80\% of units, has a
		greater average value than another ITR $f$ (blue), which treats 20\%
		of units, i.e., $\E[Y_i(f)]<\E[Y_i(g)]$.  Despite this fact, $g$ is
		outperformed by the random treatment rule (black), which treats the
		same proportion of units, whereas $f$ does a better job than the
		random treatment rule.  This is indicated by the fact that the black
		solid line is placed above $\E[Y_i(g)]$ and below $\E[Y_i(f)]$.
		
		To overcome this undesirable property of the average value, we propose
		an alternative evaluation metric that compares the performance of an
		ITR with that of the random treatment rule.  The random treatment rule
		serves as a natural baseline because it treats the same proportion of
		units without any individual information.  This is analogous to the
		predictive setting, in which a classification algorithm is often
		compared to random classification.
		
		Formally, let $p_f = \Pr(f(\bX_i) = 1)$ denote the population
		proportion of units assigned to the treatment condition under ITR
		$f$. Without loss of generality, we assume a positive average
		treatment effect $\tau = \E\{Y_i(1)-Y_i(0)\}>0$ so that the random
		treatment rule assigns the exactly proportion $p_f > 0$ of the units
		to the treatment group. If the treatment is on average harmful (a
		testable condition using the experimental data), the best random
		treatment rule is to treat no one.  In that case, the estimation of
		the average value is sufficient for the evaluation.  We define the
		population average prescription effect (PAPE) of ITR $f$ as the
		following difference in the average value between the ITR and random
		treatment rule,
		\begin{equation}
			\tau_f \ = \ \E\{Y_i(f(\bX_i)) - p_f Y_i(1) - (1-p_f) Y_i(0)\}. \label{eq:PAPE}
		\end{equation}
		
		One motivation for the PAPE is that administering a treatment is often
		expensive.  Consider a costly treatment that does not harm anyone but
		only benefits a relatively small fraction of people.  If we do not
		impose a budget constraint, treating everyone is the best ITR but such
		a policy does not use any individual-level information.  Thus, to
		further evaluate the efficacy of an ITR, we extend the PAPE to the
		settings with a budget constraint.
		
		\subsection{Incorporating a Budget Constraint}
		\label{subsec:budget}
		
		With a budget constraint, we cannot simply treat all units who are
		predicted to benefit from the treatment.  Instead, an ITR must be
		based on a {\it scoring rule} that sorts units according to their
		treatment priority: a unit with a greater score has a higher priority
		to receive the treatment. Let $s: \cX \longrightarrow \mathcal{S}$ be
		such a scoring rule where $\mathcal{S} \subset \mathbb{R}$.  For
		simplicity, we assume that the scoring rule is bijective, i.e.,
		$s(\bx) \ne s(\bx^\prime)$ for any $\bx, \bx^\prime \in \cX$ with
		$\bx \ne \bx^\prime$.  This assumption is not restrictive as we can
		always redefine $\cX$ such that the assumption holds.
		
		We define an ITR based on a scoring rule by assigning a unit to the
		treatment group if and only if its score is higher than a threshold,
		$c$,
		\begin{equation*}
			f(\bX_i, c) \ = \ \mathbf{1}\{s(\bX_i) > c \}.
		\end{equation*}
		Under a binding budget constraint $p$, we define the threshold that
		corresponds to the maximal proportion of treated units under the
		budget constraint, i.e.,
		\begin{equation*}
			c_{p}(f) \ = \ \inf \{c \in \mathbb{R}: \Pr(f(\bX_i, c) = 1) \le p\}.
		\end{equation*}
		
		Our framework allows for any arbitrary scoring rule.  A popular
		scoring rule is the conditional average treatment effect (CATE),
		\begin{equation*}
			s(\bx) \ = \ \E(Y_i(1) - Y_i(0) \mid \bX_i = \bx).
		\end{equation*}
		Researchers have studied the estimation of the CATE using various
		machine learning algorithms such as tree-based methods and regularized
		regressions. 
		
		We emphasize that the scoring rule need not be based on the CATE.  In
		fact, policy makers rely on various indexes.  Such examples include
		the MELD \citep{kamath2001model} that determines liver transplant
		priority, and the IDA Resource Allocation Index that informs the World
		Bank about the provision of economic aid.
		
		Given this setup, we generalize the PAPE to the setting with a budget
		constraint. As before, without loss of generality, we assume the
		treatment is on average beneficial, i.e.,
		$\tau = \E\{Y_i(1)-Y_{i}(0)\}> 0$, so that the constraint is binding
		for the random treatment rule treating at most $100 \times p\%$ of
		units.  The PAPE with a budget constraint $p$ is defined as,
		\begin{equation}
			\tau_{fp} \ = \ \E\{Y_i(f(\bX_i, c_{p}(f))) - p Y_i(1) - (1-p) Y_i(0)\}. \label{eq:PAPEp}
		\end{equation}
		
		A budget constraint facilitates the comparison of multiple ITRs on the
		same footing.  Suppose that we compare two ITRs, $f$ and $g$, using
		the difference in their average values,
		\begin{equation}
			\Delta(f,g) \ = \ \lambda_f - \lambda_g \ = \ \E\{Y_i(f(\bX_i)) - Y_i(g(\bX_i))\}. \label{eq:simplediff}
		\end{equation}
		While this quantity is useful, like the average value, it also fails
		to take into account the proportion of units assigned to the treatment
		condition under each ITR.
		
		We can address this issue by comparing the efficacy of two ITRs under
		the same budget constraint. Formally, we define the Population Average
		Prescriptive Effect Difference (PAPD) under budget $p$ as,
		\begin{equation}
			\Delta_p(f,g) \ = \
			\tau_{fp}-\tau_{gp} \ = \
			\E\{Y_i(f(\bX_i,c_p(f)))-Y_i(g(\bX_i,c_p(g)))\}. \label{eq:PAPD}
		\end{equation}
		
		\subsection{The Area under the Prescriptive Effect Curve}
		\label{subsec:AUPEC}
		
		Since the PAPE (Eqn~\eqref{eq:PAPEp}) varies as a function of budget
		constraint $p$, it would be useful to develop a summary performance
		metric of an ITR over a range of $p$.  We propose the area under the
		prescriptive effect curve (AUPEC) as a metric analogous to the area
		under the receiver operating characteristic curve (AUROC) for
		classification performance. 
		
		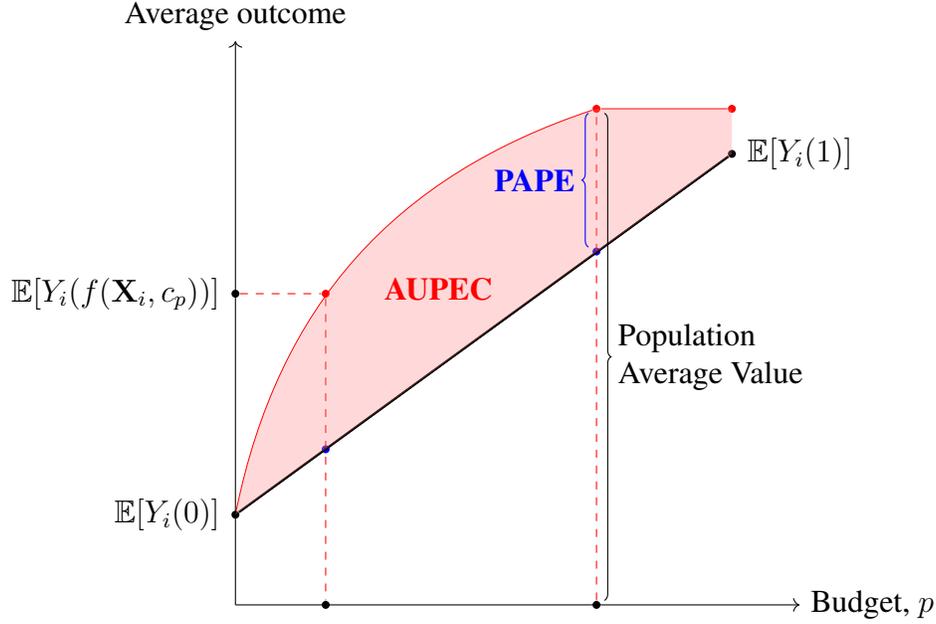
\begin{figure}[t!]
			\centering \spacingset{1}
			\begin{tikzpicture}[scale=3]
				\draw[->] (0,0) -- (2.5,0) coordinate (x axis)  node[right, black]{Budget, $p$};
				\draw[->] (0,0) -- (0,2.5) coordinate (y axis) node[above, black]{Average outcome};
				\node[circle,fill=black,inner sep=0pt,minimum size=3pt,label=left:{$\E[Y_i(0)]$}] (a) at (0,0.4) {};
				\node[circle,fill=black,inner sep=0pt,minimum size=3pt,label=right:{$\E[Y_i(1)]$}] (b) at (2.2,2) {};
				\node[circle,fill=black,inner sep=0pt,minimum size=3pt,label=left:{$\E[Y_i(f(\bX_i,c_{p}))]$}] (dy) at (0, 1.38) {};
				\node[circle,fill=black,inner sep=0pt,minimum size=3pt,label=below:{}] (dx) at (0.4, 0) {};
				\node[circle,fill=black,inner sep=0pt,minimum size=3pt,label=below:{}] (jx) at (1.6, 0) {};
				\node[circle,fill=red,inner sep=0pt,minimum size=3pt] (d) at (0.4, 1.38) {};
				\node[circle,fill=red,inner sep=0pt,minimum size=3pt] (j) at (1.6, 2.2) {};
				\node[circle,fill=red,inner sep=0pt,minimum size=3pt] (m) at (2.2, 2.2) {};
				\node[circle,fill=blue,inner sep=0pt,minimum size=3pt] (j2) at (1.6, 1.567) {};
				\node[circle,fill=blue,inner sep=0pt,minimum size=3pt] (d2) at (0.4, 0.69) {};
				\draw[red, dashed,-] (dy) -- (d) -- (dx);
				\draw[red, dashed,-] (jx) -- (j);
				\draw[black,thick,-] (a) -- (b);
				\draw[black,thick,-] (a) -- (b);
				\draw[red] (a) to [bend left=30] (j);
				\draw[red,-] (j) -- (m);
				\path[fill=red!50,opacity=.3] (a.center) to [bend left=30] (j.center) to (m.center) -- (b.center) --  (a.center);
				\draw[decoration={brace,mirror,raise=3pt},decorate] (jx) --
				node[right=4pt, align=left] {Population \\ Average Value} (j);
				\draw[decoration={brace,raise=3pt},decorate, blue] (j2) --
				node[left=4pt] {\bf PAPE} (j);
				\node[red] at (0.9,1.4) {\bf AUPEC};
			\end{tikzpicture}
			\caption{The Area Under the Prescriptive Effect Curve (AUPEC).  The
				black solid line represents the average value under the random
				treatment rule while the red solid curve represents the average
				value under an ITR $f$.  The difference between the line and the
				curve at a given budget constraint corresponds to the population
				average prescriptive effect (PAPE).  The shaded area between the
				line and the curve represents the AUPEC of $f$.
			} \label{fig:AUPEC}
		\end{figure}

		Figure~\ref{fig:AUPEC} graphically illustrates the AUPEC.  Similar to
		Figure~\ref{fig:PAVcounterexample}, the vertical and horizontal axes
		represent the average outcome and the budget, respectively. The budget
		is operationalized as the maximal proportion treated.  The red solid
		curve corresponds to the average value of an ITR $f$ as a function of
		budget constraint $p$, i.e., $\E\{Y_i(f(\bX_i, c_{p}(f)))\}$, whereas
		the black solid line represents the average value of the random
		treatment rule.  The AUPEC corresponds to the area under the red curve
		minus the area under the black line, which is shown as a red shaded
		area.
		
		Thus, the AUPEC represents the  average performance of an ITR relative
		to  the  random  treatment  rule  over  the  entire  range  of  budget
		constraint (one could also compute the  AUPEC over a specific range of
		budgets).  Unlike the  previous work \citep[e.g.,][]{rzep:jaro:12}, we
		do not require an ITR to assign the maximal proportion of units to the
		treatment condition  though such a  constraint can also be  imposed if
		desired.   For example,  treating more  than a  certain proportion  of
		units will  reduce the average  outcome if these additional  units are
		harmed by the treatment.  This is indicated by the flatness of the red
		line after $p_f$ in Figure~\ref{fig:AUPEC}.
		
		Formally, for a given ITR $f$, we define the AUPEC as,
		\begin{equation}
			\Gamma_{f} \ = \ \int^{p_f}_0\E\{Y_i(f(\bX_i,c_{p}(f)))\} \d p
			+(1-p_f)\E\{Y_i(f(\bX_i,c^\ast))\} - \frac{1}{2}\E(Y_i(0)+Y_i(1)), \label{eq:AUPEC}
		\end{equation}
		where $c^\ast$ is the pre-determined minimum score such that one would
		be treated in the absence of a budget constraint, and
		$p_f=\Pr(f(\bX_i,c^*)=1)$ denotes the maximal proportion of units
		assigned to the treatment condition under the ITR with no budget
		constraint.  The last term represents the area under the random
		treatment rule.
		
		Different values of the threshold $c^\ast$ are possible.  If the goal
		is to treat only those who on average benefit from the treatment, then
		we could use the CATE as a scoring rule and set $c^\ast = 0$.  Another
		example is the use of the MELD score as the scoring rule and choose an
		appropriate value of $c^\ast$ so that a sufficiently healthy patient
		is never considered for a transplant.  Finally, setting
		$c^\ast = -\infty$ would represent the setting where the maximum
		possible units should be treated regardless of the scoring rule.
		
		We further note that the AUPEC is a generalization of the QINI
		coefficient widely utilized in literature for uplift modeling
		\citep{radcliffe2007using}. Formally, the population-level QINI
		coefficient is commonly defined as,
		\begin{equation*}
			\text{QINI}  \ = \ n\left(\int^{1}_0 p\E\{Y_i(1) - Y_i(0) \mid
			f(\bX_i,c_{p}(f)) = 1\} \d p - \frac{1}{2}\E(Y_i(1)-Y_i(0))\right).
		\end{equation*}
		After some algebra, we can rewrite this quantity in the following form,
		\begin{align*}
			n\left(\int^{1}_0\E\{Y_i(f(\bX_i,c_{p}(f)))\} \d p - \frac{1}{2}\E(Y_i(0)+Y_i(1))\right)
		\end{align*}
		Thus, the QINI coefficient is (up to a constant factor $n$) a special
		case of AUPEC when $c^\ast=-\infty$ and $p_f=1$. The choice of
		$c^\ast=-\infty$ may be reasonable in the applications where the
		treatment can be assumed to be never harmful, i.e.,
		$Y_i(1)\geq Y_i(0)$.  In such cases, under no budget constraint one
		would treat the entire population.
		
		To enable a comparison of efficacy across different datasets, the
		AUPEC can be normalized to be scale-invariant by shifting $\Gamma_f$
		by $\E(Y_i(0))$ and dividing by $\tau=\E(Y_i(1)-Y_i(0))$,
		\begin{equation*}
			\Gamma_{f}^\ast \ = \ \frac{1}{\tau}\l[\int^{p_f}_0\E\{Y_i(f(\bX_i,c_p(f)))\} \d p +(1-p_f)\E\{Y_i(f(\bX_i,c^*))\}-\E(Y_i(0))\r] - \frac{1}{2}.
		\end{equation*}
		This normalized AUPEC is invariant to the affine transformation of the
		outcome variables, $Y_i(1), Y_i(0)$, while the standard AUPEC is only
		invariant to a constant shift. The normalized AUPEC takes a value in
		$[0,1]$, and has an intuitive interpretation as the average percentage
		outcome gain using the ITR $f$ compared to the random treatment rule, under the uniform
		prior distribution over the percentage treated.

		\section{Estimation and Inference}
		\label{sec:est-inf}
		
		Having introduced the evaluation metrics,
		we show how to estimate them and compute standard errors under the
		repeated sampling framework of \cite{neym:23}.  Here, we consider the
		setting, in which researchers are interested in evaluating the
		performance of fixed ITRs. In other words, throughout this section,
		ITR $f$ is assumed to be known and has no estimation uncertainty.  For
		example, one may first construct an ITR based on an existing data set
		(experimental or observational), and then conduct a new experiment to
		evaluate its performance.  In Section~\ref{sec:estimatedITR}, we
		extend the methodology developed here to the setting, in which the
		same experimental data set is used to both construct and evaluate an
		ITR.
		
		For the rest of this paper, we assume that we have a simple random
		sample of $n$ units from a super-population, $\mathcal{P}$.  We
		conduct a completely randomized experiment, in which $n_1$ units are
		randomly assigned to the treatment condition with probability $n_1/n$
		and the rest of the $n_0 (= n-n_1)$ units are assigned to the control
		condition.  While it is straightforward to allow for unequal treatment
		assignment probabilities across units, for the sake of simplicity, we
		assume complete randomization.  We formalize these assumptions below.
		\begin{assumption}[Random Sampling of Units] \label{asm:randomsample}
			\spacingset{1} Each of $n$ units, represented by a three-tuple
			consisting of two potential outcomes and pre-treatment covariates,
			is assumed to be independently sampled from a super-population
			$\mathcal{P}$, i.e.,
			$$(Y_i(1), Y_i(0), \bX_i) \ \iid \ \mathcal{P}$$
		\end{assumption}
		\begin{assumption}[Complete Randomization] \label{asm:comrand}
			\spacingset{1} For any $i=1,2,\ldots,n$, the treatment assignment
			probability is given by,
			$$\Pr(T_i = 1 \mid Y_i(1), Y_i(0), \bX_i) \ = \ \frac{n_1}{n}$$
			where $\sum_{i=1}^n T_i = n_1$.
		\end{assumption}
		
		We now present the results under a binding budget constraint.  The
		results for the average value and PAPE with no budget constraint
		appear in Appendix~A.1.
		
		\subsection{The Population Average Prescription Effect (PAPE)}
		\label{subsec:estimationPAPE}
		
		To estimate the PAPE with a binding budget constraint $p$
		(Eqn~\eqref{eq:PAPEp}), we consider the following estimator,
		\begin{equation}
			\hat \tau_{fp}(\cZ) \ = \ \frac{1}{n_1} \sum_{i=1}^n Y_i T_i f(\bX_i,\hat{c}_{p}(f)) +
			\frac{1}{n_0} \sum_{i=1}^n Y_i(1-T_i)(1-f(\bX_i,\hat{c}_{p}(f))) -
			\frac{p}{n_1} \sum_{i=1}^n Y_iT_i -
			\frac{1-p}{n_0} \sum_{i=1}^n Y_i(1-T_i)  \label{eq:PAPEpest}
		\end{equation}
		where
		$\hat{c}_{p}(f) =\inf \{c \in \mathbb{R}: \sum_{i=1}^n f(\bX_i, c) \le
		np\}$ represents the estimated threshold given the maximal proportion
		of treated units $p$.  Unlike the case of no budget constraint (see
		Appendix~A.1.2), the bias is not zero because the
		threshold $c_p(f)$ needs to be estimated without an assumption about
		the distribution of the score produced by the scoring rule.  We derive
		an upper bound of bias and the exact variance.
		\begin{theorem} {\sc (Bias Bound and Exact Variance of the PAPE
				Estimator with a Budget Constraint)} \label{thm:PAPEpest}
			\spacingset{1} Under
			Assumptions~\ref{asm:SUTVA},~\ref{asm:randomsample},~and~\ref{asm:comrand},
			the bias of the proposed estimator of the PAPE with a budget
			constraint $p$ defined in Eqn~\eqref{eq:PAPEpest} can be
			bounded as follows,
			\begin{eqnarray*}
				& & \mc{P}_{\hat{c}_{p}(f)}(|\E\{\hat \tau_{fp}(\cZ) -\tau_{fp}\mid \hat{c}_{p}(f) \} |\geq
				\epsilon)  \\
				& \leq &   1-B(1-p+\gamma_{fp}(\epsilon), n-\lfloor np
				\rfloor , \lfloor np \rfloor+1)+B(1-p-\gamma_{fp}(\epsilon),
				n-\lfloor np
				\rfloor , \lfloor np \rfloor+1),
			\end{eqnarray*}
			where any given constant $\epsilon > 0$,
			$B(\epsilon, \alpha, \beta)$ is the incomplete beta function (if
			$\alpha \leq 0$ and $\beta > 0$, we set
			$B(\epsilon,\alpha, \beta):=H(\epsilon)$ for all $\epsilon$ where
			$H(\epsilon)$ is the Heaviside step function), and
			\begin{equation*}
				\gamma_{fp}(\epsilon)\ = \ \frac{\epsilon}{\max_{c  \in
						[ c_{p}(f) -\epsilon,\   c_{p}(f) +\epsilon]} \E(\tau_i\mid s(\bX_i)=c)}.
			\end{equation*}
			The variance of the estimator is given by,
			\begin{equation*}
				\V(\hat \tau_{fp}(\cZ)) \  = \
				\frac{\E(S_{fp1}^2)}{n_1} +
				\frac{\E(S_{fp0}^2)}{n_0} + \frac{\lfloor np
					\rfloor(n-\lfloor np
					\rfloor)}{n^2(n-1)}\l\{(2p-1)\kappa_{f1}(p)^2-2p\kappa_{f1}(p) \kappa_{f0}(p)\r\},
			\end{equation*}
			where
			$S_{fpt}^2 = \sum_{i=1}^n (Y_{fpi}(t) -
			\overline{Y_{fp}(t)})^2/(n-1)$ and
			$\kappa_{ft}(p) = \E(\tau_i\mid f(\bX_i,\hat{c}_{p}(f))=t)$
			with $Y_{fpi}(t) = \left(f(\bX_i,\hat{c}_{p}(f))- p\right)Y_i(t)$,
			and $\overline{Y_{fp}(t)} = \sum_{i=1}^n Y_{fpi}(t)/n$, for $t= 0,1$.
		\end{theorem}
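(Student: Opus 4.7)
The plan is to reformulate $\hat\tau_{fp}(\cZ)$ as a Neyman-type difference-in-means estimator with adjusted ``potential outcomes'' $Y_{fpi}(t) = (f(\bX_i, \hat c_p(f)) - p) Y_i(t)$:
\begin{equation*}
  \hat\tau_{fp}(\cZ) \; = \; \frac{1}{n_1}\sum_{i=1}^n T_i Y_{fpi}(1) - \frac{1}{n_0}\sum_{i=1}^n (1-T_i) Y_{fpi}(0),
\end{equation*}
which cleanly separates randomization randomness (through $T$) from sampling randomness (which determines $\hat c_p(f)$). Conditioning on the sample, Assumption~\ref{asm:comrand} yields $\E[\hat\tau_{fp}(\cZ) \mid \text{sample}] = \frac{1}{n}\sum_i (f(\bX_i, \hat c_p(f)) - p)\tau_i$; exchangeability of the i.i.d.\ units then gives $\E[\hat\tau_{fp}(\cZ)] = \E\{(f(\bX_i, \hat c_p(f)) - p)\tau_i\}$.

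For the bias bound, the key step is to show that the conditional bias given $\hat c_p(f)$ admits an integral representation over the score density $g$ of $s(\bX_i)$,
\begin{equation*}
  \E\{\hat\tau_{fp}(\cZ) - \tau_{fp} \mid \hat c_p(f)\} \; = \; \int_{c_p(f)}^{\hat c_p(f)} \E(\tau_i \mid s(\bX_i) = c)\, g(c) \, \d c,
\end{equation*}
so that on the event $|\hat c_p(f) - c_p(f)| \leq \epsilon$ the bias is bounded in absolute value by $M_\epsilon \cdot |F(\hat c_p(f)) - (1-p)|$, with $M_\epsilon$ the maximum in the definition of $\gamma_{fp}(\epsilon)$ and $F$ the CDF of the score. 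Hence $|\text{bias}| \geq \epsilon$ forces $|F(\hat c_p(f)) - (1-p)| \geq \gamma_{fp}(\epsilon)$. The probability integral transform $U_i = F(s(\bX_i))$ then identifies $F(\hat c_p(f))$ with the $(n-\lfloor np \rfloor)$-th order statistic of $n$ i.i.d.\ Uniform$[0,1]$ variables, which is $\text{Beta}(n-\lfloor np \rfloor, \lfloor np \rfloor + 1)$-distributed; its CDF equals the regularized incomplete beta $B(\cdot, n-\lfloor np \rfloor, \lfloor np \rfloor + 1)$, and the stated bound is just the sum of the two tail probabilities $\Pr(F(\hat c_p) < 1-p - \gamma_{fp}(\epsilon))$ and $\Pr(F(\hat c_p) > 1 - p + \gamma_{fp}(\epsilon))$.

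For the variance, the plan is to apply the law of total variance,
\begin{equation*}
  \V(\hat\tau_{fp}(\cZ)) \; = \; \E[\V_T(\hat\tau_{fp}(\cZ) \mid \text{sample})] + \V[\E_T(\hat\tau_{fp}(\cZ) \mid \text{sample})].
\end{equation*}
Conditional on the sample, Neyman's finite-sample variance formula for a completely randomized experiment with ``potential outcomes'' $Y_{fpi}(t)$ yields $\V_T = S_{fp1}^2/n_1 + S_{fp0}^2/n_0 - S_{fp\Delta}^2/n$, where $S_{fp\Delta}^2$ is the sample variance of $(f(\bX_i, \hat c_p(f)) - p)\tau_i$. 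Taking expectations produces the first two summands of the theorem. The remaining piece $\V[\frac{1}{n}\sum_i (f_i - p)\tau_i] - \E[S_{fp\Delta}^2 / n]$ must then be reduced to the third summand; this uses that the sample constraint $\sum_i f_i = \lfloor np \rfloor$ holds deterministically, so $\Pr(f_i = 1) = \lfloor np \rfloor/n$ and $\Pr(f_i = 1, f_j = 1) = \lfloor np \rfloor(\lfloor np \rfloor - 1)/(n(n-1))$ for $i \neq j$ by exchangeability. Rewriting the resulting pairwise covariances of $(f_i - p)\tau_i$ in terms of the conditional means $\kappa_{ft}(p) = \E(\tau_i \mid f(\bX_i, \hat c_p(f)) = t)$ collapses everything into the hypergeometric-type factor $\lfloor np \rfloor (n - \lfloor np \rfloor)/(n^2(n-1))$ multiplied by the stated linear combination of $\kappa_{f1}(p)^2$ and $\kappa_{f1}(p)\kappa_{f0}(p)$.

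The main obstacle is this last algebraic simplification. Because $\hat c_p(f)$ is a function of the entire sample, the indicators $f_i$ are dependent across $i$, and the joint conditional distribution of $(\tau_i, \tau_j)$ given $(f_i, f_j) = (1, 1)$ is not a simple product, so a direct expansion of $\V[\frac{1}{n}\sum_i (f_i - p)\tau_i]$ is delicate. The trick is to exploit the rank-based constraint $\sum_i f_i = \lfloor np \rfloor$ together with exchangeability to collapse the intricate covariance structure into the clean hypergeometric factor above, and to verify that after cancellations only the conditional means $\kappa_{f1}(p), \kappa_{f0}(p)$ -- not higher conditional moments of $\tau_i$ -- survive in the final expression. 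By comparison, once the Beta-distribution structure of $F(\hat c_p(f))$ is recognized, the bias bound follows relatively directly.
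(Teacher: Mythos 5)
Your proposal follows essentially the same route as the paper's own proof: the bias bound via the integral representation of the conditional bias over the score distribution, the identification of $F(\hat{c}_p(f))$ with a Beta-distributed order statistic of uniforms, and the variance via the Neyman conditional (randomization) variance plus the pairwise covariance of the $(f(\bX_i,\hat{c}_p(f))-p)\tau_i$ terms computed from the inclusion probabilities $\lfloor np\rfloor/n$ and $\lfloor np\rfloor(\lfloor np\rfloor-1)/\{n(n-1)\}$. The ``delicate'' covariance collapse you flag is exactly the step the paper carries out (implicitly treating $\E(\tau_i\tau_j\mid f_i=f_j=1)$ as $\kappa_{f1}(p)^2$), so your plan is a faithful reconstruction of the published argument.
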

		Proof is given in Appendix~A.2.
		The last term in the variance accounts for the variance due to
		estimating $c_{p}(f)$.  The variance can be consistently estimated by
		replacing each unknown parameter with its sample analogue, i.e., for
		$t=0,1$,
		\begin{eqnarray}
			\widehat{\E(S_{fpt}^2)} & = &
			\frac{1}{n_t-1} \sum_{i=1}^n \mathbf{1}\{T_i = t\} (Y_{fpi} -
			\overline{Y_{fpt}})^2, \nonumber \\ 
			\widehat{\kappa_{ft}(p)} & = & \frac{\sum_{i=1}^n \mathbf{1}\{f(\bX_i, \hat{c}_{p}(f)) =
				t\} T_i  Y_i}{\sum_{i=1}^n  \mathbf{1}\{f(\bX_i, \hat{c}_{p}(f)) =
				t\} T_i } - \frac{\sum_{i=1}^n   \mathbf{1}\{f(\bX_i,
				\hat{c}_{p}(f)) =  t\} (1-T_i)  Y_i}{\sum_{i=1}^n
				\mathbf{1}\{f(\bX_i,  \hat{c}_{p}(f)) =  t\}
				(1-T_i)}, \nonumber
		\end{eqnarray}
		where $Y_{fpi} = (f(\bX_i,\hat{c}_{p}(f))- p)Y_i$ and
		$\overline{Y}_{fpt} = \sum_{i=1}^n \mathbf{1}\{T_i = t\} Y_{fpi}/n_t$.
		To estimate the term that appears in the denominator of
		$\gamma_{fp}(\epsilon)$ as part of the upper bound of bias, we may
		assume that the CATE, $\E(\tau_i \mid s(\bX_i)=c)$, is continuous in
		$c$. Continuity is often assumed when estimating the CATE
		\citep[e.g.,][]{kunz:etal:18,wage:athe:18}.  We may also utilize an
		upper bound for CATE, if known, to estimate the bias conservatively.
		
		Building on the above results, we also consider the comparison of two
		ITRs under the same budget constraint, using the following estimator
		of the PAPD (Eqn~\eqref{eq:PAPD}),
		\begin{equation}
			\widehat \Delta_p(f,g,\cZ) \ = \ \frac{1}{n_1} \sum_{i=1}^n Y_i T_i \{f(\bX_i,\hat{c}_p(f)) - g(\bX_i,\hat{c}_p(g))\} +
			\frac{1}{n_0} \sum_{i=1}^n Y_i(1-T_i)\{g(\bX_i,\hat{c}_p(g))-f(\bX_i,\hat{c}_p(f))\}. \label{eq:PAPDpest}
		\end{equation}
		Theorem~A3 in Appendix~A.3 derives the
		bias bound and exact variance of this estimator.  In one of their
		applications, \citet{zhou:athe:wage:18} apply the $t$-test to the
		cross-validated test statistic similar to the one introduced here
		under no budget constraint.  However, no formal justification of this
		procedure is given under the cross-validation setting and it cannot be
		readily extended to the case with a budget constraint.  In contrast,
		our methodology is applicable under these settings as well (see
		Section~\ref{sec:estimatedITR}).

		\subsection{The Area under the Prescriptive Effect Curve}
		\label{subsec:estimationAUPEC}
		
		Next, we consider the estimation and inference about the AUPEC
		(Eqn~\eqref{eq:AUPEC}).  Let $n_f$ represent the maximum number
		of units in the sample that the ITR $f$ would assign under no budget
		constraint, i.e., $\hat{p}_f = n_f / n = \sum_{i=1}^n f(\bX_i,c^*)/n$.
		We propose the following estimator of the AUPEC,
		\begin{eqnarray}
			\widehat\Gamma_{f}(\cZ) & = & \frac{1}{n_1} \sum_{i=1}^n Y_i T_i
			\l\{\frac{1}{n}\left(\sum_{k=1}^{n_f}
			f(\bX_i,\hat{c}_{k/n}(f))+(n-n_f)
			f(\bX_i,\hat{c}_{\hat{p}_f}(f))\right) \r\}
			\nonumber \\
			& & +
			\frac{1}{n_0} \sum_{i=1}^n
			Y_i(1-T_i)\left\{1-\frac{1}{n}\left(\sum_{k=1}^{n_f}
			f(\bX_i,\hat{c}_{k/n}(f))+(n-n_f)
			f(\bX_i,\hat{c}_{\hat{p}_f}(f))\right)\right\} \nonumber
			\\ & & -
			\frac{1}{2n_1} \sum_{i=1}^n Y_iT_i -
			\frac{1}{2n_0} \sum_{i=1}^n Y_i(1-T_i).\label{eq:AUPECest}
		\end{eqnarray}
		The following theorem shows a bias bound and the exact variance of
		this estimator.
		\begin{theorem}[Bias and Variance of the AUPEC
			Estimator] \label{thm:AUPECest} \spacingset{1} Under
			Assumptions~\ref{asm:SUTVA},~\ref{asm:randomsample},~and~\ref{asm:comrand},
			the bias of the AUPEC estimator defined in
			Eqn~\eqref{eq:AUPECest} can be bounded as follows,
			\begin{eqnarray*}
				\mc{P}_{\hat{p}_f}(|\E(\widehat\Gamma_{f}(\cZ)-\Gamma_{f}\mid \hat{p}_f) |\geq
				\epsilon) & \leq & 1-B(1-p_f +\gamma_{p_f}(\epsilon), n-\lfloor np_f
				\rfloor , \lfloor np_f \rfloor+1) \\ &+ &B(1-p_f-\gamma_{p_f}(\epsilon),
				n-\lfloor np_f
				\rfloor , \lfloor np_f \rfloor+1)
			\end{eqnarray*}
			where any given constant $\epsilon > 0$,
			$B(\epsilon, \alpha, \beta)$ is the incomplete beta function (if
			$\alpha = 0$ and $\beta > 0$, we set
			$B(\epsilon,\alpha, \beta):=H(\epsilon)$ for all $\epsilon$ where
			$H(\epsilon)$ is the Heaviside step function), and
			\begin{equation*}
				\gamma_{p_f}(\epsilon)\ = \ \frac{\epsilon}{2\max_{c  \in
						[c^*-\epsilon,\  c^*+\epsilon]} \E(\tau_i\mid s(\bX_i)=c)}.
			\end{equation*}
			The variance is given by,
			\begin{eqnarray*}
				& & \V(\widehat\Gamma_{f}(\cZ)) \\
				& = &
				\frac{\E(S_{f1}^{\ast 2})}{n_1} +
				\frac{\E(S_{f0}^{\ast 2})}{n_0} + \E\l[- \frac{1}{n} \l\{\sum_{z=1}^Z
				\frac{z(n-z)}{n^2(n-1)}\kappa_{f1}(z/n)\kappa_{f0}(z/n)
				+  \frac{Z(n-Z)^2}{n^2(n-1)}\kappa_{f1}(Z/n)\kappa_{f0}(Z/n)\r\}\r.\\
				& & \hspace{.275in}
				- \frac{2}{n^4(n-1)}\sum_{z=1}^{Z-1}\sum_{z^\prime=z+1}^{Z}
				z(n-z^\prime)\kappa_{f1}(z/n)\kappa_{f1}(z^\prime/n)
				-  \frac{Z^2(n-Z)^2}{n^4(n-1)}\kappa_{f1}(Z/n)^2\\
				& & \hspace{.275in} \l.
				- \frac{2(n-Z)^2}{n^4(n-1)}\sum_{z=1}^Z z
				\kappa_{f1}(Z/n)\kappa_{f1}(z/n) +\frac{1}{n^4}\sum_{z=1}^Z
				z(n-z)\kappa_{f1}(z/n)^2\r] \\
				& & \hspace{.1in}+ \V\l(\sum_{z=1}^Z \frac{z}{n} \kappa_{f1}(z/n) + \frac{(n-Z)Z}{n} \kappa_{f1}(Z/n)\r),
			\end{eqnarray*}
			where $Z$ is a Binomial random variable with size $n$ and success
			probability $p_f$, and
			$S^{\ast 2}_{ft} = \sum_{i=1}^n (Y^\ast_i(t) -
			\overline{Y^\ast(t)})^2/(n-1)$,
			$\kappa_{ft}(k/n) \ = \ \E(Y_i(1)-Y_i(0)\mid
			f(\bX_i,\hat{c}_{k/n}(f))=t)$, with
			$Y_i^\ast(t) = \left[\left\{\sum_{z=1}^{n_f}
			f(\bX_i,\hat{c}_{z/n}(f))+(n-n_f)
			f(\bX_i,\hat{c}_{\hat{p}_f}(f))\right\}/n -
			\frac{1}{2}\right]Y_i(t)$ and
			$\overline{Y^\ast(t)} = \sum_{i=1}^n Y_i^\ast(t)/n$, for $t=0,1$.

		\end{theorem}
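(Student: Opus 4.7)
The plan is to split the proof into a bias bound and an exact variance calculation, viewing $\widehat\Gamma_f(\cZ)$ as a weighted combination of PAPE-style estimators at the empirical budget levels $p = k/n$ for $k = 1, \ldots, n_f$, together with a boundary correction extending the scoring rule beyond $n_f$. In both parts, I would condition on $\hat p_f = n_f/n$ (equivalently on the binomial count $Z = n\hat p_f$), reducing the analysis to a completely randomized experiment with known proportion treated, and then average over the binomial randomness of $Z$.

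For the bias bound, I would mirror the argument of Theorem~\ref{thm:PAPEpest}, with $c^*$ now playing the role of $c_p(f)$. The only source of bias is that $p_f$ is estimated by $\hat p_f$; given $\hat p_f$, the Horvitz--Thompson weighting in~\eqref{eq:AUPECest} is unbiased for the corresponding finite-population analogue of $\Gamma_f$. First I would bound the conditional bias $|\E(\widehat\Gamma_f(\cZ) - \Gamma_f \mid \hat p_f)|$ by observing that a discrepancy between $\hat c_{\hat p_f}(f)$ and $c^*$ only misclassifies units whose scores lie in a neighborhood $[c^* - \epsilon, c^* + \epsilon]$, so the induced bias is controlled by $\epsilon$ times the local CATE maximum — exactly what $\gamma_{p_f}(\epsilon)$ captures (the factor $2$ in its denominator arising from integrating across the budget interval). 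Second, under Assumption~\ref{asm:randomsample}, $n \hat p_f$ is binomial with parameters $(n, p_f)$, so $\mc{P}_{\hat p_f}(|\hat p_f - p_f| > \gamma_{p_f}(\epsilon))$ is a binomial tail probability, which I would rewrite exactly as differences of incomplete beta functions $B(\,\cdot\,, n - \lfloor np_f \rfloor, \lfloor np_f \rfloor + 1)$ via the standard CDF identity.

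For the variance, the natural tool is the law of total variance conditional on $Z$:
\begin{equation*}
\V(\widehat\Gamma_f(\cZ)) \;=\; \E\{\V(\widehat\Gamma_f(\cZ) \mid Z)\} + \V\{\E(\widehat\Gamma_f(\cZ) \mid Z)\}.
\end{equation*}
For the first term I would apply Neyman's exact variance formula for a completely randomized experiment to the transformed potential outcomes $Y_i^\ast(t)$; the marginal pieces $\E(S_{ft}^{\ast 2})/n_t$ fall out directly. Because the same $T_i$'s are reused across all thresholds $\hat c_{k/n}(f)$, expanding the squared indicator sum $\bigl(\sum_{k=1}^{n_f} f(\bX_i, \hat c_{k/n}(f))\bigr)^2$ generates cross-budget interactions, which I would collapse using the rank structure of the scoring rule: the unit with the $j$-th largest score is treated exactly when $k \ge j$. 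This reduces the two-dimensional sums $\sum_{z,z'}$ over budgets to expressions in the CATE quantities $\kappa_{f0}(z/n)$ and $\kappa_{f1}(z/n)$, producing the $\E[\cdots]$ block. For the second term, $\E(\widehat\Gamma_f(\cZ) \mid Z)$ reduces (after the $-\tfrac{1}{2}\E(Y_i(0)+Y_i(1))$ centering cancels the deterministic offsets) to the weighted sum $\sum_{z=1}^Z (z/n)\kappa_{f1}(z/n) + (n-Z)Z\kappa_{f1}(Z/n)/n$, whose variance over the binomial randomness of $Z$ yields the final displayed term.

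The main obstacle will be the combinatorial bookkeeping of the cross-budget covariances, and especially keeping the boundary piece $(n - n_f) f(\bX_i, \hat c_{\hat p_f}(f))$ separate from the in-range sum $\sum_{k=1}^{n_f}$. These two contributions yield qualitatively different factors — the boundary generates the $Z(n-Z)^2/\{n^2(n-1)\}$ and $Z^2(n-Z)^2/\{n^4(n-1)\}$ pieces while the in-range portion gives the double sum $\sum_{z=1}^{Z-1}\sum_{z'=z+1}^Z$ — and they must be tracked in parallel without being conflated. Once the rank structure is exploited and the conditional–unconditional decomposition is in place, the remainder is a careful but essentially mechanical accounting exercise.
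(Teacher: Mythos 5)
Your proposal follows essentially the same route as the paper's proof: the variance is obtained by the Neyman randomization decomposition combined with conditioning on the binomial count $Z = n\hat{p}_f$ (the paper applies the law of total covariance to the cross-unit term rather than the law of total variance to the whole estimator, but these are algebraically the same bookkeeping), with the cross-budget covariances reduced to the $\kappa_{ft}(z/n)$ quantities exactly as you describe; the bias bound likewise comes from the beta/binomial tail for $\hat{p}_f$ together with a local-CATE bound, the factor of $2$ in $\gamma_{p_f}(\epsilon)$ arising from the bound $\epsilon(1+\epsilon)\leq 2\epsilon$ that combines the integral-range and boundary contributions. This is a faithful reconstruction of the argument in Appendix~\ref{app:AUPECest}.
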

		Proof is given in in Appendix~A.4.  When
		$c^\ast=-\infty$ (i.e., the AUPEC equals the QINI coefficient), the
		estimator is unbiased, implying that the bias comes from estimating
		the proportion treated $p_f$ under no budget constraints and
		$c^\ast>-\infty$. As before, $\E({S_{ft}^\ast}^2)$ does not equal
		$\V(Y_i^\ast(t))$ due to the need to estimate the terms $c_{z/n}$ for
		all $z$, and the additional terms account for the variance of
		estimation.  We can consistently estimate the upper bound of bias, for
		example, under by assuming that the CATE is continuous. We may also
		utilize an upper bound for CATE, if known, to estimate the bias
		conservatively.
		
		To estimate the variance, we replace each unknown parameter with its
		sample analogue,
		\begin{eqnarray}
			\widehat{\E(S^{\ast 2}_{ft})} & = & \frac{1}{n_t-1}
			\sum_{i=1}^n \mathbf{1}\{T_i = t\} (Y^\ast_i -
			\overline{Y^\ast_t})^2, \nonumber \\ 
			\hat{\kappa}_{ft}(z/n)  & = & \frac{\sum_{i=1}^n
				\mathbf{1}\{f(\bX_i,
				\hat{c}_{z/n}(f)) = t\}  T_i Y_i}{\sum_{i=1}^n
				\mathbf{1}\{f(\bX_i,
				\hat{c}_{z/n}(f)) = t\}  T_i }  - \frac{\sum_{i=1}^n
				\mathbf{1}\{f(\bX_i,
				\hat{c}_{z/n}(f)) = t\}  (1-T_i) Y_i}{\sum_{i=1}^n \mathbf{1}\{f(\bX_i,
				\hat{c}_{z/n}(f)) = t\}  (1-T_i)}, \nonumber \\ \label{eq:kappa2est}
		\end{eqnarray}
		for $t=0,1$.
		In the extreme cases with $z\to 1$ for $t=1$ and $z\to n$ for $t=0$,
		each denominator in Eqn~\eqref{eq:kappa2est} is likely to be
		close to zero. In such cases, we instead use the estimator
		$\widehat{\kappa_{f1}(z_{\min}/n)}$ for all $z<z_{\min}$ where
		$z_{\min}$ is the smallest $z$ such that Eqn~\eqref{eq:kappa2est}
		for $\kappa_{f1}(z/n)$ does not lead to division by zero. Similarly,
		for $t=0$, we use $\widehat{\kappa_{f0}(z_{\max}/n)}$ for all
		$z>z_{\max}$ where $z_{\max}$ is the largest $z$.
		
		For the terms involving the binomial random variable $Z$, we first
		note that, when fully expanded out, they are the polynomials of
		$p_f=\E(f(\bX_i))$.  To estimate the polynomials, we can utilize their
		unbiased estimators as discussed in \cite{stuard1994kendall}, i.e.,
		$\hat{p}_f^z = s(s-1)\cdots (s-z+1)/\{n(n-1)\cdots (n-z+1)\}\}$ where
		$s = \sum_{i=1}^n f(\bX_i)$ is unbiased for $p_f^z$ for all
		$z \leq n$.  When the sample size is large, this estimation method is
		computationally inefficient and unstable due to the presence of high
		powers.  Hence, we may use the Monte Carlo sampling of $Z$ from a
		Binomial distribution with size $n$ and success probability
		$\hat{p}_f$.  In our simulation study, we show that this Monte Carlo
		approach is effective even when the sample size is small (see
		Section~\ref{sec:synthetic}).
		
		Finally, a consistent estimator for the normalized AUPEC is given by,
		\begin{eqnarray}
			\widehat\Gamma^\ast_{f}(\cZ) & = & \frac{1}{\sum_{i=1}^n
				Y_iT_i/n_1-Y_i(1-T_i)/n_0}\l\{\frac{1}{nn_1}
			\sum_{i=1}^n Y_i T_i
			\left(\sum_{z=1}^{n_f} f(\bX_i,\hat{c}_{z/n}(f))+(n-n_f)
			f(\bX_i,\hat{c}_{\hat{p}_f}(f))\right)\r.
			\nonumber \\
			& & \l. -
			\frac{1}{nn_0} \sum_{i=1}^n
			Y_i(1-T_i)\left(\sum_{z=1}^{n_f}
			f(\bX_i,\hat{c}_{z/n}(f))+(n-n_f)
			f(\bX_i,\hat{c}_{\hat{p}_f}(f))\right)\r\} - \frac{1}{2}. \label{eq:AUPECnest}
		\end{eqnarray}
		The variance of $\widehat\Gamma^\ast_{f}(\cZ)$ can be estimated using
		the Taylor expansion of quotients of random variables to an
		appropriate order as detailed in \cite{stuard1994kendall}.
		
		\section{Estimating and Evaluating ITRs Using the Same Experimental Data}
		\label{sec:estimatedITR}
		
		We next consider a common situation, in which researchers use the same
		experimental data to both estimate and evaluate an ITR via
		cross-validation.  This differs from the setting we have analyzed so
		far, in which a fixed ITR is given for evaluation. We first extend the
		evaluation metrics introduced in Section~\ref{sec:evalmetrics} to the
		current setting with estimated ITRs.  We then develop inferential
		methods under the Neyman's repeated sampling framework by accounting
		for both estimation and evaluation uncertainties.  Below, we consider
		the scenario, in which researchers face a binding budget constraint.
		Appendix~A.5 presents the results for the case with
		no budget constraint.
		
		\subsection{Evaluation Metrics}
		\label{subsec:evalcv}
		
		Suppose that we have the data from a completely randomized experiment
		as described in Section~\ref{sec:est-inf}.  We first estimate an ITR
		$f$ by applying a machine learning algorithm $F$ to training data
		$\cZ^{tr}$.  Then, under a budget constraint of the maximal proportion
		of treated units $p$, we use test data to evaluate the resulting
		estimated ITR $\hat{f}_{\cZ^{tr}}$.  As before, we assume that this
		constraint is binding, i.e., $p < p_F$ where
		$p_F = \Pr(\hat{f}_{\cZ^{tr}}(\bX_i) = 1)$ represents the proportion
		of treated units under the ITR without a budget constraint.
		
		Formally, an machine learning algorithm $F$ is a deterministic map
		from the space of training data $\cZ$ to that of scoring rules
		$\mathcal{S}$,
		\begin{equation*}
			F: \mathcal{Z} \to \mathcal{S}.
		\end{equation*}
		Then, for a given training data set $\cZ^{tr}$, the estimated ITR is
		given by,
		\begin{equation*}
			\hat{f}_{\cZ^{tr}}(\bX_i, c_{p}(\hat{f}_{\cZ^{tr}})) \ = \ \mathbf{1}\{\hat{s}_{\cZ^{tr}}(\bX_i) > c_{p}(\hat{f}_{\cZ^{tr}})\},
		\end{equation*}
		where $\hat{s}_{\cZ^{tr}}=F(\cZ^{tr})$ is the estimated scoring rule
		and
		$c_{p}(\hat{f}_{\cZ^{tr}})\ = \ \inf \{c \in \mathbb{R}:
		\Pr(\hat{f}_{\cZ^{tr}}(\bX_i, c) = 1 \mid \cZ^{tr}) \le p\}$ is the
		threshold based on the maximal proportion of treated units $p$.  The
		CATE is a natural choice for the scoring rule, i.e.,
		$s_{\cZ^{tr}}(\bX_i)= \E(\tau_i \mid \bX_i)$.  We need not assume that
		$\hat{s}_{\cZ^{tr}}(\bX_i)$ is consistent for the CATE.
		
		To extend the PAPE (Eqn~\eqref{eq:PAPEp}), we first define the
		population proportion of units with $\bX_i = \bx$ who are assigned to
		the treatment condition under the estimated ITR as,
		\begin{equation*}
			\bar{f}_{Fp}(\bx) \ = \ \E_{\cZ^{tr}}\{\hat{f}_{\cZ^{tr}}(\bX_i,
			c_{p}(\hat{f}_{\cZ^{tr}})) \mid \bX_i = \bx\} \ = \ \Pr \{\hat{f}_{\cZ^{tr}}(\bX_i,
			c_{p}(\hat{f}_{\cZ^{tr}})) = 1\mid \bX_i = \bx\}.
		\end{equation*}
		While $c_{p}(\hat{f}_{\cZ^{tr}})$ depends on the specific ITR
		generated from the training data $\cZ^{tr}$, the population proportion
		of treated units averaged over the sampling of training data,
		$\bar{f}_{Fp}(\bX_i)$, only depends on $p$.
		
		Lastly, the PAPE of the estimated ITR under budget constraint $p$ is
		defined as,
		\begin{equation*}
			\tau_{Fp} \ = \ \E\{\bar{f}_{Fp}(\bX_i) Y_i(1) + (1-\bar{f}_{Fp}(\bX_i))Y_i(0) - p Y_i(1) - (1-p) Y_i(0)\}. \label{eq:PAPEpcv}
		\end{equation*}
		This evaluation metric corresponds to neither that of a specific ITR
		estimated from the whole experimental data set nor its expectation.
		Rather, we are evaluating the efficacy of a learning algorithm that is
		used to estimate an ITR using the same experimental data.
		
		We can also compare {\it estimated} ITRs by further generalizing the
		definition of the PAPD (Eqn~\eqref{eq:PAPD}) to the current
		setting.  Specifically, we define the PAPD between two machine learning algorithms,
		$F$ and $G$, under budget constraint $p$ as,
		\begin{equation}
			\Delta_p(F,G)\ = \ \E_{\bX, Y}[\{\bar{f}_{Fp}(\bX_i) -
			\bar{f}_{Gp}(\bX_i)\}Y_{i}(1) + \{\bar{f}_{Gp}(\bX_i) -
			\bar{f}_{Fp}(\bX_i)\}Y_i(0)].
			\label{eq:PAPDpcv}
		\end{equation}
		Finally, we consider the AUPEC of an estimated ITR.  Specifically, the
		AUPEC of an machine learning algorithm $F$ is defined as,
		\begin{equation}
			\Gamma_F \ = \ \E_{\cZ^{tr}}\l[\int^{p_{\hat{f}}}_0
			\E\{Y_i(\hat{f}_{\cZ^{tr}}(\bX_i,c_{p}(\hat{f}_{\cZ^{tr}}))) \d p +
			(1-p_{\hat{f}})\E\{Y_i(\hat{f}_{\cZ^{tr}}(\bX_i,c^*))\}\r] -
			\frac{1}{2}\E(Y_i(0)+Y_i(1)), \label{eq:AUPECcv}
		\end{equation}
		where $p_{\hat{f}}=\Pr(\hat{f}_{\cZ^{tr}}(\bX_i)=1)$ is the maximal
		population proportion of units treated by the estimated ITR
		$\hat{f}_{\cZ^{tr}}$.
		
		\subsection{Estimation and Inference}
		\label{subsec:cv}
		
		Rather than simply splitting the data into training and test sets (in
		such a case, the inferential procedure for fixed ITRs is applicable),
		we maximize efficiency by using cross-validation to estimate the
		evaluation metrics introduced above.  First, we randomly split the
		data into $K$ subsamples of equal size $m = n/K$ by assuming, for the
		sake of notational simplicity, that $n$ is a multiple of $K$. Then,
		for each $k=1,2,\ldots,K$, we use the $k$th subsample as a test set
		$\cZ_k=\{\bX_i^{(k)}, T_i^{(k)}, Y_i^{(k)}\}_{i=1}^{m}$ with the data
		from all $(K-1)$ remaining subsamples as the training set
		$\cZ_{-k}=\{\bX_i^{(-k)}, T_i^{(-k)},
		Y_i^{(-k)}\}_{i=1}^{n-m}$.
		
		To simplify notation, we assume that the number of treated (control)
		units is identical across different folds and denote it as $m_1$
		($m_0=m-m_1$).  For each split $k$, we estimate an ITR by applying a
		learning algorithm $F$ to the training data $\cZ_{-k}$,
		\begin{equation}
			\hat{f}_{-k} \ = \ F(\cZ_{-k}). \label{eq:f.hat}
		\end{equation}
		We then evaluate the performance of the learning algorithm $F$ by
		computing an evaluation metric of interest $\tau$ based on the test
		data $\cZ_k$.  Repeating this $K$ times for each $k$ and averaging the
		results gives a cross-validation estimator of the evaluation metric.
		Algorithm~\ref{alg:cross_validation} formally presents this estimation
		procedure. The variance formula for the estimated evaluation metric is
		omitted as it is generally a complex function $v(\cdot)$ (see
		Theorem~\ref{thm:PAPEpcvest} below).
		
		
		\begin{algorithm}[!t]
			\spacingset{1}
			\hspace*{\algorithmicindent} \textbf{Input}: Data $\cZ=\{\bX_i, T_i,
			Y_i\}_{i=1}^n$, Machine learning algorithm $F$, Evaluation metric
			$\tau_f$, Number of folds $K$ \\
			\hspace*{\algorithmicindent} \textbf{Output}: Estimated evaluation
			metric $\hat\tau_F$, Estimated variance of $\hat\tau_F$
			\begin{algorithmic}[1]
				\State Split data into $K$ random subsets of equal size $(\cZ_1,\cdots,\cZ_k)$
				\State $k\gets 1$
				\While{$k\leq K$}
				\State $\cZ_{-k}=[\cZ_1, \cdots, \cZ_{k-1},\cZ_{k+1}, \cdots, \cZ_{K}]$
				\State $\hat{f}_{-k} \ = \ F(\cZ_{-k})$ \Comment{Estimate ITR $f$
					by applying $F$ to $\cZ_{-k}$}
				\State $ \hat{\tau}_k =  \hat{\tau}_{\hat{f}_{-k}}(\cZ_k)$ \Comment{Evaluate estimated ITR $\hat{f}$ using $\cZ_k$}
				\State $k \gets k + 1$
				\EndWhile
				\State \textbf{return} $\hat{\tau}_F = \frac{1}{K} \sum_{k=1}^K \hat{\tau}_k$, $\widehat{\V(\hat{\tau}_F)}=v(\hat{f}_{-1},\cdots,\hat{f}_{-k},\cZ_1, \cdots, \cZ_{K})$
			\end{algorithmic}
			\caption{Estimating and Evaluating an Individualized Treatment Rule
				(ITR) using the Same Experimental Data via Cross-Validation}
			\label{alg:cross_validation}
		\end{algorithm}
		
		We develop the inferential methodology for the evaluation based on the
		cross-validation procedure described above under the Neyman's repeated
		sampling framework.  We focus on the case with a binding budget
		constraint.  The results with no budget constraint appear in
		Appendix~A.5.  We begin by introducing the
		cross-validation estimator of the PAPE with a binding budget
		constraint $p$,
		\begin{equation}
			\hat{\tau}_{Fp} \ = \ \frac{1}{K}\sum_{k=1}^K \hat\tau_{\hat{f}_{-k},p}(\cZ_k), \label{eq:PAPEpcvest}
		\end{equation}
		where $\hat\tau_{fp}$ is defined in Eqn~\eqref{eq:PAPEpest}.
		
		Like the fixed ITR case, the bias of the proposed estimator is not
		exactly zero.  However, we are able to
		show that the bias can be upper bounded by a small quantity while the
		exact randomization variance can still be derived.
		\begin{theorem} {\sc (Bias Bound and Exact Variance of the
				Cross-Validation PAPE Estimator with a Budget
				Constraint)} \label{thm:PAPEpcvest} \spacingset{1} Under
			Assumptions~\ref{asm:SUTVA},~\ref{asm:randomsample},~and~\ref{asm:comrand},
			the bias of the cross-validation PAPE estimator with a budget
			constraint $p$ defined in Eqn~\eqref{eq:PAPEpcvest} can be
			bounded as follows,
			\begin{align*}
				\E_{\cZ^{tr}}[\mc{P}_{\hat{c}_{p}(\hat{f}_{\cZ^{tr}})}(|\E\{\hat{\tau}_{Fp} -\tau_{Fp} \mid  \hat{c}_{p}(\hat{f}_{\cZ^{tr}})\} |\geq
				\epsilon) ] \ &\leq \  1-B(1-p+\gamma_p(\epsilon), m - \lfloor mp \rfloor , \lfloor mp \rfloor+1)\\&+B(1-p-\gamma_p(\epsilon),
				m - \lfloor mp \rfloor , \lfloor mp \rfloor+1),
			\end{align*}
			where any given constant $\epsilon > 0$,
			$B(\epsilon, \alpha, \beta)$ is the incomplete beta function (if
			$\alpha = 0$ and $\beta > 0$, we set
			$B(\epsilon,\alpha, \beta):=H(\epsilon)$ for all $\epsilon$ where
			$H(\epsilon)$ is the Heaviside step function), and
			\begin{equation*}
				\gamma_{p}(\epsilon)\ = \
				\frac{\epsilon}{\E_{\cZ^{tr}}\{\max_{c  \in   [c_{p}(\hat{f}_{\cZ^{tr}})-\epsilon,\  c_{p}(\hat{f}_{\cZ^{tr}})+\epsilon]} \E_{\cZ}(\tau_i\mid \hat{s}_{\cZ^{tr}}(\bX_i)=c)\}}.
			\end{equation*}
			The variance of the estimator is given by,
			\begin{eqnarray*}
				\V(\hat \tau_{Fp})& = &
				\frac{\E(S_{\hat{f}p1}^2)}{m_1} +
				\frac{\E(S_{\hat{f}p0}^2)}{m_0} +
				\frac{\lfloor mp \rfloor(m-\lfloor mp
					\rfloor)}{m^2(m-1)}\l\{(2p-1)\kappa_{F1}(p)^2-2p\kappa_{F1}(p)\kappa_{F0}(p)\r\} \\
				& & \hspace{0.5in} - \frac{K-1}{K}\E(S_{Fp}^2),
			\end{eqnarray*}
			where
			$S_{\hat{f}pt}^2 = \sum_{i=1}^{m} (Y_{\hat{f}pi}(t) -
			\overline{Y_{\hat{f}p}(t)})^2/(m-1)$, and
			$S_{Fp}^2 = \sum_{k=1}^K (\hat\tau_{\hat{f}_{-k},p}(\cZ_k) -
			\overline{\hat\tau_{\hat{f}_{-k},p}(\cZ_k)})^2/(K-1)$, and
			$\kappa_{Ft}(p) = \E(\tau_i\mid
			\hat{f}_{\cZ^{tr}}(\bX_i,\hat{c}_{p}(\hat{f}_{\cZ^{tr}}))=t)$, with
			$Y_{\hat{f}pi}(t) =
			\{\hat{f}_{\cZ^{tr}}(\bX_i,\hat{c}_{p}(\hat{f}_{\cZ^{tr}}))-
			p\}Y_i(t)$,
			$\overline{Y_{\hat{f}p}(t)} = \sum_{i=1}^n Y_{\hat{f}pi}(t)/n$, and
			$\overline{\hat\tau_{\hat{f}_{-k},p}(\cZ_k)} = \sum_{k=1}^K
			\hat\tau_{\hat{f}_{-k},p}(\cZ_k)/K$, for $t= 0,1$.
		\end{theorem}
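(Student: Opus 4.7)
The plan is to leverage Theorem~\ref{thm:PAPEpest} by conditioning on training data and then aggregating across the $K$ folds. Write $X_k = \hat\tau_{\hat{f}_{-k}, p}(\cZ_k)$ so that $\hat\tau_{Fp} = K^{-1}\sum_{k=1}^K X_k$. Conditional on the training set $\cZ_{-k}$, the fitted ITR $\hat{f}_{-k}$ is a fixed function and, by Assumptions~\ref{asm:randomsample} and~\ref{asm:comrand}, the test fold $\cZ_k$ is an independent random sample of $m$ units containing $m_1$ treated. Thus Theorem~\ref{thm:PAPEpest} applies fold by fold with $n$ replaced by $m$ and $f$ replaced by $\hat{f}_{-k}$.

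For the bias bound I would apply Theorem~\ref{thm:PAPEpest} to each $X_k$ conditionally on $\cZ_{-k}$, which yields a tail bound for the conditional bias in terms of the incomplete beta function with parameters $m-\lfloor mp\rfloor$ and $\lfloor mp\rfloor+1$. By linearity of expectation the conditional bias of $\hat\tau_{Fp}$ is the fold average of these fold-level conditional biases, which inherits the same tail bound provided $\gamma_p(\epsilon)$ is defined via the worst-case CATE averaged over $\cZ^{tr}$, precisely as stated. Taking an outer expectation over $\cZ^{tr}$ then produces the displayed inequality.

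For the variance the key observation is that the folds are exchangeable, so $\V(X_k)=v$ and $\Cov(X_k, X_{k'})=c$ for all $k\neq k'$, and hence
\[
\V(\hat\tau_{Fp}) \ = \ \frac{v}{K} + \frac{K-1}{K}\,c \ = \ v - \frac{K-1}{K}(v-c).
\]
A direct expansion using $\E\bigl[\sum_k (X_k - \bar{X})^2\bigr] = (K-1)(v-c)$ gives $\E(S_{Fp}^2) = v - c$, which supplies the final $-\frac{K-1}{K}\E(S_{Fp}^2)$ correction in the formula. To identify $v$ with the first three terms, I would apply Theorem~\ref{thm:PAPEpest} conditionally on $\cZ_{-1}$ to obtain $\V(X_1 \mid \cZ_{-1})$ with the substitutions $n \to m$, $n_1 \to m_1$, $f \to \hat{f}_{-1}$, and then use the law of total variance $v = \E[\V(X_1 \mid \cZ_{-1})] + \V[\E(X_1 \mid \cZ_{-1})]$. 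The conditional-variance average produces the stated $\E(S_{\hat{f}pt}^2)$ and $\kappa_{Ft}(p)$ terms after the outer $\E_{\cZ^{tr}}$ is absorbed into the super-population quantities, while the variance-of-conditional-mean piece is precisely the cross-fold dependence captured by the $\E(S_{Fp}^2)$ correction.

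The main obstacle is the bookkeeping in this last step. The binomial-correction term in Theorem~\ref{thm:PAPEpest} contains products $\kappa_{\hat{f}_{-1},1}(p)\kappa_{\hat{f}_{-1},0}(p)$ and squared $\kappa$'s, and I must verify that after taking $\E_{\cZ^{tr}}$ these reduce cleanly to $\kappa_{F1}(p)\kappa_{F0}(p)$ and $\kappa_{F1}(p)^2$ as defined in the statement; this amounts to showing that the fluctuation of $\kappa_{\hat{f}_{-k},t}(p)$ across training sets is exactly absorbed by the $\E(S_{Fp}^2)$ correction. The bias step is comparatively routine once one notes that the threshold $\hat c_p(\hat{f}_{-k})$ depends on $\cZ_{-k}$, which forces the claim to take the form of an outer expectation of a conditional tail probability.
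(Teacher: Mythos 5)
Your overall skeleton matches the paper's: reduce to the fixed-ITR result (Theorem~\ref{thm:PAPEpest}) fold by fold with $n\to m$, use the exchangeability identity $\V(\hat\tau_{Fp})=v-\frac{K-1}{K}(v-c)$ together with $\E(S_{Fp}^2)=v-c$ to produce the $-\frac{K-1}{K}\E(S_{Fp}^2)$ correction, and obtain the bias bound from the Beta distribution of $F(\hat c_p)$ as an order statistic of $m$ uniforms (the paper additionally notes that this order statistic is independent of $\cZ^{tr}$ because the test fold is an independent sample, which is what lets the outer expectation pass through). The bias part of your argument is essentially the paper's.

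There is, however, a genuine gap in your variance argument, and it sits exactly at the step you flag as ``bookkeeping.'' You propose to compute $v=\V(X_1)$ by conditioning on $\cZ_{-1}$, applying Theorem~\ref{thm:PAPEpest} verbatim, and invoking the law of total variance; you then conjecture that the fluctuation of $\kappa_{\hat f_{-k},t}(p)$ across training sets is ``absorbed by the $\E(S_{Fp}^2)$ correction.'' That cannot be right: $\E(S_{Fp}^2)=v-c$ is the \emph{between-fold} term and is additively separate from $v$; whatever Jensen gap arises from $\E_{\cZ^{tr}}[\kappa_{\hat f,t}(p)\kappa_{\hat f,t'}(p)]\neq \kappa_{Ft}(p)\kappa_{Ft'}(p)$, together with the $\V[\E(X_1\mid\cZ_{-1})]$ piece, must be reconciled \emph{within} the first three terms of the stated formula. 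The paper avoids this entirely by computing $v$ directly and unconditionally: it first establishes the marginal identity $\Pr(\hat f_{\cZ^{tr}}(\bX_i,\hat c_p)=1)=\lfloor mp\rfloor/m$ by iterated expectation over $\cZ^{tr}$, then redoes the covariance calculation for two distinct test units sharing a training set, expressing the answer in terms of the marginal quantities $\kappa_{Ft}(p)=\E(\tau_i\mid \hat f_{\cZ^{tr}}(\bX_i,\hat c_p(\hat f_{\cZ^{tr}}))=t)$; the training-set variability is thereby built into the unconditional definitions of $\E(S_{\hat fpt}^2)$ and $\kappa_{Ft}(p)$ rather than appearing as separate terms to be cancelled. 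To complete your route you would need to carry out precisely this reconciliation, which is the substantive content of the proof rather than routine bookkeeping, and your stated mechanism for it is incorrect.
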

		Proof is given in Appendix~A.6.  The estimation of
		the term $\E(\widetilde{S}_{\hat{f}t}^2)$ is done similarly as
		before. For $\kappa_{Ft}(p)$, we replace it with its sample analogue:
		\begin{align}
			\widehat{\kappa_{Ft}(p)}  \ &= \ \frac{1}{K} \sum_{l=1}^K \frac{\sum_{i=1}^m
				\mathbf{1}\{\hat{f}_{-k}(\bX_i,
				\hat{c}_{p}(\hat{f}_{-k})) = t\}  T_i^{(k)} Y_i^{(k)}}{\sum_{i=1}^m
				\mathbf{1}\{\hat{f}_{-k}(\bX_i,
				\hat{c}_{p}(\hat{f}_{-k})) = t\}  T_i^{(k)} }  \nonumber \\ &
			\hspace{1in} - \frac{\sum_{i=1}^m
				\mathbf{1}\{\hat{f}_{-k}(\bX_i,
				\hat{c}_{p}(\hat{f}_{-k})) = t\}  (1-T_i^{(k)}) Y_i^{(k)}}{\sum_{i=1}^m \mathbf{1}\{\hat{f}_{-k}(\bX_i,
				\hat{c}_{p}(\hat{f}_{-k})) = t\}  (1-T_i^{(k)})}. \label{eq:kappacvest}
		\end{align}
		To estimate the term that appears in the denominator of
		$\gamma_p(\epsilon)$ as part of the upper bound of bias, we assume
		that the CATE, i.e.,
		$\E(Y_i(1)-Y_i(0) \mid \hat{s}_{\cZ^{tr}}(\bX_i)=c)$, is continuous in
		$c$, and replace the maximum with a point estimate.  We may also
		utilize an upper bound for CATE, if known, to estimate the bias
		conservatively. Building on this result, Appendix~A.7 shows how to
		compare two estimated ITRs by estimating the PAPD
		(Eqn~\eqref{eq:PAPDpcv}).
		
		Finally, we consider the following cross-validation estimator of the
		AUPEC for an estimated ITR (Eqn~\eqref{eq:AUPECcv}),
		\begin{equation}
			\widehat{\Gamma}_F \ = \ \frac{1}{K}\sum_{k=1}^K
			\widehat{\Gamma}_{\hat{f}_{-k}}(\cZ_k), \label{eq:AUPECcvest}
		\end{equation}
		where $\widehat{\Gamma}_f$ is defined in
		Eqn~\eqref{eq:AUPECest}. This can be seen as an overall statistic that measures the prescriptive performance of the machine learning algorithm $F$ on the dataset under cross-validation. The following theorem derives a bias bound
		and the exact variance of this cross-validation estimator.
		\begin{theorem} {\sc (Bias Bound and Exact Variance of the
				Cross-validation AUPEC Estimator)}
			\label{thm:AUPECcvest} \spacingset{1} Under
			Assumptions~\ref{asm:SUTVA},~\ref{asm:randomsample},~and~\ref{asm:comrand},
			the bias of the AUPEC estimator defined in
			Eqn~\eqref{eq:AUPECcvest} can be bounded as follows,
			\begin{eqnarray*}
				\E_{\cZ^{tr}}[\mc{P}_{\hat{p}_{\hat{f}}}(|\E(\widehat\Gamma_{F}-\Gamma_{F}\mid \hat{p}_{\hat{f}}) |\geq \epsilon)]
				& \leq & \E\{1-B(1-p_{\hat{f}} +\gamma_{p_{\hat{f}}}(\epsilon), m-\lfloor mp_{\hat{f}}
				\rfloor, \lfloor mp_{\hat{f}} \rfloor+1) \\
				& & \hspace{.5in} + B(1-p_{\hat{f}}-\gamma_{p_{\hat{f}}}(\epsilon), m-\lfloor mp_{\hat{f}}
				\rfloor, \lfloor mp_{\hat{f}} \rfloor+1)\},
			\end{eqnarray*}
			where any given constant $\epsilon > 0$, $B(\epsilon, \alpha, \beta)$
			is the incomplete beta function (if $\alpha = 0$ and $\beta > 0$, we
			set $B(\epsilon,\alpha, \beta):=H(\epsilon)$ for all $\epsilon$ where
			$H(\epsilon)$ is the Heaviside step function), and
			\begin{equation*}
				\gamma_{p_{\hat{f}}}(\epsilon)\ = \ \frac{\epsilon}{2\E_{\cZ^{tr}}\{\max_{c
						\in [c^*-\epsilon,\ c^*+\epsilon]}
					\E(\tau_i\mid \hat{s}_{\cZ^{tr}}(\bX_i)=c)\}}.
			\end{equation*}
			The variance is given by,
			\begin{eqnarray*}
				\V(\widehat\Gamma_{F})
				& = & \E\l[- \frac{1}{m} \l\{\sum_{z=1}^Z \frac{k(n-z)}{m^2(m-1)}\kappa_{F1}(z/m)\kappa_{F0}(z/m)
				+  \frac{Z(m-Z)^2}{m^2(m-1)}\kappa_{F1}(Z/m)\kappa_{F0}(Z/m)\r\}\r.\\
				& & \hspace{.275in} - \frac{2}{m^4(m-1)}\sum_{z=1}^{Z-1}\sum_{z^\prime=z+1}^{Z}
				z(m-z^\prime)\kappa_{F1}(z/m)\kappa_{F1}(z^\prime/m) - \frac{Z^2(m-Z)^2}{m^4(m-1)}\kappa_{F1}(Z/m)^2\\
				& & \hspace{.275in} \l.
				-  \frac{2(m-Z)^2}{m^4(m-1)}\sum_{z=1}^Z k\kappa_{F1}(Z/m)\kappa_{F1}(z/m)
				+\frac{1}{m^4}\sum_{z=1}^Z  z(m-z)\kappa_{F1}(z/m)^2\r] \\
				& & + \V\l(\sum_{z=1}^Z \frac{z}{m}  \kappa_{F1}(z/m) + \frac{(m-Z)Z}{m}
				\kappa_{F1}(Z/m)\r)+  \frac{\E(S_{\hat{f}1}^{\ast 2})}{m_1} +
				\frac{\E(S_{\hat{f}0}^{\ast 2})}{m_0} - \frac{K-1}{K}\E(S_{F}^{\ast 2}),
			\end{eqnarray*}
			where $Z$ is a Binomial random variable with size $m$ and success
			probability $p_{\hat{f}}$,
			$S^{\ast 2}_{\hat{f}t} = \sum_{i=1}^m (Y^\ast_{\hat{f}i}(t) -
			\overline{Y_{\hat{f}}^\ast(t)})^2/(m-1)$,
			$S_{F}^2 = \sum_{k=1}^K (\widehat{\Gamma}_{\hat{f}_{-k}}(\cZ_k)-
			\overline{\widehat{\Gamma}_{\hat{f}_{-k}}(\cZ_k)})^2/(K-1)$, and
			$\kappa_{Ft}(z/m) = \E(\tau_i\mid
			\hat{f}_{\cZ^{tr}}(\bX_i,\hat{c}_{z/m}(\hat{f}_{\cZ^{tr}})=t)$ with
			$\overline{Y_{\hat{f}}^\ast(t)} = \sum_{i=1}^m
			Y_{\hat{f}i}^\ast(t)/m$,
			$\overline{\widehat{\Gamma}_{\hat{f}_{-k}}(\cZ_k)} = \sum_{k=1}^K
			\widehat{\Gamma}_{\hat{f}_{-k}}(\cZ_k)/K$, and
			$Y_{\hat{f}i}^\ast(t) = \left[\left\{\sum_{z=1}^{m_f}
			\hat{f}_{\cZ^{tr}}(\bX_i,\hat{c}_{z/m}(\hat{f}_{\cZ^{tr}}))+(m-m_f)
			\hat{f}_{\cZ^{tr}}(\bX_i,\hat{c}_{z/m}(\hat{f}_{\cZ^{tr}}))\right\}/m
			- \frac{1}{2}\right]Y_i(t)$ for $t=0,1$.
		\end{theorem}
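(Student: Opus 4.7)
The plan is to parallel the proof of Theorem~\ref{thm:PAPEpcvest} by lifting the fixed-ITR AUPEC result (Theorem~\ref{thm:AUPECest}) into the cross-validation setting. The key observation is that, conditional on the training fold $\cZ_{-k}$, the summand $\widehat{\Gamma}_{\hat{f}_{-k}}(\cZ_k)$ is exactly the fixed-ITR AUPEC estimator applied to a test sample of size $m$, with $\hat{f}_{-k}$ playing the role of the fixed ITR. This reduces all per-fold bias and variance computations to a direct invocation of Theorem~\ref{thm:AUPECest} with $n$ replaced by $m$; the remaining work is then to (i) average across folds via linearity of $\tfrac{1}{K}\sum_{k=1}^{K}$ and (ii) marginalize over the training-set randomness via the tower property.

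For the bias, I would apply the fixed-ITR bound fold-by-fold, use linearity in $k$, and then take the outer expectation $\E_{\cZ^{tr}}$. The probability bound and the incomplete-beta-function form carry over verbatim because, conditional on $\cZ_{-k}$, the estimated scoring rule $\hat{s}_{\cZ_{-k}}$ acts as a fixed scoring rule on $\cZ_k$. The quantity $\gamma_{p_{\hat{f}}}(\epsilon)$ gains its outer $\E_{\cZ^{tr}}\{\max_c \E(\tau_i \mid \hat{s}_{\cZ^{tr}}(\bX_i)=c)\}$ precisely because the CATE in the denominator is evaluated at the training-set-dependent score.

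For the variance, the workhorse is the exchangeable-sample identity
\begin{equation*}
\V\!\l(\frac{1}{K}\sum_{k=1}^K A_k\r) \ = \ \V(A_1) \ - \ \frac{K-1}{K}\,\E\!\l[\frac{1}{K-1}\sum_{k=1}^K (A_k - \bar{A})^2\r],
\end{equation*}
which follows from $\Cov(A_k,\bar{A})=\V(\bar{A})$ when the $A_k$ are exchangeable. Setting $A_k=\widehat{\Gamma}_{\hat{f}_{-k}}(\cZ_k)$, the sample variance on the right equals $S_F^{\ast 2}$, delivering the term $-\tfrac{K-1}{K}\E(S_F^{\ast 2})$. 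I would then unpack $\V(A_1)$ by the law of total variance,
\begin{equation*}
\V(A_1) \ = \ \E_{\cZ_{-1}}\!\l[\V\!\l(\widehat{\Gamma}_{\hat{f}_{-1}}(\cZ_1) \mid \cZ_{-1}\r)\r] + \V_{\cZ_{-1}}\!\l[\E\!\l(\widehat{\Gamma}_{\hat{f}_{-1}}(\cZ_1) \mid \cZ_{-1}\r)\r],
\end{equation*}
whose inner conditional variance is supplied by Theorem~\ref{thm:AUPECest} with $n\to m$. Taking the outer expectation over $\cZ_{-1}$ then replaces sample variances and CATE-like quantities with the population analogues $\E(S^{\ast 2}_{\hat{f}t})$ and $\kappa_{Ft}(z/m)$, while the outer variance term captures the between-training-set variability of the averaged scoring-threshold contributions, producing the $\V(\sum_z\,\cdots)$ summand in the statement.

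The main obstacle is the bookkeeping needed to reconcile the binomial-$Z$ sums with the outer expectation over training data, since both the maximal proportion $p_{\hat{f}}$ and the thresholds $\hat{c}_{z/m}(\hat{f}_{\cZ^{tr}})$ inherit randomness from the estimated scoring rule. This is the source of the mixed $\E[\cdots]$ and $\V(\cdots)$ structure in the claimed formula: the expectation aggregates the conditional-variance contribution (integrating first over $Z\sim\text{Binomial}(m,p_{\hat{f}})$ and then over the training data), while the variance collects the residual between-training-set variability. Once the fixed-ITR decomposition from Theorem~\ref{thm:AUPECest} is written out fold-by-fold and combined with the exchangeable identity above, the remaining computation is mechanical; verifying that each summand lands in the correct expectation/variance slot is the principal algebraic challenge.
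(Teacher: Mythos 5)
Your proposal is correct and follows essentially the same route as the paper: the paper gives no separate argument for Theorem~\ref{thm:AUPECcvest} beyond noting it is "similar to that of Theorem~\ref{thm:AUPECest}," and the machinery it implicitly relies on is exactly what you describe --- reduce to the fixed-ITR AUPEC result conditional on the training fold, apply the exchangeable-fold identity of Lemma~\ref{lemma:cvcov} (equivalently your $\V(\bar{A}) = \V(A_1) - \tfrac{K-1}{K}\E(S_F^{\ast 2})$) to obtain the cross-validation correction, and take the outer expectation over $\cZ^{tr}$ for the bias bound as in the proof of Theorem~\ref{thm:PAPEpcvest}. The only caveat is your attribution of the $\V(\sum_z \cdots)$ summand solely to between-training-set variability: that term already appears in the fixed-ITR Theorem~\ref{thm:AUPECest} as the variance over the Binomial $Z$, so in the cross-validation version it aggregates both the $Z$-randomness and the training-set randomness, which is precisely the bookkeeping you correctly flag as the remaining algebraic work.
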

		Proof is similar to that of Theorem~\ref{thm:AUPECest}. The estimation
		of $\E(S_{\hat{f}1}^{\ast 2})$, $\E(S_{\hat{f}0}^{\ast 2})$, and
		$\E(S_{F}^{\ast 2})$ is the same as before, and the $\kappa_{Ft}(p)$
		term can be estimated using Eqn~\eqref{eq:kappacvest}.

		\section{A Simulation Study}
		\label{sec:synthetic}
		
		We conduct a simulation study to examine the finite sample performance
		of our methodology, for both fixed and estimated ITRs.  We find that
		the empirical coverage probability of the confidence interval, based
		on the proposed variance, approximates its nominal rate even in a
		small sample.  We also find that the bias is minimal even when the
		proposed estimator is not unbiased and that our variance bounds are
		tight.
		
		\subsection{Data Generation Process}
		
		Our data generating process (DGP) is based on the one used in the
		2017 Atlantic Causal Inference Conference (ACIC) Data Analysis
		Challenge \citep{hahn:dori:murr:18}.  
		A total of 8 covariates $\bX$ are taken from the Infant Health and
		Development Program, which originally had 58 covariates and $4,302$
		observations.  In our simulation, the population distribution of
		covariates is assumed to equal the empirical distribution of this data
		set.  Therefore, we obtain each simulation sample via boostrap.  We
		vary the sample size: $n= 100, 500, 2000$.
		
		We use the same outcome model as the one used in the competition,
		\begin{equation}
			\E(Y_i(t) \mid \bX_i) \ = \ \mu(\bX_i)+\tau(\bX_i)t, \label{eq:outcome}
		\end{equation}
		where $\pi(\bx) = 1/[1+\exp\{3(x_1+x_{43}+0.3(x_{10}-1))-1\}]$,
		$\mu(\bx) = -\sin(\Phi(\pi(\bx)))+x_{43}$, and
		$\tau(\bx) = \xi(x_3x_{24}+(x_{14}-1)-(x_{15}-1))$ with $\Phi(\cdot)$
		representing the standard Normal CDF and $x_j$ indicating a specific
		covariate in the data set.  One important difference is that we assume
		a complete randomized experiment whereas the original DGP generated
		the treatment using a function of covariates to emulate an
		observational study.  As in the competition, we focus on two scenarios
		regarding the treatment effect size by setting $\xi$ equal to $2$
		(``high'') and $1/3$ (``low'').  Although the original DGP included
		four different error distributions, we use the i.i.d. error,
		$\sigma(\bX_i)\epsilon_i$ where
		$\sigma(\bx) = 0.25 \sqrt{\V(\mu(\bx)+\pi(\bx)\tau(\bx))}$ and
		$\epsilon_i \iid \mathcal{N}(0,1)$.
		
		For fixed ITRs, we can directly compute the true values of our causal
		quantities of interest using the outcome model specified in
		Eqn~\eqref{eq:outcome} and evaluate each quantity based on the
		entire original data set. This computation is valid because we assume
		the population distribution of covariates is equal to the empirical
		distribution of the original data set.  For the estimated ITR case,
		however, we do not have an analytical expression for the true value of
		a causal quantity of interest.  Therefore, we obtain an approximate
		true value via Monte Carlo simulation.  We generate 10,000 independent
		datasets based on the same DGP, and train the specified algorithm $F$
		on each of the datasets using 5-fold cross-validation (i.e., $K=5$).
		Then, we use the sample mean of our estimated causal quantity across
		10,000 data sets as our approximate truth.
		
		We evaluate Bayesian Additive Regression Trees (BART)
		\citep[][]{chipman2010bart,hahn2020bayesian}, which had the best
		overall performance in the original competition.  We compare this
		model with two other popular methods: Causal Forest
		\citep{athey2019generalized} as well as the LASSO, which includes all
		main effects and two-way interaction effects between the treatment and
		all covariates \citep{tibshirani1996regression}.  All three models are
		trained on the original data from the 2017 ACIC Data Challenge. The
		number of trees was tuned through the 5-fold cross validation for BART
		and Causal Forest. The regularization parameter was tuned similarly
		for LASSO. All models were cross-validated on the PAPE.  For
		implementation, we use {\sf R 3.4.2} with {\sf bartMachine} (version
		1.4.2) for BART, {\sf grf} (version 0.10.2) for Causal Forest, and
		{\sf glmnet} (version 2.0.13) for LASSO.  Once the models are trained,
		an ITR is derived based on the magnitude of the estimated CATE
		$\hat\tau(\bx)$, i.e., $f(\bX_i) = \mathbf{1}\{\hat\tau(\bX_i) > 0\}$.
		
		\subsection{Results}
		\label{subsec:coverage}
		
		\begin{table}[t!]
			\centering\setlength{\tabcolsep}{2pt}
			\small \spacingset{1}
			\resizebox{1.1\textwidth}{!}{\begin{tabular}{l.|...|...|...}
					\hline
					& & \multicolumn{3}{c|}{$\bm{n=100}$} & \multicolumn{3}{c|}{$\bm{n=500}$} & \multicolumn{3}{c}{$\bm{n=2000}$}\\
					Estimator & \multicolumn{1}{c|}{truth}
					& \multicolumn{1}{c}{coverage} & \multicolumn{1}{c}{bias}
					& \multicolumn{1}{c|}{s.d.}
					& \multicolumn{1}{c}{coverage} & \multicolumn{1}{c}{bias}
					& \multicolumn{1}{c|}{s.d.}
					& \multicolumn{1}{c}{coverage} &
					\multicolumn{1}{c}{bias}
					&  \multicolumn{1}{c}{s.d.} \\ \hline
					\multicolumn{2}{l|}{\textbf{Low treatment effect}} &  & & & & &\\
					$\hat\tau_f$ & 0.066 & 94.3\% & 0.005 & 0.124 & 96.2\% & 0.001  &  0.053  &  95.1\%  & 0.001  &  0.026  \\
					$\hat\tau_f(c_{0.2})$ & 0.051 & 93.2  & -0.002 & 0.109 & 94.4 & 0.001  &  0.046  &  95.2  &  0.002  &  0.021  \\
					$\widehat\Gamma_f$  & 0.053 & 95.3 & 0.001 & 0.106 & 95.1 &  0.001  &   0.045  &  94.8  &  -0.001  &  0.024  \\
					$\widehat\Delta_{0.2}(f,g)$ & -0.022 & 94.0 & 0.006 & 0.122 & 95.4 & 0.002  &  0.051  &  96.0  & 0.000  &  0.026  \\
					$\widehat\Delta_{0.2}(f,h)$ & -0.014 & 93.9 & -0.001 & 0.131 & 94.9 & -0.000  &  0.060  &  95.3  & -0.000  &  0.030 \\\hline
					\multicolumn{2}{l|}{\textbf{High treatment effect}} &  & & & & &\\
					$\hat\tau_f$ &  0.430  &  94.7\%  &  -0.000  &  0.163  &  95.7\%  & 0.000  &  0.064  &  94.4\%  & -0.000  &  0.031  \\
					$\hat\tau_f(c_{0.2})$ &  0.356  &  94.7  & 0.004  &  0.159  &  95.7  & 0.002  &  0.072  &  95.8  &  0.000  &  0.035  \\
					$\widehat\Gamma_f$  &  0.363  &  94.3  &  -0.005  &  0.130  &  94.9  & 0.003  &   0.058  &  95.7  &  0.000  &  0.029  \\
					$\widehat\Delta_{0.2}(f,g)$ &  -0.000  &  96.9  & 0.008  &  0.151  &  97.9  & -0.002  &  0.073  &  98.0  & -0.000  &  0.026  \\
					$\widehat\Delta_{0.2}(f,h)$ &  0.000  &  94.7  & -0.004  &  0.140  &  97.7  & -0.001  &  0.065  &  96.6  & 0.000  &  0.033 \\\hline\hline
			\end{tabular}}
			\caption{The Results of the Simulation Study for Fixed
				Individualized Treatment Rules (ITRs).  The table presents the
				bias and standard deviation of each estimator as well as the
				coverage of its 95\% confidence intervals under the ``Low
				treatment effect'' and ``High treatment effect'' scenarios.
				The first three estimators shown here are for BART $f$:
				Population Average Prescription effect (PAPE; $\hat\tau_f$),
				PAPE with a budget constraint of 20\% treatment proportion
				($\hat\tau_f(c_{0.2})$), Area Under the Prescriptive Effect
				Curve (AUPEC; $\widehat{\Gamma}_f$).  We also present the
				results for the difference in the PAPE between BART and Causal
				Forest $g$ ($\widehat\Delta_{0.2}(f,g)$) and between BART and
				LASSO $h$ ($\widehat\Delta_{0.2}(f,g)$) under the budget
				constraint.} \label{tb:simulation}
		\end{table}
		
		We first present the results for fixed ITRs followed by those for
		estimated ITRs.  Table~\ref{tb:simulation} presents the bias and
		standard deviation of each estimator for fixed ITRs as well as the
		coverage probability of its 95\% confidence intervals based on 1,000
		Monte Carlo trials.  The results are shown separately for the high and
		low treatment effects scenarios.  We estimate the PAPE $\tau_f$ for
		BART without a budget constraint as well as the PAPE with a budget
		constraint of 20\% as the maximal proportion of treated units,
		$\tau_f(c_{0.2})$.  In addition, we estimate the AUPEC $\Gamma_f$ and
		compute the difference in the PAPE or PAPD between BART and Causal
		Forest ($\Delta(f,g)$), and between BART and LASSO ($\Delta(f, h)$).
		
		Under both scenarios and across sample sizes, the bias of our
		estimator is small.  Moreover, the coverage rate of 95\% confidence
		intervals is close to their nominal rate even when the sample size is
		small.  Although we can only bound the variance when estimating the
		PAPD between two ITRs (i.e., $\Delta_{0.2}(f, g)$ and
		$\Delta_{0.2}(f, h)$), the coverage stay close to 95\%, implying that
		the bound for covariance has little effect on the variance estimation.
		
		\begin{table}[t!]
			\centering\setlength{\tabcolsep}{2pt}
			\small \spacingset{1}
			\resizebox{1.1\textwidth}{!}{
				\begin{tabular}{l|....|....|....}
					\hline
					&  \multicolumn{4}{c|}{$\bm{n=100}$} & \multicolumn{4}{c|}{$\bm{n=500}$} & \multicolumn{4}{c}{$\bm{n=2000}$}\\
					Estimator & \multicolumn{1}{c}{truth} & \multicolumn{1}{c}{coverage} & \multicolumn{1}{c}{bias}
					& \multicolumn{1}{c|}{s.d.}
					& \multicolumn{1}{c}{truth} & \multicolumn{1}{c}{coverage} & \multicolumn{1}{c}{bias}
					& \multicolumn{1}{c|}{s.d.}
					& \multicolumn{1}{c}{truth} & \multicolumn{1}{c}{coverage} &
					\multicolumn{1}{c}{bias} &  \multicolumn{1}{c}{s.d.} \\ \hline
					\textbf{Low Effect}& & & & & & & & & & & & \\
					$\hat{\lambda}_F$ & 0.073 & 96.4\% & 0.001 & 0.216 & 0.095 & 96.7\% & 0.002 & 0.100 & 0.112 & 97.2\% & 0.002 & 0.046 \\
					$\hat\tau_F$ & 0.021 & 94.6 & -0.002 & 0.130 & 0.030 & 95.5 & -0.002 & 0.052 & 0.032 & 94.4 & -0.000 & 0.027 \\
					$\hat\tau_{F}(c_{0.2})$ & 0.023 & 95.4 & -0.003 & 0.120 & 0.034 & 95.4 & -0.002 & 0.057 & 0.043 & 96.8 & 0.001 & 0.029 \\
					$\widehat\Gamma_F$  & 0.009 & 98.2 & 0.002 & 0.117 & 0.029 & 96.8 & -0.001 & 0.048 & 0.039 & 95.9 & 0.001 & 0.001 \\
					\textbf{High Effect} & & & & & & & & & & & & \\
					$\hat{\lambda}_H$& 0.867 & 96.9\% & -0.007 & 0.261 & 0.875 & 96.5\% & -0.003 & 0.125 & 0.875 & 97.3\% & 0.001 & 0.062 \\
					$\hat\tau_F$ & 0.338 & 93.6 & -0.000 & 0.171 & 0.358 & 93.0 & 0.000 & 0.093 & 0.391 & 95.3 & 0.001 & 0.041 \\
					$\hat\tau_{F}(c_{0.2})$ & 0.341 & 94.8 & -0.002 & 0.170 & 0.356 & 96.2 & -0.005 & 0.075 & 0.356 & 95.8 & 0.001 & 0.037 \\
					$\widehat\Gamma_F$  & 0.344 & 98.5 & 0.001 & 0.126 & 0.362 & 98.9 & 0.005 & 0.053 & 0.363 & 99.0 & 0.001 & 0.026 \\\hline\hline
				\end{tabular}
			}
			\caption{The Results of the Simulation Study for Cross-validated ITR.
				The table presents the true value (truth) of each quantity along
				with the bias and standard deviation of each estimator as well as
				the coverage of its 95\% confidence intervals under the ``Low
				treatment effect'' and ``High treatment effect'' scenarios.  All of
				the results shown here are for LASSO.} \label{tb:simulationcv}
		\end{table}
		
		For estimated ITRs, Table~\ref{tb:simulationcv} presents the results
		of LASSO under cross-validation.  For BART and Causal Forest,
		obtaining an accurate Monte Carlo estimate of the true causal
		parameter values under cross-validation takes a prohibitively large
		amount of time.  While the out-of-bag estimates of such true values
		can be computed, they have been shown to create bias under certain
		scenarios \citep{janitza2018overestimation}.  These true values are
		generally greater for a larger sample size because LASSO performs
		better with more data.
		
		The proposed cross-validated estimators are approximately unbiased
		even for $n=100$. The coverage is generally around or above the
		nominal $95\%$ value, reflecting the conservative estimate of the
		variance.  For the PAPE without and with budget constraint, i.e.,
		$\hat{\tau}_F$ and $\hat{\tau}_{F}(c_{0.2})$, the coverage is close to
		the nominal value.  This indicates that the bias of the proposed
		conservative variance estimator is relatively small even though the
		number of folds for cross-validation is only $K=5$.  The performance
		of the proposed methodology is good even when the sample size is as
		small as $n=100$.  When the sample size is $n=500$, the standard
		deviation of the cross-validated estimator is roughly half of the
		corresponding $n=100$ fixed ITR estimator (this is a good comparison
		because each fold has 100 observations).  This confirms the
		theoretical efficiency gain that results from cross-validation.
		
		\section{An Empirical Application}
		\label{sec:empirical}
		
		We apply the proposed methodology to the data from the Tennessee's
		Student/Teacher Achievement Ratio (STAR) project, which was a
		longitudinal study experimentally evaluating the impacts of class size
		in early education on various outcomes \citep{most:95}.  Another
		application based on a canvassing experiment is shown in
		Appendix~A.8.
		
		\subsection{Data and Setup}
		\label{subsec:STARdata}
		
		The STAR project randomly assigned over 7,000 students across 79
		schools to three different groups: small class,
		regular class,
		and regular class with a full-time teacher's aid.  The experiment
		began when students entered kindergarden and continued through third
		grade.  To create a binary treatment, we focus on the first two
		groups: small class and regular class without an aid.  The treatment
		effect heterogeneity is important because reducing class size is
		costly, requiring additional teachers and classrooms.  Policy makers
		who face a budget constraint may be interested in finding out which
		groups of students benefit most from a small class size so that the
		priority can be given to those students.
		
		We follow the analysis strategies of the previous studies
		\citep[e.g.,][]{ding:lehr:11,mcke:sims:rivk:15} that estimated the
		heterogeneous effects of small classes on educational attainment.
		These authors adjust for school-level and student-level
		characteristics, but do not consider within-classroom interactions.
		Unfortunately, addressing this limitation is beyond the scope of this
		paper.  We use a total of 10 pre-treatment covariates $\bX_i$ that
		include four demographic characteristics of students (gender, race,
		birth month, birth year) and six school characteristics (urban/rural,
		enrollment size, grade range, number of students on free lunch, number
		of students on school buses, and percentage of white students).  Our
		treatment variable is the class size to which they were assigned at
		kindergarten: small class $T_i=1$ and regular class without an aid
		$T_i = 0$.  For the outcome variables $Y_i$, we use three standardized
		test scores measured at third grade: math, reading, and writing SAT
		scores.
		
		The resulting data set has a total of 1,911 observations.  The
		estimated average treatment effects (ATE) based on the entire data set
		are $6.78$ (s.e. = $1.71$), $5.78$ (s.e. = $1.80$), and $3.65$ (s.e. =
		$1.63$), for the reading, math, and writing scores, respectively.  For
		the fixed test data, the estimated ATEs are similar; $5.10$ (s.e. =
		$3.07$), $2.78$ (s.e.= $3.15$), and $1.48$ (s.e. = $2.96$).
		
		We evaluate the performance of ITRs using two settings considered
		above.  First, we randomly split the data into the training data
		(70\%) and test data (30\%).  We estimate an ITR from the training
		data and then evaluate it as a fixed ITR using the test data.  This
		follows the setup considered in
		Sections~\ref{sec:evalmetrics}~and~\ref{sec:est-inf}.  Second, we
		consider the evaluation of estimated ITRs based on the same
		experimental data.  We utilize Algorithm~\ref{alg:cross_validation}
		with 5-fold cross-validation (i.e., $K=5$).  For both settings, we use
		the same three machine learning algorithms. For Causal Forest, we set
		{\tt tune.parameters = TRUE}. For BART, tuning was done on the number
		of trees. For LASSO, we tuned the regularization parameter while
		including all interaction terms between covariates and the treatment
		variable.  All tuning was done through the 5-fold cross validation
		procedure on the training set using the PAPE as the evaluation metric.
		We then create an ITR as $\mathbf{1}\{\hat\tau(\bx)>0\}$ where
		$\hat\tau(\bx)$ is the estimated CATE obtained from each fitted model.
		As mentioned in Section~\ref{sec:est-inf}, we center the outcome
		variable $Y$ in evaluating the metrics to minimize the variance of the
		estimators.
		
		\subsection{Results}
		\label{subsec:results}
		
		\begin{table}[t!]
			\centering \spacingset{1} \setlength{\tabcolsep}{4.25pt}
			\begin{tabular}{l..c|..c|..c}
				\hline
				& \multicolumn{3}{c|}{\textbf{BART}} & \multicolumn{3}{c|}{\textbf{Causal Forest}}
				& \multicolumn{3}{c}{\textbf{LASSO}}\\
				& \multicolumn{1}{c}{est.}  & \multicolumn{1}{c}{s.e.} & \multicolumn{1}{c|}{treated}
				& \multicolumn{1}{c}{est.}  & \multicolumn{1}{c}{s.e.} & \multicolumn{1}{c|}{treated}
				& \multicolumn{1}{c}{est.}  & \multicolumn{1}{c}{s.e.} &
				\multicolumn{1}{c}{treated}\\\hline
				\multicolumn{3}{l}{\textbf{Fixed ITR}} & & & & & \\
				\multicolumn{3}{l}{\textit{No budget constraint}} & & & & & & \\
				\quad Reading &      0  &  0    & 100\%   & -0.38 & 1.14 & 84.3\%  &  -0.41 &  1.10 &  84.4\% \\
				\quad Math      & 0.52 & 1.09  & 86.7  &  0.09 & 1.18  & 80.3 & 1.73 & 1.25 &  78.7 \\
				\quad Writing  & -0.32  &   0.72 &   92.7  & -0.70 & 1.18 &  78.0 &  -0.30 & 1.26 & 80.0\\\hdashline
				\multicolumn{3}{l}{\textit{Budget constraint}} & & & & & & \\
				\quad Reading & -0.89 & 1.30 & 20 & 0.66 & 1.23  & 20 & -1.17 & 1.18 &  20 \\
				\quad Math     & 0.70 & 1.25  & 20  & 2.57   & 1.29 & 20 & 1.25 & 1.32 &  20 \\
				\quad Writing  & 2.60 & 1.17 & 20 & 2.98 & 1.18 & 20 & 0.28 & 1.19 &  20\\\hline
				\multicolumn{3}{l}{\textbf{Estimated ITR}}  & & & & & & \\
				\multicolumn{3}{l}{\textit{No budget constraint}} & & & & & & \\
				\quad Reading  &  0.19  &  0.37    & 99.3\%   & 0.31 & 0.77 & 86.6\%  &  0.32 &  0.53 &  87.6\% \\
				\quad Math     & 0.92 & 0.75  & 84.7  &  2.29 & 0.80 & 79.1 & 1.52 & 1.60 &  75.2 \\
				\quad Writing  & 1.12       &    0.86    &   88.0  & 1.43 & 0.71 &  67.4 &  0.05 & 1.37 & 74.8\\\hdashline
				\multicolumn{3}{l}{\textit{Budget constraint}} & & & & & & \\
				\quad Reading  &  1.55& 1.05 & 20 & 0.40 & 0.69 & 20 & -0.15 & 1.41 &  20 \\
				\quad Math     & 2.28 & 1.15  & 20  & 1.84   & 0.73 & 20 & 1.50 & 1.48 &  20 \\
				\quad Writing  & 2.31 & 0.66 & 20 & 1.90 & 0.64 & 20 & -0.47 & 1.34 &  20\\\hline
			\end{tabular}
			\caption{The Estimated Population Average Prescription Effect (PAPE)
				for BART, Causal Forest, and LASSO with and without a Budget
				Constraint.  We estimate the PAPE for fixed and estimated
				individualized treatment rules (ITRs).  The fixed ITRs are based
				on the training (70\%) and test data (30\%), whereas the estimated
				ITRs are based on 5 fold cross-validation.  In addition, the
				average treatment effect estimates using the entire dataset. For
				each of the three outcomes, the point estimate, the standard
				error, and the average proportion treated are shown.  The budget
				constraint considered here implies that the maximum proportion
				treated is 20\%.} \label{tb:comparison_cv}
		\end{table}
		
		The upper panel of Table~\ref{tb:comparison_cv} presents the estimated
		PAPEs, their standard errors, and the proportion treated for fixed
		ITRs.  We find that without a budget constraint, none of the machine
		learning algorithms significantly improves upon the random treatment
		rule.  With a budget constraint of 20\%, however, the ITRs based on
		Causal Forest appear to improve upon the random treatment rule at
		least for the math and writing scores.  In contrast, the ITR based on
		BART only performs well for the writing score whereas the performance
		of LASSO is not distinguishable from that of random treatment rule
		across all scores.  The results are largely similar for the estimated
		ITRs, as shown in the lower panel of Table~\ref{tb:comparison_cv}.
		The only difference is that BART performs slightly better while the
		performance of Causal Forest is slightly better for the fixed ITRs and
		worse for the estimated ITRs.
		
		In the case of BART and Causal Forest, the standard errors for
		estimated ITRs are generally smaller than those for fixed ITRs,
		reflecting the efficiency gain due to cross-validation.  For LASSO,
		however, the standard errors for fixed ITRs are often smaller than
		those for estimated ITRs.  This is due to the fact that the ITRs
		estimated by LASSO are significantly variable across different folds
		under cross-validation.  This variability results in the poor
		performance of LASSO in this application.  In contrast, Causal Forest
		is most stable, generally leading to the best performance and the
		smallest standard errors.  Lastly, we note that when compared to the
		estimated ATEs, some estimated PAPEs are of substantial size.

		\begin{table}[t!]
			\centering\spacingset{1}
			\begin{tabular}{l.r|.r|.r}
				\hline   & \multicolumn{4}{c|}{\textbf{Causal
						Forest}} & \multicolumn{2}{c}{\textbf{BART}} \\
				& \multicolumn{2}{c}{vs. \textbf{BART}}
				& \multicolumn{2}{c|}{ vs. \textbf{LASSO}}
				& \multicolumn{2}{c}{vs. \textbf{LASSO}} \\ \hline
				& \multicolumn{1}{c}{est.}  & \multicolumn{1}{c|}{$95\%$ CI}
				& \multicolumn{1}{c}{est.} & \multicolumn{1}{c|}{$95\%$ CI}
				& \multicolumn{1}{c}{est.}  & \multicolumn{1}{c}{$95\%$
					CI} \\
				\multicolumn{2}{l}{\bf Fixed ITR} & & & & \\
				\quad Math      & 1.55   & [$-$0.35, 3.45] & 1.83 & [$-$0.50, 4.16] &
				0.28 & [$-$2.39, 2.95]\\
				\quad Reading & 1.86 & [$-$0.79, 4.51]  & 1.31 & [$-$1.49, 4.11] &
				-0.55 & [$-$4.02, 2.92] \\
				\quad Writing  & 0.38   & [$-$1.66, 2.42]   & 2.69 & [$-$0.27, 5.65] &
				2.32
				& [$-$0.53, 5.15] \\\hline
				\multicolumn{2}{l}{\bf Estimated ITR} & & &  &\\
				\quad Reading      & -1.15   & [$-$3.99, 1.69] & 0.55 & [$-$1.05, 2.15] &
				1.70 & [$-$0.90, 4.30]\\
				\quad Math & -0.43 & [$-$2.57, 3.43]  & 0.34 & [$-$1.32, 2.00] &
				0.77 & [$-$1.99, 3.53] \\
				\quad Writing  & -0.41   & [$-$1.63, 0.80]   & 2.37 & [0.76, 3.98] &
				2.79 & [1.32, 4.26] \\\hline
			\end{tabular}
			\caption{The Estimated Population Average Prescriptive Effect
				Difference (PAPD) between Causal Forest, BART, and LASSO
				under a Budget Constraint of 20\%. The point estimates and
				95\% confidence intervals are shown.} \label{tb:PAPD_cv}
		\end{table}
		
		Table~\ref{tb:PAPD_cv} directly compares these three ITRs based on
		Causal Forest, BART, and LASSO by estimating the Population Average
		Prescriptive Effect Difference (PAPD) under the same budget constraint
		(i.e., 20\%) as above.  Causal Forest outperforms BART and LASSO in
		essentially all cases though the difference is not statistically
		significant. Under the cross-validation setting, Causal Forest and
		BART are statistically significantly more effective than LASSO in
		identifying students with grater treatment effects on their writing
		scores.
		
		\begin{figure}[t!]
			\spacingset{1}
			\includegraphics[width=\linewidth]{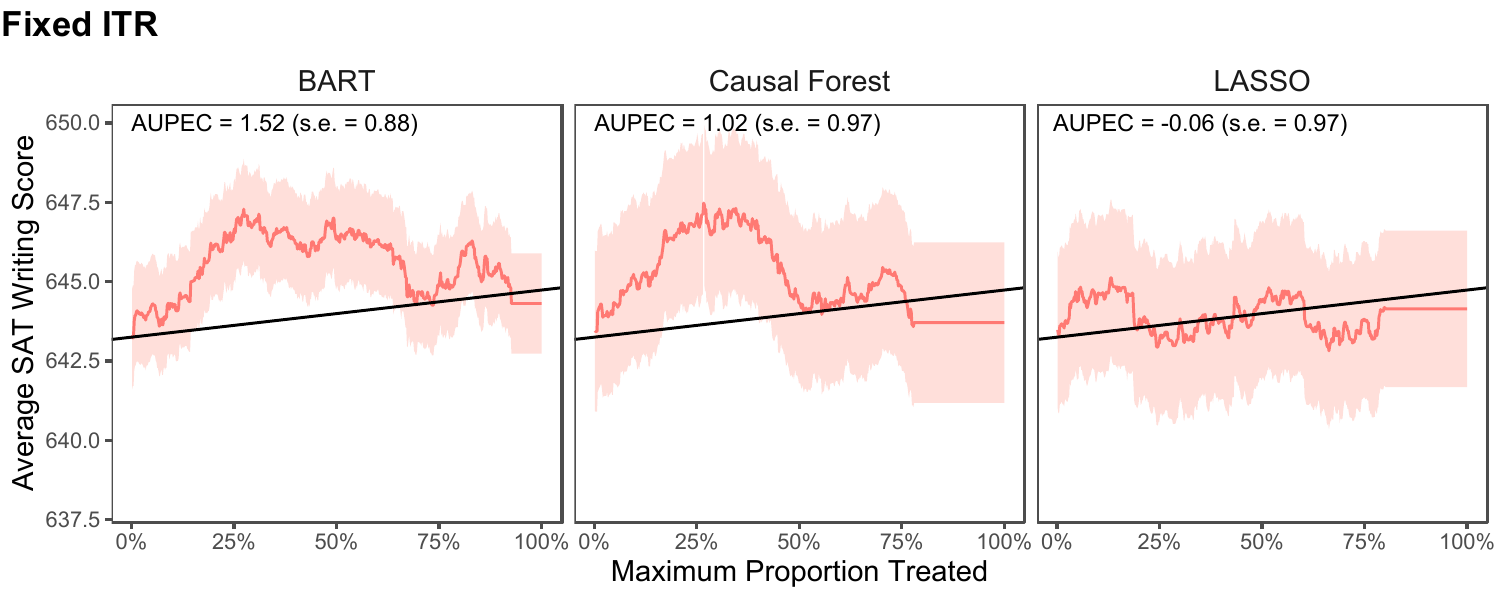}
			\includegraphics[width=\linewidth]{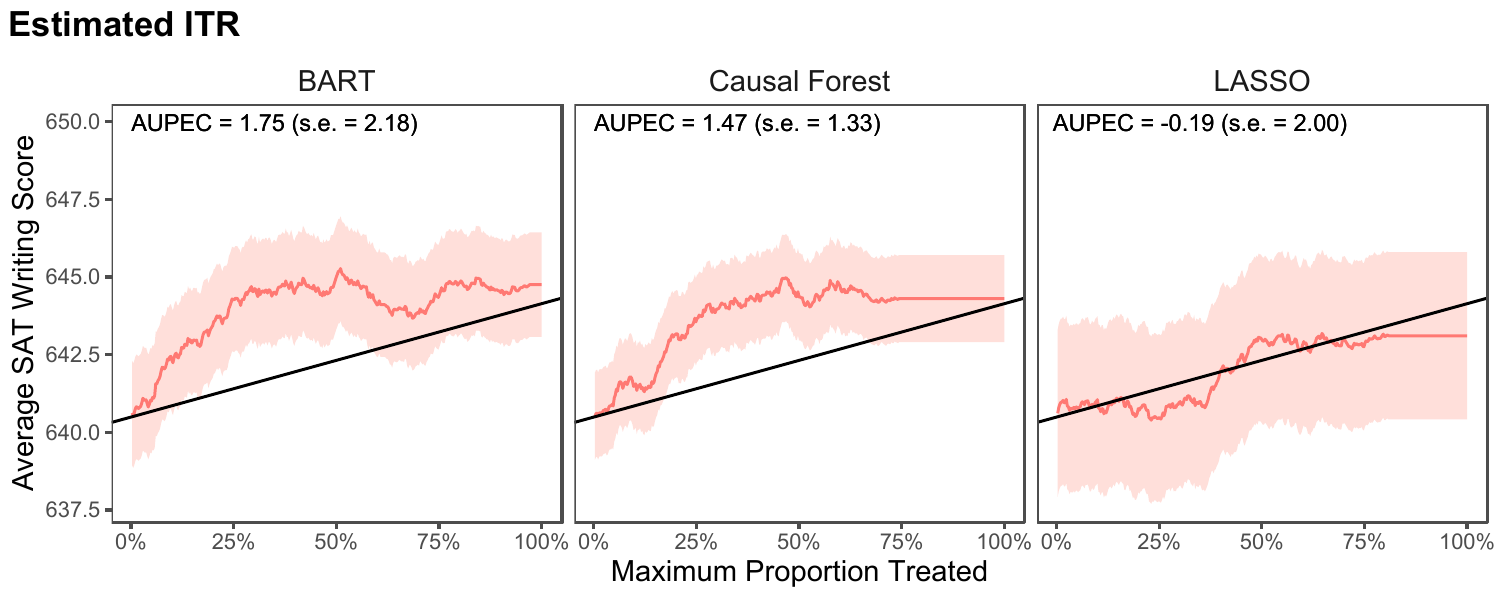}
			\caption{Estimated Area Under the Prescriptive Effect Curve
				(AUPEC).  The results are presented for the fixed (upper panel)
				and estimated (bottom panel) individualized treatment rule (ITR)
				settings.  A solid red line in each plot represents the
				Population Average Prescriptive Effect for SAT writing scores
				across a range of budget constraint (horizontal axis) with the
				pointwise 95\% confidence intervals. The area between this line
				and the black line representing random treatment is the AUPEC.
				The results are presented for the individualized treatment rules
				based on BART (left column), Causal Forest (middle column), and
				LASSO (right column), whereas each row presents the results for
				a different outcome.  } \label{fg:AUPEC}
		\end{figure}
		
		Finally, Figure~\ref{fg:AUPEC} presents the estimated PAPE for the
		writing score across a range of budget constraints with the pointwise
		confidence intervals based on the variance formulas, for the fixed and
		estimated ITR cases. The difference between the solid red and black
		lines is the estimated PAPE, while the area between the red line and
		the black line corresponds to the Area Under the Prescriptive Effect
		Curve (AUPEC).  In each plot, the horizontal axis represents the
		budget constraint as the maximum proportion treated. In both the fixed
		and estimated ITR settings, BART and Causal Forest identify students
		who benefit positively from small class sizes when the maximum
		proportion treated is relatively small.  In contrast, LASSO has a
		difficult time in finding these individuals.  Additionally, we find
		that the standard error of the estimated AUPEC is greater for
		estimated ITRs than for fixed ITRs even though in the case of BART and
		Causal Forest, the opposite pattern is found for the PAPE with a given
		budget constraint.  This finding is due to the high variance of the
		estimated AUPEC for different folds of cross-validation.
		
		As the budget constraint is relaxed, the ITRs based on BART and Causal
		Forest yields the population average value similar to the one under the random treatment
		rule. This results in the inverted V-shape AUPEC curves observed in
		the left and middle plots of the figure. These inverted V-shape curves
		illustrate two separate phenomena.  First, the students with the
		highest predicted CATE under BART and Causal Forests do have a higher
		treatment effect than the average, yielding an uplift in the PAPE
		curve compared to the random treatment rule. This shows that BART and
		Causal Forests are able to capture some causal heterogeneity that
		exists in the STAR data.  Indeed, both BART and the Causal Forest
		estimated the CATE to be higher for non-white students and those who
		attend schools with a high percentage of students receiving free
		lunch.  According to the variable importance statistic, these two
		covariates play an essential role in explaining causal heterogeneity
		\citep[see e.g.,][similar
		findings]{finn1990answers,jackson2013estimating,nye2000disadvantaged}.
		
		However, as we further relax the budget constraint, the methods start
		treating students who are predicted to have a smaller (yet still
		positive) value of the CATE. These students tend to benefit less than
		the ATE, resulting in a smaller value of the PAPE as the budget
		increases.  Eventually, both BART and Causal Forest start
		``over-treating'' students who are estimated to have a small positive
		CATE but actually do not benefit from a small class.  This results in
		the overall insignificant PAPE when no budget constraint is imposed.

		\section{Concluding Remarks}
		\label{sec:conclude}
		
		As the application of individualized treatment rules (ITRs) becomes
		more widespread in a variety of fields, a rigorous perfomance
		evaluation of ITRs becomes essential before policy makers deploy them
		in a target population.  We believe that the inferential approach
		proposed in this paper provides a robust and widely applicable tool to
		policy makers.  The proposed methodology also opens up opportunities
		to utilize the existing randomized controlled trial data for the
		efficient evaluation of ITRs as done in our empirical application. In
		addition, although we do not focus on the estimation of ITRs in this
		paper, the proposed evaluation metrics can be used to tune
		hyper-parameters when cross validating machine learning algorithms.
		In future research, we plan to consider the extensions of the proposed
		methodology to other settings, including non-binary treatments,
		dynamic treatments, and treatment allocations in the presence of
		interference between units.
		
		\spacingset{1.6}
		\pdfbookmark[1]{References}{References}
		\bibliography{sample,my,imai}

\clearpage
\appendix
\spacingset{1.15}

\setcounter{table}{0}
\renewcommand{\thetable}{A\arabic{table}}
\setcounter{figure}{0}
\renewcommand{\thefigure}{A\arabic{figure}}
\setcounter{equation}{0}
\renewcommand{\theequation}{A\arabic{equation}}
\setcounter{theorem}{0}
\renewcommand {\thetheorem} {A\arabic{theorem}}

\section{Supplementary Appendix for ``Experimental Evaluation of
  Individualized Treatment Rules''}

\subsection{Estimation and Inference for Fixed ITRs with No Budget Constraint}
\label{app:nobudget}

\subsubsection{The Population Average Value}
\label{app:PAV}

For a fixed ITR with no budget constraint, the following unbiased
estimator of the population average value (equation~\eqref{eq:PAV}),
based on the experimental data $\cZ$, is used in the literature
\citep[e.g.,][]{qian:murp:11},
\begin{equation}
\hat\lambda_f(\cZ) \ = \ \frac{1}{n_1} \sum_{i=1}^n Y_i T_i f(\bX_i) +
\frac{1}{n_0} \sum_{i=1}^n Y_i(1-T_i)(1-f(\bX_i)). \label{eq:PAVest}
\end{equation}
Under Neyman's repeated sampling framework, it is straightforward to
derive the unbiasedness and variance of this estimator where the
uncertainty is based solely on the random sampling of units and the
randomization of treatment alone.  The results are summarized as the
following theorem.

\begin{theorem} {\sc (Unbiasedness and Variance of the Population Average Value
  Estimator)} \label{thm:PAVest} \spacingset{1} Under
  Assumptions~\ref{asm:SUTVA},\ref{asm:randomsample},~and~\ref{asm:comrand},
  the expectation and variance of the population average value estimator defined in
  equation~\eqref{eq:PAVest} are given by,
  \begin{eqnarray*}
    \E\{\hat\lambda_f(\cZ)\} -  \lambda_f& = & 0, \quad
                                               \V\{\hat\lambda_f(\cZ)\} \ = \  \frac{\E(S_{f1}^2)}{n_1} +
                                               \frac{\E(S_{f0}^2)}{n_0}
  \end{eqnarray*}
  where
  $S_{ft}^2 = \sum_{i=1}^n (Y_{fi}(t) - \overline{Y_f(t)})^2/(n-1)$
  with $Y_{fi}(t) = \mathbf{1}\{f(\bX_i)=1\}Y_i(t)$ and
  $\overline{Y_f(t)} = \sum_{i=1}^n Y_{fi}(t)/n$ for $t=\{0,1\}$.
\end{theorem}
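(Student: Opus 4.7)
The proof strategy is to rewrite the estimator as the sum of two i.i.d.\ sample means and compute its moments directly from the joint distribution of the sample and the treatment assignment. Using $Y_i = T_i Y_i(1) + (1-T_i) Y_i(0)$, define $a_i = f(\bX_i) Y_i(1)$ and $b_i = (1-f(\bX_i)) Y_i(0)$, so that
\begin{equation*}
\hat\lambda_f(\cZ) \ = \ \frac{1}{n_1}\sum_{i=1}^n T_i a_i + \frac{1}{n_0}\sum_{i=1}^n (1-T_i) b_i \ \equiv \ \bar a_T + \bar b_C,
\end{equation*}
identifying $a_i$ with $Y_{fi}(1)$ and $b_i$ with $Y_{fi}(0)$ in the theorem's notation. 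Assumption~\ref{asm:comrand} implies that $(T_1,\ldots,T_n)$ is a uniformly random subset of size $n_1$, independent of the sample, so $\E(T_i)=n_1/n$ and $\E(T_i T_j)=n_1(n_1-1)/\{n(n-1)\}$ for $i\ne j$; Assumption~\ref{asm:randomsample} then makes the $a_i$'s (and $b_i$'s) i.i.d.

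Unbiasedness is one line: by independence of $T_i$ from the sample, $\E(\bar a_T)=\E(a_i)$ and $\E(\bar b_C)=\E(b_i)$, whose sum is $\E[f(\bX_i)Y_i(1)+(1-f(\bX_i))Y_i(0)] = \E[Y_i(f(\bX_i))] = \lambda_f$.

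For the variance, I will compute the three second moments in turn. Expanding $\E(\bar a_T^2)$ into diagonal and off-diagonal contributions in $(i,j)$, plugging in the $T_i$-moments above, and using that the $a_i$ are i.i.d.\ gives the standard Neyman reduction $\V(\bar a_T) = \V(a_i)/n_1 = \E(S_{f1}^2)/n_1$; the same calculation delivers $\V(\bar b_C) = \E(S_{f0}^2)/n_0$. The step where something interesting happens is showing $\Cov(\bar a_T,\bar b_C)=0$ exactly. Here I expand
\begin{equation*}
\E[\bar a_T \bar b_C] \ = \ \frac{1}{n_1 n_0}\sum_i \E[T_i(1-T_i)]\,\E[a_i b_i] + \frac{1}{n_1 n_0}\sum_{i\ne j}\E[T_i(1-T_j)]\,\E[a_i b_j].
\end{equation*}
The diagonal term vanishes because $T_i(1-T_i)\equiv 0$, and the off-diagonal term reduces to $\E(a_i)\E(b_i)$ after substituting $\E[T_i(1-T_j)] = n_1 n_0/\{n(n-1)\}$ and using i.i.d.\ independence $\E[a_i b_j]=\E(a_i)\E(b_j)$. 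Since $\E(\bar a_T)\E(\bar b_C)$ is the same product, the covariance vanishes, and adding the three pieces produces the claimed formula.

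The only nonroutine content is this covariance cancellation; everything else is bookkeeping between finite-population sampling and i.i.d.\ super-population sampling. The reason the cancellation is exact (rather than $O(1/n)$) is structural: no unit ever enters both the treated sum and the control sum, so the ``overlap'' diagonal term that would normally carry the finite-sample correction is identically zero. No modeling assumption, continuity hypothesis, or asymptotic approximation is invoked at any point.
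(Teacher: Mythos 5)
Your argument is correct, and the paper itself omits the proof of this theorem ("Proof is straightforward and hence omitted"), so there is nothing to compare against beyond noting that your route — unbiasedness by independence of the assignment from the sample, the two marginal variances via the i.i.d.\ reduction $\V(\bar a_T)=\V(a_i)/n_1$, and the exact vanishing of $\Cov(\bar a_T,\bar b_C)$ from $T_i(1-T_i)\equiv 0$ together with $\E[T_i(1-T_j)]=n_1n_0/\{n(n-1)\}$ — is precisely the intended "straightforward" calculation. One small caveat: your identification of $b_i=(1-f(\bX_i))Y_i(0)$ with $Y_{fi}(0)$ requires reading the paper's definition $Y_{fi}(t)=\mathbf{1}\{f(\bX_i)=1\}Y_i(t)$ as the evidently intended $\mathbf{1}\{f(\bX_i)=t\}Y_i(t)$; under the literal definition the stated $\E(S_{f0}^2)/n_0$ term would not equal $\V(b_i)/n_0$, so you are (rightly) proving the corrected statement.
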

Proof is straightforward and hence omitted.

\subsubsection{The Population Average Prescriptive Effect (PAPE)}
\label{app:PAPEest}

To estimate the PAPE with no budget constraint
(equation~\eqref{eq:PAPE}), we propose the following estimator,
\begin{eqnarray}
  \hat\tau_f(\cZ) & = & \frac{n}{n-1}\l[\frac{1}{n_1} \sum_{i=1}^n Y_i T_i f(\bX_i) +
                   \frac{1}{n_0} \sum_{i=1}^n Y_i(1-T_i)(1-f(\bX_i)) \r.
                   \nonumber \\
                   & & \hspace{1.25in} \l. -
                   \frac{\hat{p}_f}{n_1} \sum_{i=1}^n Y_iT_i -
                   \frac{1-\hat{p}_f}{n_0} \sum_{i=1}^n Y_i(1-T_i)\r]  \label{eq:PAPEest}
\end{eqnarray}
where $\hat{p}_f = \sum_{i=1}^n f(\bX_i)/n$ is a sample estimate of
$p_f$, and the term $n/(n-1)$ is due to the finite sample
degree-of-freedom correction resulting from the need to estimate
$p_f$.  The following theorem proves the unbiasedness of this
estimator and derives its exact variance.

\begin{theorem}[Unbiasedness and Exact Variance of the PAPE
  Estimator] \label{thm:PAPEest} \spacingset{1} Under
  Assumptions~\ref{asm:SUTVA},~\ref{asm:randomsample},~and~\ref{asm:comrand},
  the expectation and variance of the PAPE estimator defined
  equation~\eqref{eq:PAPEest} are given by,
  \begin{eqnarray*}
    \E\{\hat\tau_f(\cZ)\} & = & \tau_f \\
    \V\{\hat\tau_f(\cZ)\} & = &  \frac{n^2}{(n-1)^2}\l[\frac{\E(\widetilde{S}_{f1}^2)}{n_1} +
                                \frac{\E(\widetilde{S}_{f0}^2)}{n_0} + \frac{1}{n^2} \l\{\tau_f^2 -n p_f(1-p_f)
                                \tau^2 + 2(n-1)(2p_f-1) \tau_f\tau \r\} \r]
  \end{eqnarray*}
  where
  $\widetilde{S}_{ft}^2 = \sum_{i=1}^n (\widetilde{Y}_{fi}(t) -
  \overline{\widetilde{Y}_f(t)})^2/(n-1)$ with
  $\widetilde{Y}_{fi}(t) = (f(\bX_i) - \hat{p}_f)Y_i(t)$, and
  $\overline{\widetilde{Y}_f(t)} = \sum_{i=1}^n
  \widetilde{Y}_{fi}(t)/n$ for $t=\{0,1\}$.
\end{theorem}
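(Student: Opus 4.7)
The plan is to reduce the problem to the classical Neyman difference-in-means setting via a single conditioning step. The key observation is that $\hat{p}_f$ and $f(\bX_i)$ are functions of the covariates alone, so conditional on $\{(\bX_i, Y_i(1), Y_i(0))\}_{i=1}^n$, the estimator $\hat\tau_f(\cZ)$ is, up to the factor $n/(n-1)$, exactly the difference-in-means estimator for a completely randomized experiment with transformed potential outcomes $\widetilde{Y}_{fi}(t) = (f(\bX_i) - \hat{p}_f)Y_i(t)$. I would exploit this reformulation throughout.

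For unbiasedness, I would first average over the treatment assignment using Assumption~\ref{asm:comrand} (so $\E[T_i \mid \bX, Y(1), Y(0)] = n_1/n$), which collapses the estimator to
\[
\E[\hat\tau_f(\cZ) \mid \bX, Y(1), Y(0)] \ = \ \frac{1}{n-1}\sum_{i=1}^n (f(\bX_i) - \hat{p}_f)\tau_i.
\]
Next, under Assumption~\ref{asm:randomsample}, I would expand $\hat{p}_f = n^{-1}\sum_j f(\bX_j)$ and separate the diagonal ($i = j$) from the off-diagonal ($i\neq j$, independent) contributions. The iid structure yields $\E\bigl[\sum_i (f(\bX_i) - \hat{p}_f)\tau_i\bigr] = (n-1)\,\Cov(f(\bX_i), \tau_i)$. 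Since a short rearrangement of~\eqref{eq:PAPE} identifies $\tau_f = \Cov(f(\bX_i), \tau_i)$, the $n/(n-1)$ factor in the estimator exactly compensates the $(n-1)/n$ shrinkage induced by the plug-in $\hat{p}_f$, giving $\E[\hat\tau_f(\cZ)] = \tau_f$.

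For the variance, I would apply the law of total variance. Conditional on $(\bX, Y(1), Y(0))$, Neyman's exact finite-sample variance formula for complete randomization applied to the transformed outcomes gives
\[
\V[\hat\tau_f(\cZ) \mid \bX, Y(1), Y(0)] \ = \ \frac{n^2}{(n-1)^2}\left(\frac{\widetilde{S}_{f1}^2}{n_1} + \frac{\widetilde{S}_{f0}^2}{n_0} - \frac{\widetilde{S}_{f,10}^2}{n}\right),
\]
where $\widetilde{S}_{f,10}^2$ is the sample variance of $\widetilde{Y}_{fi}(1) - \widetilde{Y}_{fi}(0) = (f(\bX_i) - \hat{p}_f)\tau_i$ across units. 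Averaging over the random sample produces the two $\E(\widetilde{S}_{ft}^2)/n_t$ terms stated in the theorem. It then remains to combine $-\E(\widetilde{S}_{f,10}^2)/n$ with the variance of the conditional mean, $\V\bigl[(n-1)^{-1}\sum_i (f(\bX_i) - \hat{p}_f)\tau_i\bigr]$, derived from the first part, into the compact correction $\frac{1}{n^2}\{\tau_f^2 - n p_f(1-p_f)\tau^2 + 2(n-1)(2p_f - 1)\tau_f\tau\}$.

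The main obstacle is the combinatorial bookkeeping in this last combination. Because $\hat{p}_f$ couples all units, the centered sum $\sum_i (f(\bX_i) - \hat{p}_f)\tau_i$ is naturally an unscaled sample covariance of $f(\bX_i)$ and $\tau_i$, and computing its variance requires expanding $\hat{p}_f$ into a quadruple sum and partitioning by the number of distinct indices among $\{i, j, i', j'\}$. Under iid sampling, each partition block reduces to products of marginal moments expressible in terms of $p_f$, $\tau$, $\tau_f$, and the second and third central moments of $f(\bX_i)$, which take the simple forms $p_f(1-p_f)$ and $p_f(1-p_f)(1-2p_f)$ since $f(\bX_i)$ is Bernoulli; this explains the appearance of the $p_f(1-p_f)$ and $2p_f - 1$ factors in the target expression. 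Collecting like terms with the Neyman correction $-\E(\widetilde{S}_{f,10}^2)/n$ should cause the cross-moments to telescope and leave exactly the stated remainder, completing the proof.
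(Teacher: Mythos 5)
Your proposal is correct and follows essentially the same route as the paper's proof: condition on the sampled triples so that the estimator becomes the Neyman difference-in-means for the transformed outcomes $\widetilde{Y}_{fi}(t)=(f(\bX_i)-\hat{p}_f)Y_i(t)$, derive unbiasedness from $\tau_f=\Cov(f(\bX_i),\tau_i)$ with the $n/(n-1)$ factor absorbing the plug-in shrinkage from $\hat{p}_f$, and obtain the variance by the law of total variance plus a moment computation for the cross-unit covariance induced by $\hat{p}_f$. The only cosmetic difference is that the paper invokes Lemma~1 of \citet{nadeau2000inference} to reduce the variance of the conditional mean to a single pairwise covariance $\Cov(\widetilde{Y}_{fi}(1)-\widetilde{Y}_{fi}(0),\widetilde{Y}_{fj}(1)-\widetilde{Y}_{fj}(0))$, $i\neq j$, rather than expanding the full quadruple sum as you propose.
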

Note that $\E(\widetilde{S}_{ft}^2)$ does not equal
$\V(\widetilde{Y}_{fi}(t))$ because the proportion of treated units
$p_f$ is estimated. The additional term in the variance accounts for
this estimation uncertainty of $p_f$. The variance of the proposed
estimator can be consistently estimated by replacing the unknown
terms, i.e., $p_f$, $\tau_f$, $\tau$, $\E(\widetilde{S}^2_{ft})$, with
their unbiased estimates, i.e., $\hat{p}_f$, $\hat\tau_f$,
\begin{equation*}
  \hat\tau \ = \ \frac{1}{n_1}\sum_{i=1}^n T_i Y_i -
  \frac{1}{n_0}\sum_{i=1}^n (1-T_i)Y_i, \quad {\rm and} \quad \widehat{\E(\widetilde{S}^2_{ft})} \
  = \ \frac{1}{n_t - 1}\sum_{i=1}^n \mathbf{1}\{T_i = t\}(\widetilde{Y}_{fi}
  - \overline{\widetilde{Y}_{ft}})^2,
\end{equation*}
where $\widetilde{Y}_{fi} = (f(\bX_i) - \hat{p}_f) Y_i$ and
$\overline{\widetilde{Y}_{ft}} = \sum_{i=1}^n \mathbf{1}\{T_i =
t\}\widetilde{Y}_{fi}/n_t$.

To prove Theorem~\ref{thm:PAPEest}, we first consider the sample average
prescription effect (SAPE),
\begin{equation}
  \tau_f^s \ = \ \frac{1}{n} \sum_{i=1}^n \l\{Y_i(f(\bX_i)) - \hat{p}_f
  Y_i(1) - (1-\hat{p}_f) Y_i(0)\r\}. \label{eq:SAPE}
\end{equation}
and its unbiased estimator,
\begin{equation}
  \hat\tau_f^s \ = \ \frac{1}{n_1} \sum_{i=1}^n Y_i T_i f(\bX_i) +
                   \frac{1}{n_0} \sum_{i=1}^n Y_i(1-T_i)(1-f(\bX_i)) -
                   \frac{\hat{p}_f}{n_1} \sum_{i=1}^n Y_iT_i -
                   \frac{1-\hat{p}_f}{n_0} \sum_{i=1}^n Y_i(1-T_i) \label{eq:SAPEest}
\end{equation}
This estimator differs from the estimator of the PAPE by a small
factor, i.e., $\hat\tau_f^s = (n-1)/n \hat\tau_f$.  The following
lemma derives the expectation and variance in the Neyman framework.
Thus, it only requires the randomization-based finite sample inference
and does not need Assumption~\ref{asm:randomsample}.
\begin{lemma}[Unbiasedness and Exact Variance of the Estimator for the
  SAPE] \label{lemma:SAPE} \spacingset{1} Under
  Assumptions~\ref{asm:SUTVA},~\ref{asm:randomsample},~and~\ref{asm:comrand},
  the expectation and variance of the estimator of the PAPE given in
  equation~\eqref{eq:SAPEest} for estimating the SAPE defined in
  equation~\eqref{eq:SAPE} are given by,
  \begin{eqnarray*}
    \E(\hat\tau_f^s \mid \cO_n)
    & =  & \tau_f^s \\
    \V(\hat\tau_f^s \mid \cO_n) &
     =  &  \frac{1}{n} \l( \frac{n_0}{n_1} \widetilde{S}_{f1}^2 + \frac{n_1}{n_0} \widetilde{S}_{f0}^2
        + 2 \widetilde{S}_{f01}\r)
  \end{eqnarray*}
  where  $\cO_n= \{Y_i(1), Y_i(0),
  \bX_i\}_{i=1}^n$ and
  \begin{eqnarray*}
                   \widetilde{S}_{f01} \ = \
                   \frac{1}{n-1} \sum_{i=1}^n
                    (\widetilde{Y}_{fi}(0) -  \overline{\widetilde{Y}_{fi}(0)}) (\widetilde{Y}_{fi}(1) - \overline{\widetilde{Y}_{fi}(1)}).
  \end{eqnarray*}
\end{lemma}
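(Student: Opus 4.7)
The plan is to recognize that, once we condition on the sample $\cO_n$, the quantity $\hat\tau_f^s$ is essentially a classical Neyman difference-in-means estimator applied to transformed potential outcomes. The critical observation is that $\hat{p}_f = n^{-1}\sum_{i=1}^n f(\bX_i)$ depends only on the covariates and is therefore a constant given $\cO_n$; it can be moved inside sums and treated as deterministic under the randomness coming from $\{T_i\}$ alone under Assumption~\ref{asm:comrand}.

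First, I would combine the first and third terms of $\hat\tau_f^s$ in equation~\eqref{eq:SAPEest}, and likewise the second and fourth terms. Using the identities $T_i Y_i = T_i Y_i(1)$ and $(1-T_i)Y_i = (1-T_i)Y_i(0)$, which follow from Assumption~\ref{asm:SUTVA} and the consistency rule $Y_i=Y_i(T_i)$, this rewriting yields
\begin{equation*}
  \hat\tau_f^s \ = \ \frac{1}{n_1}\sum_{i=1}^n T_i\widetilde{Y}_{fi}(1) \ - \ \frac{1}{n_0}\sum_{i=1}^n (1-T_i)\widetilde{Y}_{fi}(0),
\end{equation*}
with $\widetilde{Y}_{fi}(t)=(f(\bX_i)-\hat{p}_f)Y_i(t)$ as defined in Theorem~\ref{thm:PAPEest}. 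This is exactly the Neyman difference-in-means estimator for the ``potential outcomes'' $\widetilde{Y}_{fi}(1)$ and $\widetilde{Y}_{fi}(0)$ under complete randomization.

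Next, I would invoke the classical Neyman results for completely randomized experiments. For unbiasedness, $\E(\hat\tau_f^s\mid\cO_n) = n^{-1}\sum_{i=1}^n \widetilde{Y}_{fi}(1) - n^{-1}\sum_{i=1}^n \widetilde{Y}_{fi}(0) = n^{-1}\sum_{i=1}^n (f(\bX_i)-\hat{p}_f)(Y_i(1)-Y_i(0))$, and a one-line algebraic manipulation (expanding $Y_i(f(\bX_i))$ as $f(\bX_i)Y_i(1)+(1-f(\bX_i))Y_i(0)$ in equation~\eqref{eq:SAPE}) shows that this equals $\tau_f^s$. For the variance, the standard Neyman formula gives
\begin{equation*}
  \V(\hat\tau_f^s\mid\cO_n) \ = \ \frac{\widetilde{S}_{f1}^2}{n_1} \ + \ \frac{\widetilde{S}_{f0}^2}{n_0} \ - \ \frac{\widetilde{S}_{f\tau}^2}{n},
\end{equation*}
where $\widetilde{S}_{f\tau}^2$ is the sample variance of $\widetilde{Y}_{fi}(1)-\widetilde{Y}_{fi}(0)$. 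Applying the identity $\widetilde{S}_{f\tau}^2 = \widetilde{S}_{f1}^2 + \widetilde{S}_{f0}^2 - 2\widetilde{S}_{f01}$ and using $1/n_t - 1/n = n_{1-t}/(nn_t)$ rearranges this into the claimed form.

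There is no substantive obstacle, and the argument is essentially a reduction to known randomization inference. The one point that requires care is the centering: because $\widetilde{Y}_{fi}(t)$ is defined by subtracting $\hat{p}_f$ from $f(\bX_i)$ (rather than subtracting the sample mean of $\widetilde{Y}_{fi}(t)$ itself), the quantities $\sum_i \widetilde{Y}_{fi}(t)$ are generally nonzero. Consequently, when invoking the Neyman variance one must use the sample means $\overline{\widetilde{Y}_f(t)}$ in defining $\widetilde{S}_{ft}^2$ and $\widetilde{S}_{f01}$, exactly as stated in the lemma, rather than centering at zero. With that caveat the proof is a direct transcription of the classical Neyman derivation, and no additional modeling or asymptotic assumption is needed.
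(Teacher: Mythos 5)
Your proof is correct and follows essentially the same route as the paper: both reduce the problem to randomization inference over $T_i$ alone, treating $\hat{p}_f$ and the transformed outcomes $\widetilde{Y}_{fi}(t)$ as fixed given $\cO_n$. The only cosmetic difference is that you invoke the classical Neyman variance formula and rearrange it algebraically, whereas the paper rederives that formula from scratch via the moments of $D_i = T_i - n_1/n$; the underlying computation is identical.
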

\begin{proof}
  We begin by computing the expectation with respect to the
  experimental treatment assignment, i.e., $T_i$,
\begin{eqnarray*}
\E(\hat\tau_f^s \mid \cO_n)&= & \E\left\{\frac{1}{n_1}\sum_{i=1}^n
                           f(\bX_i)T_i Y_i(1) +\frac{1}{n_0}\sum_{i=1}^n
                            (1-f(\bX_i))(1-T_i) Y_i(0) \r.\\
  & & \hspace{1in} \l. -\frac{\hat{p}_f}{n_1}\sum_{i=1}^n
    T_iY_i(1)-\frac{1-\hat{p}_f}{n_0}\sum_{i=1}^n (1-T_i) Y_i(0) \
      \Biggl | \
      \cO_n\right\} \\
  & = & \frac{1}{n} \sum_{i=1}^n
        Y_i(1)f(\bX_i)+\frac{1}{n}\sum_{i=1}^nY_i(0)(1-f(\bX_i))-\frac{\hat{p}_f}{n}\sum_{i=1}^n
        Y_i(1)-\frac{1- \hat{p}_f}{n}\sum_{i=1}^n Y_i(0) \\
  & = & \tau_f^s
\end{eqnarray*}
To derive the variance, we first rewrite the proposed estimator as,
\begin{eqnarray*}
  \hat\tau_f^s & = & \tau_f^s + \sum_{i=1}^n D_i (f(\bX_i) - \hat{p}_f)
                   \l(\frac{Y_i(1)}{n_1} +
                   \frac{Y_i(0)}{n_0} \r)
\end{eqnarray*}
where $D_i = T_i - n_1/n$.  Thus, noting $\E(D_i)=0$, $\E(D_i^2) =
n_0n_1/n^2$, and $\E(D_iD_j) = -n_0n_1/\{n^2(n-1)\}$ for $i\ne j$,
after some algebra, we have,
\begin{eqnarray*}
  \V(\hat\tau_f^s \ \mid \cO_n)  & = &  \V(\hat\tau_f^s - \tau_f^s \mid \cO_n)
  \ = \ \E\l[\l\{ \sum_{i=1}^n D_i
                   \l(\frac{\widetilde{Y}_{fi}(1)}{n_1} +
        \frac{\widetilde{Y}_{fi}(0)}{n_0} \r) \r\}^2\ \Biggl | \ \cO_n \r] \\
 & = & \frac{1}{n} \l( \frac{n_0}{n_1} \widetilde{S}_{f1}^2 + \frac{n_1}{n_0} \widetilde{S}_{f0}^2
        + 2 \widetilde{S}_{f01}\r)
\end{eqnarray*}
\end{proof}

Now, we prove Theorem~\ref{thm:PAPEest}.  Using Lemma~\ref{lemma:SAPE}
and the law of iterated expectation, we have,
\begin{equation*}
  \E(\hat\tau_f ^s) \ = \ \E\l[\frac{1}{n} \sum_{i=1}^n \l\{Y_i(f(\bX_i)) - \hat{p}_f
  Y_i(1) - (1-\hat{p}_f) Y_i(0)\r\} \r].
\end{equation*}
We compute the following expectation for $t=0,1$,
\begin{eqnarray*}
  \E\l( \sum_{i=1}^n \hat{p}_f
  Y_i(t)\r) & = & \E\l( \sum_{i=1}^n \frac{\sum_{j=1}^n f(\bX_j)}{n}
                  Y_i(t)\r)
  \ = \ \frac{1}{n}\E\l\{ \sum_{i=1}^n f(\bX_i)  Y_i(t)  + \sum_{i=1}^n
        \sum_{j \ne i} f(\bX_j) Y_i(t)\r\} \\
  & = & \E\{ f(\bX_i) Y_i(t)\} + (n-1) p_f \E(Y_i(t)).
\end{eqnarray*}
Putting them together yields the following bias expression,
\begin{eqnarray*}
  \E(\hat\tau_f ^s) & = & \E\l[\{Y_i(f(\bX_i)) -
  \frac{1}{n}f(\bX_i)\tau_i - \frac{n-1}{n} p_f \tau_i - Y_i(0)]\r] \\
  & = & \tau_f - \frac{1}{n}\E\l[\{f(\bX_i)Y_i(1) - (1-f(\bX_i))Y_i(0)\} -
        \{p_fY_i(1) - (1-p_f)Y_i(0)\}\r] \\
  & = & \tau_f - \frac{1}{n} {\rm Cov}(f(\bX_i), \tau_i).
\end{eqnarray*}
where $\tau_i = Y_i(1) - Y_i(0)$.  We can further rewrite the bias as,
\begin{eqnarray}
   - \frac{1}{n} {\rm Cov}(f(\bX_i), \tau_i)
  & = & \frac{1}{n} p_f \{\E(\tau_i \mid f(\bX_i) = 1) -
        \tau\} \nonumber \\
  & = & \frac{1}{n} p_f (1-p_f) \{\E(\tau_i \mid f(\bX_i) =
        1) - \E(\tau_i \mid f(\bX_i) = 0)\} \nonumber\\
  & = & \frac{\tau_f}{n}. \label{eq:cov.tau}
\end{eqnarray}
where $\tau= \E(Y_i(1)-Y_i(0))$. This implies the estimator for the
PAPE is unbiased, i.e., $\E(\hat\tau_f) = \tau_f$.

To derive the variance, Lemma~\ref{lemma:SAPE}
implies,
\begin{equation}
  \V(\hat\tau_f) \ = \ \frac{n^2}{(n-1)^2}\l[\V\l(\frac{1}{n} \sum_{i=1}^n \{\widetilde{Y}_{fi}(1) - \widetilde{Y}_{fi}(0)\}\r) +
  \E\l\{\frac{1}{n}\l(\frac{n_0}{n_1} \widetilde{S}_{f1}^2 + \frac{n_1}{n_0}
  \widetilde{S}_{f0}^2 + 2 \widetilde{S}_{f01}\r)\r\}\r]. \label{eq:var.decomp}
\end{equation}
Applying Lemma~1 of \citet{nadeau2000inference} to the first term
within the square brackets yields,
\begin{equation}
  \V\l(\frac{1}{n} \sum_{i=1}^n \{\widetilde{Y}_{fi}(1) - \widetilde{Y}_{fi}(0)\}\r)
  \ = \ {\rm Cov}(\widetilde{Y}_{fi}(1) -\widetilde{Y}_{fi}(0), \widetilde{Y}_{fi}(1) -
  \widetilde{Y}_{fi}(0)) + \frac{1}{n}   \E(\widetilde{S}_{f1}^2  + \widetilde{S}_{f0}^2 - 2 \widetilde{S}_{f01}),  \label{eq:cov.decomp}
\end{equation}
where $i \ne j$.  Focusing on the covariance term, we have,
\begin{eqnarray*}
  & & {\rm Cov}(\widetilde{Y}_{fi}(1) -
  \widetilde{Y}_{fi}(0), \widetilde{Y}_{fi}(1) -
  \widetilde{Y}_{fi}(0)) \\
  & = & {\rm Cov}\l(\l\{f(\bX_i) - \frac{1}{n}\sum_{i^\prime=1}^n
        f(\bX_{i^\prime})\r\}\tau_i, \l\{f(\bX_j) -
        \frac{1}{n} \sum_{j^\prime=1}^n
        f(\bX_{j^\prime})\r\}\tau_j\r) \\
  & = & -2{\rm Cov}\l(\frac{n-1}{n}f(\bX_i)\tau_i,
        \frac{1}{n}f(\bX_i)\tau_j\r) +  \sum_{i^\prime \ne i, j}{\rm Cov}\l(
    \frac{1}{n} f(\bX_{i^\prime})\tau_i, \frac{1}{n}
      f(\bX_{i^\prime})\tau_j\r) \\
  & & +  2 \sum_{i^\prime \ne i, j}{\rm Cov}\l(
  \frac{1}{n} f(\bX_{j})\tau_i, \frac{1}{n}
  f(\bX_{i^\prime})\tau_j\r) +{\rm Cov}\l(
  \frac{1}{n} f(\bX_{j})\tau_i, \frac{1}{n}
  f(\bX_{i})\tau_j\r) \\
  & = & -\frac{2(n-1) \tau}{n^2}{\rm Cov}\l(f(\bX_i),
        f(\bX_i)\tau_i \r) + \frac{(n-2)\tau^2}{n^2} \V(f(\bX_i)) \\
  & & + \frac{2(n-2)\tau}{n^2}
      p_f {\rm Cov}\l(f(\bX_i), \tau_i\r) + \frac{1}{n^2} \l\{{\rm Cov}^2\l(f(\bX_i), \tau_i\r)+2p_f\tau {\rm Cov}\l(f(\bX_i), \tau_i\r) \r\}\\
  & = & \frac{1}{n^2} {\rm Cov}^2\l(f(\bX_i), \tau_i\r) + \frac{(n-2)\tau^2}{n^2} p_f(1-p_f) + \frac{2(n-1)\tau}{n^2} {\rm Cov}\l(f(\bX_i),
    (p_f-f(\bX_i))\tau_i \r) \\
  & = & \frac{1}{n^2} \l[\tau_f^2 + (n-2) p_f(1-p_f)
        \tau^2 + 2(n-1) \tau \l\{p_f
      \tau_f - (1-p_f)\E(f(\bX_i)\tau_i)\r\}\r] \\
 & = & \frac{1}{n^2} \l[\tau_f^2 + (n-2) p_f(1-p_f)
        \tau^2+ 2(n-1) \tau \l\{\tau_f(2p_f-1)
       - (1-p_f)p_f\tau\r\}\r] \\
 & = & \frac{1}{n^2} \l\{\tau_f^2 -n p_f(1-p_f)
        \tau^2 + 2(n-1)(2p_f-1) \tau_f\tau \r\},
\end{eqnarray*}
where the third equality follows from the formula for the covariance
of products of two random variables \citep{bohr:gold:69}.  Finally,
combining this result with
equations~\eqref{eq:var.decomp}~and~\eqref{eq:cov.decomp} yields,
\begin{eqnarray*}
   \V(\hat\tau_f)
  &  = & \frac{n^2}{(n-1)^2}
                       \l[\frac{\E(\widetilde{S}^2_{f1})}{n_1} +
         \frac{E(\widetilde{S}^2_{f0})}{n_0} + \frac{1}{n^2} \l\{\tau_f^2 -n p_f(1-p_f)
         \tau^2 + 2(n-1)(2p_f-1) \tau_f\tau \r\} \r].
\end{eqnarray*}
\qed

A potential complication with this estimator in practice is that its
estimate (along with the variance) would change under an additive
transformation, i.e., $Y_i(t) \to Y_i(t)+\delta$ for $t=0,1$ and a
given constant $\delta$. This issue is not due to the specific
construction of the proposed estimator. It instead reflects the
fundamental issue of many prescription effects including the population average value and
PAPE that they cannot be defined solely in terms of multiples of
$Y_i(1)-Y_i(0)$ (see Appendix~\ref{app:PAVex} for a numerical
example).  One solution is to center the outcome variable such that
$\sum_{i=1}^n Y_iT_i/n_1 + \sum_{i=1}^n Y_i(1-T_i)/n_0=0$ holds.  This
solution is motivated by the fact that when the condition holds in the
population (i.e. $\E(Y_i(1)+Y_i(0))=0$), the variance of the PAPE
estimator is minimized.

\subsubsection{A Numerical Example Showing the Lack of Additive
  Invariance for the Population Average Value}
\label{app:PAVex}

\begin{table}[t!]
  \centering\setlength{\tabcolsep}{2pt}
  \spacingset{1}
  \begin{tabular}{|c|ccccc|}
    \hline
    Individual & $T_i$ & $f(\bX_i)$& $Y_i$ & $Y_i(0)$ & $Y_i(1)$\\\hline
    A & 1 & 1& 2 & 0 & 2 \\
    B  & 1 & 0& 3 & 1 & 3\\
    C  & 0 & 0 & $-$1 & $-$1 & $-$1\\
    D  & 0 & 1&  1 & 1 & 0 \\
    E  & 1 & 0& 3 & 0 & 3\\\hline
  \end{tabular}
  \caption{A Numerical Example for Binary Treatment Assignment and Outcomes} \label{tb:sample_pav_table}
\end{table}

Consider an ITR $f: \cX \to \{0,1\}$, and we would like to know its
population average value.  Table~\ref{tb:sample_pav_table} shows
an numerical example with the observed outcome $Y_i$, the ITR
$f(X_i)$, the actual assignment $T_i$, and the potential outcomes
$Y_i(0), Y_i(1)$.  Then, in this example, we have $n_1=3$ (A,B,E),
$n_0=2$ (C,D), and the population average value estimator would be:
\begin{align*}
\hat\lambda_f(\cZ) \ &= \  \frac{1}{n_1} \sum_{i=1}^n Y_i T_i f(\bX_i) +
\frac{1}{n_0} \sum_{i=1}^n Y_i(1-T_i)(1-f(\bX_i))\\ & = \ \frac{1}{3} (1 \cdot 2 + 0 \cdot 3 + 0 \cdot 3)+ \frac{1}{2} (1 \cdot -1 + 0 \cdot 1)\\& = \frac{1}{6}
\end{align*}
Now let us consider an additive transformation
$Y_i(t) \to Y_i(t) + 1:=Y_i'(t)$ for $t =0,1$, where every outcome
value is raised by 1. Then, its population average value estimator is
now:
 \begin{align*}
 \hat\lambda'_f(\cZ) \ &= \  \frac{1}{n_1} \sum_{i=1}^n Y_i' T_i f(\bX_i) +
 \frac{1}{n_0} \sum_{i=1}^n Y_i'(1-T_i)(1-f(\bX_i))\\ & = \ \frac{1}{3} (1 \cdot 3 + 0 \cdot 4 + 0 \cdot 4)+ \frac{1}{2} (1 \cdot 0 + 0 \cdot 2)\\& = 1
 \end{align*}
 Note that the difference
 $\hat\lambda'_f(\cZ)-\hat\lambda_f(\cZ)=\frac{5}{6}\neq 1$ does not
 equal to the amount of additive transformation. The problem arises
 because they are not multiples of $Y_i(1)-Y_i(0)$ but rather they
 depend on what the actual assignments of the ITR.

\subsection{Proof of Theorem~\ref{thm:PAPEpest}}
\label{app:PAPEpest}

We begin by deriving the variance.  The derivation proceeds in the
same fashion as the one for Theorem~\ref{thm:PAPEest}.  The main
difference lies in the derivation of the covariance term, which we
detail below.  First, we note that,
\begin{eqnarray*}
  \Pr(f(\bX_i, \hat{c}_p(f))=1)
	& = & \int^{\infty}_{-\infty} \Pr(f(\bX_i, c)=1 \mid
              \hat{c}_p(f)=c) P(\hat{c}_p(f)=c) \d c\\
	& = & \int^{\infty}_{-\infty} \frac{\lfloor np \rfloor}{n} P(\hat{c}_p(f)=c) \d c\\
	& = & \frac{\lfloor np \rfloor}{n},
\end{eqnarray*}
where the second equality follows from the fact that once conditioned
on $\hat{c}_p(f) = c$, exactly $\lfloor np \rfloor$ out of $n$ units will
be assigned to the treatment condition.  Given this result, we can
compute the covariance as follows,
\begin{eqnarray*}
	& & {\rm Cov}(\widetilde{Y}_i(1) - \widetilde{Y}_i(0), \widetilde{Y}_j(1) - \widetilde{Y}_j(0)) \\
	& = & {\rm Cov}\l\{ \left(f(\bX_i,\hat{c}_p(f))- p\right)\tau_i, \left(f(\bX_j,\hat{c}_p(f))- p\right)\tau_j\r\} \\
	& = & {\rm Cov}\l\{ f(\bX_i,\hat{c}_p(f))\tau_i, f(\bX_j,\hat{c}_p(f))\tau_j\r\} - 2p{\rm Cov}\l(\tau_i, f(\bX_j,\hat{c}_p(f))\tau_j\r)\\
	& = & \frac{n \lfloor np \rfloor (\lfloor np \rfloor -1)-{\lfloor np\rfloor}^2(n-1)}{n^2(n-1)}\E(\tau_i\mid f(\bX_i,\hat{c}_p(f))=1)^2- 2p{\rm Cov}\l(\tau_i, f(\bX_j,\hat{c}_p(f))\tau_j\r)\\
	& = & \frac{\lfloor np \rfloor(\lfloor np \rfloor-n)}{n^2(n-1)}\kappa_1(\hat{c}_p(f))^2 + \frac{2p\lfloor np \rfloor(n-\lfloor np \rfloor)}{n^2(n-1)}\l(\kappa_1(\hat{c}_p(f))^2-\kappa_1(\hat{c}_p(f))\kappa_0(\hat{c}_p(f))\r)\\
	& = & (2p-1)\frac{\lfloor np \rfloor(n-\lfloor np \rfloor)}{n^2(n-1)}\kappa_1(\hat{c}_p(f))^2-\frac{2p\lfloor np \rfloor(n-\lfloor np \rfloor)}{n^2(n-1)}\kappa_1(\hat{c}_p(f))\kappa_0(\hat{c}_p(f))\\
	& = & \frac{\lfloor np \rfloor(n-\lfloor np \rfloor)}{n^2(n-1)}\l\{(2p-1)\kappa_1(\hat{c}_p(f))^2-2p\kappa_1(\hat{c}_p(f))\kappa_0(\hat{c}_p(f))\r\}.
\end{eqnarray*}
Combining this covariance result with the expression for the marginal
variances yields the desired variance expression for
$\hat{\tau}_f(\hat{c}_p(f))$.

Next, we derive the upper bound of bias.  Using the same technique as
the proof of Theorem~\ref{thm:PAPEest}, we can rewrite the expectation
of the proposed estimator as,
\begin{equation*}
\E(\hat \tau_f(c_p)) \ = \ \E\l[\frac{1}{n} \sum_{i=1}^n \l\{Y_i\l(f(\bX_i,\hat{c}_p(f))\r) - pY_i(1)-(1-p)Y_i(0)\r\}
\r].
\end{equation*}
Now, define $F(c)=\mc{P}(s(\bX_i)\leq c)$. Without loss of generality,
assume $\hat{c}_p(f) > c_p$ (If this is not the case, we simply switch
the upper and lower limits of the integrals below).  Then, the bias of
the estimator is given by,
\begin{eqnarray*}
	\l|\E(\hat \tau_f(\hat{c}_p(f)))-\tau_f(c_p)\r|&= & \l|\E\l[\frac{1}{n} \sum_{i=1}^n \l\{Y_i\l(f(\bX_i,\hat{c}_p(f))\r)-Y_i\l(f(\bX_i,c_p)\r)\r\}\r]\r| \\
	& = & \l|\E_{\hat{c}_p(f)}\l[\int^{\hat{c}_p(f)}_{c_p}
              \E(\tau_i \mid s(\bX_i)=c) \d F(c)\r]\r|\\
	& = & \l|\E_{F(\hat{c}_p(f))}\l[\int^{F(\hat{c}_p(f))}_{F(c_p)}
              \E(\tau_i\mid s(\bX_i)=F^{-1}(x)) \d x\r]\r|\\
	& \leq & \E_{F(\hat{c}_p(f))}\l[\l|F(\hat{c}_p(f))-(1-p)\r| \times \max_{c \in
                 [c_p,\hat{c}_p(f)]} \l | \E(\tau_i\mid s(\bX_i)=c)\r|\r].
\end{eqnarray*}
By the definition of $\hat{c}_p(f)$, $F(\hat{c}_p(f))$ is the
$(n-\lfloor np\rfloor)$th order statistic of $n$ independent uniform
random variables, and thus follows the Beta distribution with the
shape and scale parameters equal to $n - \lfloor np \rfloor$ and
$\lfloor np \rfloor+1$, respectively. For the special case where
$p=1$, we define the $0$th order statistic of $n$ uniform random
variables to be 0, and by extension also define the ``beta
distribution'' with shape parameter $\leq 0$ to be $H(x)$ where $H(x)$
is the Heaviside step function.  Therefore, we have,
\begin{equation}
\mc{P}(|F(\hat{c}_p(f))-p|>\epsilon) \ = \ 1-B(1-p+\epsilon, n - \lfloor np \rfloor, \lfloor np \rfloor+1)+B(1-p-\epsilon, n-\lfloor np \rfloor,
\lfloor np \rfloor+1), \label{eq:Fcp}
\end{equation}
where $B(\epsilon, \alpha, \beta)$ is the incomplete beta function,
i.e.,
\begin{equation*}
  B(\epsilon,\alpha,\beta) \ = \ \int^\epsilon_0 t^{\alpha -1} (1-t)^{\beta -1} \d t.
\end{equation*}
Combining with the result above, the desired result follows. \qed

\subsection{Estimation and Inference of the Population Average
  Prescriptive Difference of Fixed ITRs}
\label{app:PAPDpest}

\begin{theorem} {\sc (Bias and Variance of the PAPD Estimator
    with a Budget Constraint)} \label{thm:PAPDpest} \spacingset{1}
  Under
  Assumptions~\ref{asm:SUTVA},~\ref{asm:randomsample},~and~\ref{asm:comrand},
  the bias of the proposed estimator of the PAPD with a budget
  constraint $p$ defined in equation~\eqref{eq:PAPDpest} can be
  bounded as follows,
	\begin{align*}
	\mc{P}_{\hat{c}_{p}(f),\hat{c}_{p}(g)}(|\E\{\widehat \Delta_p(f,g,\cZ) -\Delta_p(f,g)\mid \hat{c}_{p}(f),\hat{c}_{p}(g) \}|\geq
	\epsilon)  \ &\leq \  1-2B(1-p+\gamma_p(\epsilon), n-\lfloor np
	\rfloor, \lfloor np \rfloor+1)\\&+2B(1-p-\gamma_p(\epsilon),
	n-\lfloor np \rfloor, \lfloor np \rfloor+1),
	\end{align*}
	where any given constant $\epsilon > 0$, $B(\epsilon, \alpha,
        \beta)$ is the incomplete beta function (if $\alpha = 0$ and
        $\beta > 0$, we set $B(\epsilon,\alpha, \beta):=H(\epsilon)$
        for all $\epsilon$ where $H(\epsilon)$ is the Heaviside step
        function), and
	\begin{equation*}
	\gamma_p(\epsilon)\ = \ \frac{\epsilon}{\max_{c  \in
			[c_p(f)-\epsilon,\  c_p(f)+\epsilon],\ d  \in
			[c_p(g)-\epsilon,\  c_p(g)+\epsilon]} \{\E(\tau_i\mid s_f(\bX_i)=c), \E(\tau_i\mid s_g(\bX_i)=d)\}}.
	\end{equation*}
	The variance of the estimator is,
	\begin{eqnarray*}
		\V(\widehat \Delta_p(f,g,\cZ)) & = &
		\frac{\E(S_{fgp1}^2)}{n_1} +
		\frac{\E(S_{fgp0}^2)}{n_0}  + \frac{\lfloor np
                                                        \rfloor(\lfloor
                                                        np
                                                        \rfloor-n)}{n^2(n-1)}(\kappa_{f1}(p)^2
                                                        +\kappa_{g1}(p)^2)
          \\
          & & - 2\left(\Pr(f(\bX_i,\hat{c}_p(f))=g(\bX_i,\hat{c}_p(g))=1) -\frac{\lfloor np
          	\rfloor^2}{n^2}\right)\kappa_{f1}(p)\kappa_{g1}(p),
	\end{eqnarray*}
	where
        $S_{fgpt}^2 = \sum_{i=1}^n (Y_{fgpi}(t) -
        \overline{Y_{fgp}(t)})^2/(n-1)$ with
        $Y_{fgpi}(t) = \left\{f(\bX_i,\hat{c}_p(f)) -
          g(\bX_i,\hat{c}_p(g))\right\}Y_i(t)$ and
        $\overline{Y_{fgp}(t)}=\sum_{i=1}^n Y_{fgpi}(t)/n$ for
        $t=0,1$.
\end{theorem}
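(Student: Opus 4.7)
The proof will follow the template of Theorem~\ref{thm:PAPEpest} with two new ingredients. My starting observation is the algebraic identity
$$\widehat{\Delta}_p(f,g,\cZ) \ = \ \hat\tau_{fp}(\cZ) - \hat\tau_{gp}(\cZ),$$
which holds because the random-treatment-rule adjustment terms in~\eqref{eq:PAPEpest} depend only on $p$ and the observed data, not on the ITR or its estimated threshold, and therefore cancel upon subtraction.

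For the bias bound, linearity of expectation gives the conditional bias as $\E\{\hat\tau_{fp}-\tau_{fp} \mid \hat c_p(f)\} - \E\{\hat\tau_{gp}-\tau_{gp} \mid \hat c_p(g)\}$. Each summand is controlled exactly as in the proof of Theorem~\ref{thm:PAPEpest}: by $|F_s(\hat c_p(s))-(1-p)|$ times the local CATE maximum, where $F_s(\hat c_p(s))$ is the $(n-\lfloor np\rfloor)$-th order statistic of $n$ i.i.d. Uniform$(0,1)$ variables and hence has the stated Beta distribution. A union bound over the two estimated thresholds produces the factor-of-two tail bound in the statement.

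For the variance, I replay the proof of Theorem~\ref{thm:PAPEpest} with $h_i \;:=\; f(\bX_i,\hat c_p(f)) - g(\bX_i,\hat c_p(g)) \in \{-1,0,1\}$ playing the role of $f(\bX_i,\hat c_p(f))-p$. The randomization variance conditional on the potential outcomes contributes the $\E(S^2_{fgp1})/n_1 + \E(S^2_{fgp0})/n_0$ terms via the standard Nadeau--Bengio decomposition (cf.\ Lemma~\ref{lemma:SAPE}). The cross-unit covariance $\Cov(h_i\tau_i, h_j\tau_j)$ for $i\ne j$ expands via $h_ih_j = f_if_j - f_ig_j - g_if_j + g_ig_j$; the two ``diagonal'' pieces $f_if_j$ and $g_ig_j$ each contribute $\frac{\lfloor np\rfloor(\lfloor np\rfloor-n)}{n^2(n-1)}\kappa_{\cdot 1}(p)^2$ by the identical algebraic simplification used in Theorem~\ref{thm:PAPEpest}. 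For the two cross pieces $-f_ig_j - g_if_j$, I exploit the deterministic sample-level identity $\sum_i f_ig_i + \sum_{i \ne j} f_ig_j = \lfloor np\rfloor^2$ (valid since $\sum_i f_i = \sum_i g_i = \lfloor np\rfloor$ in any sample) to re-express $\Pr(f_i=1, g_j=1)$ for $i\ne j$ in terms of the same-unit overlap probability $\Pr(f(\bX_i,\hat c_p(f)) = g(\bX_i,\hat c_p(g)) = 1)$, which is precisely the quantity appearing in the statement.

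The main obstacle is this cross-unit joint probability: because $\hat c_p(f)$ and $\hat c_p(g)$ are computed from the same covariate sample, $f_i$ and $g_j$ fail to be independent even when $i\ne j$, and no generic closed form is available without additional structure on the two scoring rules. The sum-constraint identity above is the key trick that sidesteps this obstacle by collapsing everything onto the same-unit overlap probability, which can then be consistently estimated by $n^{-1}\sum_i f(\bX_i,\hat c_p(f))\,g(\bX_i,\hat c_p(g))$ in the test sample.
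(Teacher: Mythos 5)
Your route is essentially the paper's: the paper also treats $\widehat\Delta_p(f,g,\cZ)$ as the PAPE machinery applied to $\{f(\bX_i,\hat{c}_p(f))-g(\bX_i,\hat{c}_p(g))\}Y_i(t)$, and the only computation it displays is exactly your expansion of the cross-unit covariance into the $ff$, $gg$, and mixed pieces, with the two diagonal pieces each contributing $\lfloor np\rfloor(\lfloor np\rfloor-n)\kappa_{\cdot 1}(p)^2/\{n^2(n-1)\}$ as you say. Where you go beyond the paper is the mixed piece: the paper stops at $-2\,\Cov\bigl(f(\bX_i,\hat{c}_p(f))\tau_i,\,g(\bX_j,\hat{c}_p(g))\tau_j\bigr)$ for $i\neq j$ and simply rewrites it with the same-unit probability $\Pr(f(\bX_i,\hat{c}_p(f))=g(\bX_i,\hat{c}_p(g))=1)$, silently conflating a cross-unit joint probability with a same-unit one. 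Your sum-constraint identity is the correct way to make that step honest, but note what it actually yields: $\Pr(f_i=1,g_j=1)-\lfloor np\rfloor^2/n^2=-\{\Pr(f_i=g_i=1)-\lfloor np\rfloor^2/n^2\}/(n-1)$, so your cross term comes out as $+\tfrac{2}{n-1}\bigl(\Pr(f_i=g_i=1)-\lfloor np\rfloor^2/n^2\bigr)\kappa_{f1}(p)\kappa_{g1}(p)$, which does not literally match the printed $-2(\cdots)$ term. The two agree only if the probability in the theorem is read as the cross-unit quantity (an index slip in the statement), and that reading is the only one consistent with the paper's own conservative bound $\lfloor np\rfloor\max\{\lfloor np\rfloor,\,n-\lfloor np\rfloor\}/\{n^2(n-1)\}$, whose $1/(n-1)$ scaling is precisely what your identity produces. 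So your derivation is right; just be prepared for your final formula to differ from the printed one by this $-1/(n-1)$ reparametrization of the joint probability.

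On the bias, the paper offers no argument for this theorem at all, so your union bound over the two order statistics is the natural (and essentially the intended) route. Done carefully, however, the triangle inequality forces you to split $\epsilon$ between the two thresholds, so the Beta arguments should involve $\gamma_p(\epsilon)/2$ and the leading constant becomes $2$ rather than $1$; the constants as printed do not follow from a plain union bound, and you should not expect your sketch to reproduce them exactly.
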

To estimate the variance, it is tempting to replace all the unknown
parameters with their sample analogues.  However, unlike the case of
the variance of the PAPE estimator under a budget constraint (see
Theorem~\ref{thm:PAPEpest}), there is no useful identity for the joint
probability $\Pr(f(\bX_i,\hat{c}_p(f))=g(\bX_i,\hat{c}_p(g))=1)$ under
general $g$.  Thus, an empirical analogue of $\hat{c}_p(f)$ and
$\hat{c}_p(g)$ is not a good estimate because it is solely based on
one realization.  Thus, we use the following conservative bound,
\begin{align*}
& \quad - \left(\Pr(f(\bX_i,\hat{c}_p(f))=g(\bX_i,\hat{c}_p(g))=1) -\frac{\lfloor np
  \rfloor^2}{n^2}\right)\kappa_{f1}(p)\kappa_{g1}(p)\\
  \leq & \quad \frac{\lfloor np  \rfloor  \max\{\lfloor  np \rfloor,
	n-\lfloor np \rfloor\}}{n^2(n-1)}|\kappa_{f1}(p)\kappa_{g1}(p)|,
\end{align*}
where the inequality follows because the maximum is achieved when the
scoring rules of $f$ and $g$, i.e., $s_f(\bX_i)$ and $s_g(\bX_i)$, are
perfectly correlated. We use this upper bound in our simulation and
empirical studies. In Section~\ref{sec:synthetic}, we find that this
upper bound estimate of the variance produces only a small
conservative bias.

\begin{proof}
The proof of the bounds for the expectation and variance of the proposed
estimator largely follows the proof given in
Appendix~\ref{app:PAPEpest}. The only significant difference is the
calculation of the covariance term, which is given below.
\begin{eqnarray*}
	& & {\rm Cov}(Y_i^\ast(1) - Y_i^\ast(0), Y_j^\ast(1) - Y_j^\ast(0)) \\
	& = & {\rm Cov}\l(
              \l\{f(\bX_i,\hat{c}_p(f))-g(\bX_i,\hat{c}_p(g))\r\}\tau_i,
              \l\{f(\bX_j,\hat{c}_p(f))-g(\bX_j,\hat{c}_p(g))\r\}\tau_j\r) \\
	& = & \Cov(f(\bX_i,\hat{c}_p(f))\tau_i,f(\bX_j,\hat{c}_p(f))\tau_j)+\Cov(g(\bX_i,\hat{c}_p(g))\tau_i,g(\bX_j,\hat{c}_p(g))\tau_j)\\
	& & - 2\Cov(f(\bX_i,\hat{c}_p(f))\tau_i,g(\bX_j,\hat{c}_p(g))\tau_j) \\
	& = & \frac{\lfloor np \rfloor(\lfloor np \rfloor-n)}{n^2(n-1)}\l(\kappa_{f1}(p)^2 +\kappa_{g1}(p)^2\r) -
            2\Cov(f(\bX_i,\hat{c}_p(f))\tau_i,g(\bX_j,\hat{c}_p(g))\tau_j)\\
     & = & \frac{\lfloor np \rfloor(\lfloor np
           \rfloor-n)}{n^2(n-1)}\l(\kappa_{f1}(p)^2
           +\kappa_{g1}(p)^2\r) \\
  & & - 2\left(\Pr(f(\bX_i,\hat{c}_p(f))=g(\bX_i,\hat{c}_p(g))=1) -\frac{\lfloor np
     	\rfloor^2}{n^2}\right)\kappa_{f1}(p)\kappa_{g1}(p)
\end{eqnarray*}
\end{proof}

\subsection{Proof of Theorem~\ref{thm:AUPECest}}
\label{app:AUPECest}

The derivation of the variance expression in
Theorem~\ref{thm:AUPECest} proceeds in the same fashion as
Theorem~\ref{thm:PAPEest} (see Appendix~\ref{app:PAPEpest}) with the
only non-trivial change being the calculation of the covariance term.
Note $\Pr(f(\bX_i, \hat{c}_{\frac{k}{n}}(f))=1)=k/n$ for $t=0,1$ and
$n_f = Z\sim {\rm Binom}(n, p_f)$.  Then, we have:
\begin{eqnarray*}
  & & {\rm Cov}(Y_i^\ast(1) - Y_i^\ast(0), Y_j^\ast(1) - Y_j^\ast(0)) \\
  & = & {\rm Cov}\l[ \left\{\frac{1}{n}\l(\sum_{k=1}^{n_f}
        f(\bX_i,\hat{c}_{k/n}(f))+\sum_{k=n_f+1}^{n}
        f(\bX_i,\hat{c}_{n_f/n}(f))\r)-
        \frac{1}{2}\right\}\tau_i ,\r. \\
  & & \l. \hspace{1in} \left\{\frac{1}{n}\l(\sum_{k=1}^{n_f} f(\bX_j,\hat{c}_{k/n}(f))+\sum_{k=n_f+1}^{n} f(\bX_j,\hat{c}_{n_f/n}(f))\r)- \frac{1}{2}\right\}\tau_j\r]\\
  & = & \E\l\{ {\rm Cov}\l[ \left\{\frac{1}{n}\l(\sum_{k=1}^{Z}
        f(\bX_i,\hat{c}_{k/n}(f))+\sum_{k=Z+1}^{n}
        f(\bX_i,\hat{c}_{Z/n}(f))\r)- \frac{1}{2}\right\}\tau_i
        ,\r.\r. \\
  & & \l.\l. \hspace{1in} \left\{\frac{1}{n}\l(\sum_{k=1}^{Z}
      f(\bX_j,\hat{c}_{k/n}(f))+\sum_{k=Z+1}^{n}
      f(\bX_j,\hat{c}_{Z/n}(f))\r)- \frac{1}{2}\right\}\tau_j
      \ \Bigl | \  Z\r] \r\}\\
  & & + {\rm Cov}\l\{\E\l[ \left\{\frac{1}{n}\left(\sum_{k=1}^{Z}
      f(\bX_i,\hat{c}_{k/n}(f))+\sum_{k=Z+1}^{n}
      f(\bX_i,\hat{c}_{Z/n}(f))\r)- \frac{1}{2}\right\}\tau_i
      \ \Bigl | \ Z \r] ,\r. \\
  & & \l.\biggl. \hspace{1in} \E\l[\left\{\frac{1}{n}\l(\sum_{k=1}^{Z}
      f(\bX_j,\hat{c}_{k/n}(f))+\sum_{k=Z+1}^{n}
      f(\bX_j,\hat{c}_{Z/n}(f))\r)- \frac{1}{2}\right\}\tau_j
      \ \Bigl | \ Z\r] \r\}\\
  & =  &\E\l[- \frac{1}{n} \l\{\sum_{k=1}^Z
	 \frac{k(n-k)}{n^2(n-1)}\kappa_{f1}(k/n)\kappa_{f0}(k/n)
	 +  \frac{Z(n-Z)^2}{n^2(n-1)}\kappa_{f1}(Z/n)\kappa_{f0}(Z/n)\r\}\r.\\
  & & \hspace{.25in}
      - \frac{2}{n^4(n-1)}\sum_{k=1}^{Z-1}\sum_{k^\prime=k+1}^{Z}
      k(n-k^\prime)\kappa_{f1}(k/n)\kappa_{f1}(k^\prime/n)\\
  & & \hspace{.25in} -
      \frac{Z^2(n-Z)^2}{n^4(n-1)}\kappa_{f1}(Z/n)^2
      - \frac{2(n-Z)^2}{n^4(n-1)}\sum_{k=1}^Z k
      \kappa_{f1}(Z/n)\kappa_{f1}(k/n)\\
  & & \hspace{.25in}\l. +\frac{1}{n^4}\sum_{k=1}^Z
      k(n-k)\kappa_{f1}(k/n)^2\r] + \V\l(\sum_{i=1}^Z \frac{i}{n} \kappa_{f1}(i/n) + \frac{(n-Z)Z}{n} \kappa_{f1}(Z/n)\r),
\end{eqnarray*}
where the last equality is based on the results from
Appendix~\ref{app:PAPEpest}.

For the bias, we can rewrite $\Gamma_f$ as,
\begin{equation*}
\Gamma_f \ = \ \int^{p_f}_0\E\{Y_i(f(\bX_i,c_p))\} \d p +(1-p_f)\E\{Y_i(f(\bX_i,c^*))\}-\frac{1}{2}\E\{Y_i(1)+Y_i(0)\},
\end{equation*}
and similarly its estimator $\widehat{\Gamma}_f$ as,
\begin{equation*}
\E(\widehat{\Gamma}_f) \ = \ \E\l\{\int^{\hat{p}_f}_0Y_i(f(\bX_i,c_p)) \d p\r\} +\E\{(1-\hat{p}_f)Y_i(f(\bX_i,c^*))\}-\frac{1}{2}\E\{Y_i(1)+Y_i(0)\}.
\end{equation*}
Therefore, the bias of the estimator is, using a derivation similar to Appendix \ref{app:PAPEpest}:
\begin{eqnarray*}
	\l|\E(\widehat \Gamma_{f})-\Gamma_{f}\r|&\leq &
                                                        \E\l[|p_f-\hat{p}_f|\max_{c \in \{\min\{\hat{p}_f,p_f\}, \max\{\hat{p}_f,p_f\}\}} \E\{Y_i(f(\bX_i,c))-Y_i(f(\bX_i,c^*))\}| \r.\\
                                                & & \hspace{.5in} +  |\E\{Y_i(f(\bX_i,c^*))\}-\E\{Y_i(f(\bX_i,\hat{c}_{p_f}))\}|\Bigr]\\
	&\leq &(\epsilon +1) \max_{c \in
		[c^*-\epsilon,c^*+\epsilon]} |\E[Y_i(f(\bX_i,c))-Y_i(f(\bX_i,c^*))]|\\
	& \leq & (\epsilon+1)\epsilon \max_{c \in
		[c^*-\epsilon,c^*+\epsilon]} \l | \E(\tau_i\mid s(\bX_i)=c)\r|.
\end{eqnarray*}
Now, taking the bound $\epsilon(1+\epsilon)\leq 2\epsilon$ for
$0\leq \epsilon\leq 1$ in equation~\eqref{eq:Fcp} of
Appendix~\ref{app:PAPEpest}, we have the desired result.
\qed

\subsection{Evaluation of an Estimated ITR with No Budget Constraint}
\label{app:cvnobudget}

Formally, we define an machine learning algorithm $F$ to be a deterministic map from
the space of observable data $\cZ=\{\cX, \cT, \cY\}$ to the space of
ITRs $\mathcal{F}$,
\begin{equation*}
  F:\cZ \longrightarrow \mathcal{F}.
\end{equation*}
We emphasize that no restriction is placed on the machine learning algorithm $F$ or
ITR $f$.

To extend the population average value (equation~\eqref{eq:PAV}), we consider the average
performance of an estimated ITR across training data sets of fixed
size.  First, for any given values of pre-treatment variables
$\bX_i=\bx$, we define the average treatment proportion under the
estimated ITR obtained by applying the machine learning algorithm $F$ to training
data $\cZ^{tr}$ of size $n-m$,
\begin{equation*}
  \bar{f}_{F}(\bx) \ = \ \E\{\hat{f}_{\cZ^{tr}}(\bx) \mid
  \bX_i = \bx\} \ = \ \Pr(\hat{f}_{\cZ^{tr}}(\bx) = 1 \mid \bX_i  = \bx)
\end{equation*}
where the expectation is taken over the random sampling of training
data $\cZ^{tr}$.  Although $\bar{f}_F$ depends on the training data
size, we suppress it to ease notational burden.

Then, the population average value of an estimated ITR can be defined as,
\begin{equation}
  \lambda_{F} \ = \ \E\l\{\bar{f}_{F}(\bX_i) Y_i(1) +
  (1-\bar{f}_{F}(\bX_i)) Y_i(0)\r\} \label{eq:PAVcv}
\end{equation}
where the expectation is taken over the population distribution of
$\{\bX_i, Y_i(1), Y_i(0)\}$.  In contrast to the population average value of a fixed ITR,
this estimand accounts for the estimation uncertainty of the ITR by
averaging over the random sampling of training sets.

To generalize the PAPE (equation~\eqref{eq:PAPE}), we first define
the population proportion of units assigned to the treatment condition
under the estimated ITR as follows,
\begin{equation*}
p_{F} \ = \ \E\{\Pr(\hat{f}_{\cZ^{tr}}(\bX_i)=1 \mid \bX_i)\} \ = \ \E\{\bar{f}_F(\bX_i)\}
\end{equation*}
where the expectation is taken with respect to the sampling of
training data of size $n-m$ and the population distribution of
$\bX_i$.  Then, the PAPE of an estimated ITR is given by,
\begin{equation}
  \tau_{F} \ = \ \E\{\lambda_{F} - p_{F} Y_i(1) - (1-p_{F}) Y_i(0)\}, \label{eq:PAPEcv}
\end{equation}
where $\lambda_F$ is the population average value of the estimated ITR defined in
equation~\eqref{eq:PAVcv}.

\subsubsection{The Population Average Value}
\label{app:PAVcvest}

We begin by considering the following cross-validation estimator of
the population average value for an estimated ITR (equation~\eqref{eq:PAVcv}),
\begin{equation}
  \hat{\lambda}_{F} \ = \ \frac{1}{K}\sum_{k=1}^K \hat\lambda_{\hat{f}_{-k}}(\cZ_k). \label{eq:PAVcvest}
\end{equation}
The following theorem proves the unbiasedness of this estimator and
derives its exact variance expression under the Neyman's repeated
sampling framework.
\begin{theorem} {\sc (Unbiasedness and Exact Variance of the
  Cross-Validation Population Average Value Estimator)} \label{thm:PAVcvest} \spacingset{1}
  Under
  Assumptions~\ref{asm:SUTVA},~\ref{asm:randomsample},~and~\ref{asm:comrand},
  the expectation and variance of the cross-validation Population Average Value estimator
  defined in equation~\eqref{eq:PAVcvest} are given by,
  \begin{eqnarray*}
    \E(\hat{\lambda}_{F}) & = & \lambda_{F} \\
    \V(\hat{\lambda}_{F}) & = & \frac{\E(S_{\hat{f}1}^2)}{m_1} +
                                 \frac{\E(S_{\hat{f}0}^2)}{m_0}
                                    +\E \l\{\Cov(\hat{f}_{\cZ^{tr}}(\bX_i),
                                    \hat{f}_{\cZ^{tr}}(\bX_j) \mid
                                \bX_i, \bX_j)\tau_i\tau_j\r\} - \frac{K-1}{K}\E(S_{F}^2)
	\end{eqnarray*}
	for $i \ne j$ where
        $S_{\hat{f}t}^2 = \sum_{i=1}^{m} \l(Y_{\hat{f}i}(t) -
        \overline{Y_{\hat{f}}(t)}\r)^2/(m-1)$,
        $S_{F}^2 = \sum_{k=1}^K \l\{\hat\lambda_{\hat{f}_{-k}}(\cZ_k)
        - \overline{\hat\lambda_{\hat{f}_{-k}}(\cZ_k)}\r\}^2/(K-1)$,
        and $\tau_i = Y_i(1)-Y_i(0)$ with
        $Y_{\hat{f}i}(t) =
        \mathbf{1}\{\hat{f}_{\cZ^{tr}}(\bX_i)=t\}Y_i(t)$,
        $\overline{Y_{\hat{f}}(t)} = \sum_{i=1}^{m}
        Y_{\hat{f}i}(t)/m$, and
        $\overline{\hat\lambda_{\hat{f}_{-k}}(\cZ_k)} = \sum_{k=1}^K
        \hat\lambda_{\hat{f}_{-k}}(\cZ_k)/K$ for $t=\{0,1\}$.
\end{theorem}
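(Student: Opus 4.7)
The plan is to establish unbiasedness first, then decompose the variance by exploiting the exchangeability of the folds together with the law of total variance. For unbiasedness, conditioning on the training fold $\cZ_{-k}$ makes $\hat{f}_{-k}=F(\cZ_{-k})$ a fixed ITR, so Theorem~\ref{thm:PAVest} applied on the test fold of size $m$ gives $\E\{\hat{\lambda}_{\hat{f}_{-k}}(\cZ_k)\mid\cZ_{-k}\}=\lambda_{\hat{f}_{-k}}$. Taking outer expectation over $\cZ_{-k}$ under Assumption~\ref{asm:randomsample} yields $\E\{\hat{\lambda}_{\hat{f}_{-k}}(\cZ_k)\}=\E\{Y_i(0)+\bar{f}_{F}(\bX_i)\tau_i\}=\lambda_F$, and averaging over $k$ preserves the equality.

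For the variance, I would write $\hat{\tau}_k=\hat{\lambda}_{\hat{f}_{-k}}(\cZ_k)$ and first separate the fold-to-fold sampling variability. Under Assumption~\ref{asm:randomsample} together with the uniform random split of $\cZ$ into $K$ equal parts, the vector $(\hat{\tau}_1,\ldots,\hat{\tau}_K)$ is exchangeable. This permits the standard identity $\E(S_F^2)=\V(\hat{\tau}_1)-\Cov(\hat{\tau}_1,\hat{\tau}_2)$, which together with $\V\bigl(\tfrac{1}{K}\sum_k \hat{\tau}_k\bigr)=\tfrac{1}{K}\V(\hat{\tau}_1)+\tfrac{K-1}{K}\Cov(\hat{\tau}_1,\hat{\tau}_2)$ gives
\begin{equation*}
\V(\hat{\lambda}_F)\;=\;\V(\hat{\tau}_1)\;-\;\tfrac{K-1}{K}\E(S_F^2).
\end{equation*}

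To compute $\V(\hat{\tau}_1)$ I would invoke the law of total variance conditional on $\cZ_{-1}$,
\begin{equation*}
\V(\hat{\tau}_1)\;=\;\E\{\V(\hat{\tau}_1\mid \cZ_{-1})\}+\V\{\E(\hat{\tau}_1\mid \cZ_{-1})\}.
\end{equation*}
Theorem~\ref{thm:PAVest} applied to the fixed ITR $\hat{f}_{-1}$ on the test fold of size $m$ directly reduces the first term to $\E(S_{\hat{f}1}^2)/m_1+\E(S_{\hat{f}0}^2)/m_0$. For the second term, I would write $\E(\hat{\tau}_1\mid\cZ_{-1})=\E(Y_i(0))+\E\{\hat{f}_{-1}(\bX_i)\tau_i\mid\cZ_{-1}\}$ and expand $\V(\cdot)=\E(\cdot^2)-(\E\cdot)^2$ by introducing two independent super-population copies $(\bX_i,\tau_i)$ and $(\bX_j,\tau_j)$ of the generic unit, obtaining
\begin{equation*}
\V\bigl(\E\{\hat{f}_{\cZ^{tr}}(\bX_i)\tau_i\mid\cZ^{tr}\}\bigr)\;=\;\E\bigl\{\Cov(\hat{f}_{\cZ^{tr}}(\bX_i),\hat{f}_{\cZ^{tr}}(\bX_j)\mid\bX_i,\bX_j)\,\tau_i\tau_j\bigr\},
\end{equation*}
which is precisely the covariance term in the statement. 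Combining the pieces delivers the claimed variance.

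The main obstacle is careful bookkeeping of the multiple layers of randomness: the $K$ folds are not independent samples but a random partition of a single i.i.d.\ sample, so one must justify that random partitioning together with Assumption~\ref{asm:randomsample} makes $(\hat{\tau}_1,\ldots,\hat{\tau}_K)$ exchangeable, which in turn licenses the identity for $\E(S_F^2)$. A closely related subtlety is that $\E(S_{\hat{f}t}^2)$ in the statement already averages over both the random training data and the test units, so the outer expectation that follows the conditional application of Theorem~\ref{thm:PAVest} must be tracked honestly rather than treated as if $\hat{f}_{-1}$ were a fixed (non-random) ITR.
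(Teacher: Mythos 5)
Your proposal is correct and shares the overall skeleton of the paper's argument: unbiasedness by conditioning on the training fold and invoking the fixed-ITR result, the Nadeau--Bengio fold identity (the paper's Lemma~\ref{lemma:cvcov}) to reduce $\V(\hat{\lambda}_F)$ to $\V(\hat\lambda_{\hat{f}_{-k}}(\cZ_k)) - \frac{K-1}{K}\E(S_F^2)$, and the same three surviving terms. Where you genuinely diverge is in how the single-fold variance is computed. The paper re-runs the unit-level Neyman decomposition of Theorem~\ref{thm:PAVest} with $Y_{\hat{f}i}(t)$ in place of $Y_{fi}(t)$ and isolates the now-nonzero cross-unit covariance $\Cov(\hat{f}_{\cZ^{tr}}(\bX_i)\tau_i,\, \hat{f}_{\cZ^{tr}}(\bX_j)\tau_j)$ for $i \ne j$, evaluating it by iterated expectation given $(\bX_i,\bX_j,\tau_i,\tau_j)$. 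You instead apply the law of total variance conditioning on $\cZ_{-1}$, so the conditional variance is read off from Theorem~\ref{thm:PAVest} as a black box and the variance of the conditional mean becomes $\V\bigl(\E\{\hat{f}_{\cZ^{tr}}(\bX_i)\tau_i \mid \cZ^{tr}\}\bigr)$. The two routes coincide because the test units are conditionally i.i.d.\ given the training data, so $\V\bigl(\E\{\hat{f}_{\cZ^{tr}}(\bX_i)\tau_i \mid \cZ^{tr}\}\bigr) = \Cov(\hat{f}_{\cZ^{tr}}(\bX_i)\tau_i,\, \hat{f}_{\cZ^{tr}}(\bX_j)\tau_j) = \E\{\Cov(\hat{f}_{\cZ^{tr}}(\bX_i),\hat{f}_{\cZ^{tr}}(\bX_j)\mid\bX_i,\bX_j)\tau_i\tau_j\}$; your version is the more modular one (it reuses the fixed-ITR theorem wholesale and makes it transparent that the extra term is precisely the between-training-set variability of $\lambda_{\hat{f}_{\cZ^{tr}}}$), while the paper's version makes explicit which pairwise covariances the shared training set introduces at the unit level. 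Your two closing caveats --- that exchangeability of $(\hat{\tau}_1,\ldots,\hat{\tau}_K)$ under the random partition must be justified before the $\E(S_F^2)$ identity applies, and that the outer expectation defining $\E(S_{\hat{f}t}^2)$ already integrates over the random training data --- are exactly the points the paper itself leaves implicit, so no gap remains.
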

The proof of unbiasedness is similar to that of
Appendix~\ref{app:PAPEest} and thus is omitted.  To derive the
variance, we first introduce the following useful lemma, adapted from
\cite{nadeau2000inference}.
\begin{lemma}
	\label{lemma:cvcov}
	\begin{eqnarray*}
		\E(S_F^2) & = & \V(\hat\lambda_{\hat{f}_{-k}}(\cZ_k))-\Cov(\hat\lambda_{\hat{f}_{-k}}(\cZ_k),\hat\lambda_{\hat{f}_{-\ell}}(\cZ_\ell)),\\
		\V(\hat{\lambda}_F) & = & \frac{\V(\hat\lambda_{\hat{f}_{-k}}(\cZ_k))}{K}+ \frac{K-1}{K}\Cov(\hat\lambda_{\hat{f}_{-k}}(\cZ_k),\hat\lambda_{\hat{f}_{-\ell}}(\cZ_\ell)).
	\end{eqnarray*}
        where $k \ne \ell$.
\end{lemma}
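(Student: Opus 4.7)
The plan is to exploit the exchangeability of the per-fold test statistics $\hat a_k := \hat\lambda_{\hat{f}_{-k}}(\cZ_k)$ across folds $k=1,\ldots,K$. Under Assumption~\ref{asm:randomsample}, the $n$ units are i.i.d., and because the data are partitioned into $K$ equal-sized folds by a uniformly random split, relabeling the folds leaves the joint distribution of $(\hat a_1,\ldots,\hat a_K)$ unchanged. Consequently, $\V(\hat a_k)$ is the same for every $k$ and $\Cov(\hat a_k,\hat a_\ell)$ is the same for every unordered pair $k\ne \ell$; both sides of the identities stated in the lemma are therefore well-defined scalar quantities and do not depend on the specific indices written in them.

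Given exchangeability, I would prove the second identity first by applying the variance-of-a-sum formula directly to $\hat\lambda_F = K^{-1}\sum_k \hat a_k$, which yields
\begin{equation*}
  \V(\hat\lambda_F) = \frac{1}{K^2}\bigl[K\,\V(\hat a_1) + K(K-1)\,\Cov(\hat a_1,\hat a_2)\bigr] = \frac{\V(\hat a_1)}{K} + \frac{K-1}{K}\,\Cov(\hat a_1,\hat a_2).
\end{equation*}
For the first identity, I would expand $S_F^2 = \frac{1}{K-1}\bigl(\sum_k \hat a_k^2 - K\bar a^2\bigr)$ with $\bar a = K^{-1}\sum_k \hat a_k$, take expectations term by term using $\E(\hat a_k^2) = \V(\hat a_1) + \mu^2$ and $\E(\bar a^2) = \V(\bar a) + \mu^2$ where $\mu = \E(\hat a_1)$, and substitute the expression for $\V(\bar a)$ just derived. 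The $\mu^2$ contributions cancel, and routine simplification collapses the coefficients to give $\E(S_F^2) = \V(\hat a_1) - \Cov(\hat a_1,\hat a_2)$, which is the first identity.

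The main obstacle, if there is one, is not the algebra but cleanly justifying the across-fold exchangeability of $(\hat a_1,\ldots,\hat a_K)$. I need the partition into folds to be uniformly random (so that permutations of the fold labels induce the same joint law of the $\hat a_k$) together with the i.i.d.\ sampling of units (so that each $\hat a_k$ has the same marginal law). Both are built into Assumption~\ref{asm:randomsample} and the cross-validation protocol described in Algorithm~\ref{alg:cross_validation}. Once this symmetry is recorded, both identities are standard and reproduce the decomposition of \cite{nadeau2000inference}.
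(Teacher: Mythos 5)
Your proposal is correct and matches the intended derivation: the paper does not prove this lemma itself but invokes it as Lemma~1 of \cite{nadeau2000inference}, and your argument --- pairwise exchangeability of the per-fold statistics (secured by the i.i.d.\ sampling, the uniformly random equal-size split, and the paper's simplifying assumption that each fold contains the same number of treated units), followed by the variance-of-a-sum formula and the expansion of $\E(S_F^2)$ --- is exactly the standard proof of that cited result. Both identities check out as you compute them.
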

The lemma implies,
\begin{equation}
\V(\hat{\lambda}_F) \ = \ \V(\hat{\lambda}_{f_{-k}}(\cZ_k))-\frac{K-1}{K}\E(S_F^2). \label{eq:varrelation}
\end{equation}
We then follow the same process of derivation as in
Appendix~\ref{app:PAPEest} while replacing $Y^*_i(t)$ with
$Y_{\hat{f}i}(t)$ for $t \in \{0,1\}$. The only difference lies in the
covariance term, which can be expanded as follows,
\begin{align*}
\Cov(Y_{\hat{f}i}(1)-Y_{\hat{f}i}(0),Y_{\hat{f}j}(1)-Y_{\hat{f}j}(0)) \
= \ &  {\rm Cov}(\hat{f}_{\cZ^{tr}}(\bX_i)\tau_i+Y_i(0),\hat{f}_{\cZ^{tr}}(\bX_j)\tau_j+Y_j(0)) \\
= \ &  {\rm Cov}(\hat{f}_{\cZ^{tr}}(\bX_i)\tau_i,\hat{f}_{\cZ^{tr}}(\bX_j)\tau_j) \\
= \ & \E\l[{\rm Cov}(\hat{f}_{\cZ^{tr}}(\bX_i)\tau_i,
\hat{f}_{\cZ^{tr}}(\bX_j)\tau_j \mid \bX_i, \bX_j,
\tau_i, \tau_j )\r]\\
= \ & \E\l[{\rm
	Cov}(\hat{f}_{\cZ^{tr}}(\bX_i),\hat{f}_{\cZ^{tr}}(\bX_j)
\mid \bX_i, \bX_j)\tau_i \tau_j\r].
\end{align*}
So the full variance expression is:
\begin{align*}
\V(\hat{\lambda}_F) \  = \ &\V(\hat{\lambda}_{f_{-k}}(\cZ_k))-\frac{K-1}{K}\E(S_F^2)\\
= \ &  \frac{\E(S^2_{\hat{f}1})}{m_1}+\frac{\E(S^2_{\hat{f}0})}{m_0}+\E\l[{\rm
	Cov}(\hat{f}_{\cZ^{tr}}(\bX_i),\hat{f}_{\cZ^{tr}}(\bX_j)
\mid \bX_i, \bX_j)\tau_i \tau_j\r]-\frac{K-1}{K}\E(S_F^2)
\end{align*}
\qed

When compared to the case of a fixed ITR with the sample size of $m$
(see Theorem~\ref{thm:PAVest} in Appendix~\ref{app:PAV}), the variance
has two additional terms.  The covariance term accounts for the
estimation uncertainty about the ITR, and is typically positive
because two units, for which an estimated ITR makes the same treatment
assignment recommendation, are likely to have causal effects with the
same sign.  The second term is due to the efficiency gain resulting
from the $K$-fold cross-validation rather than evaluating an estimated
ITR once.

The cross-validation estimate of $\E(S_{\hat{f}t}^2)$ is
straightforward and is given by,
\begin{equation*}
  \widehat{\E(S_{\hat{f}t}^2)} \ = \ \frac{1}{K(m_t-1)} \sum_{k=1}^K
  \sum_{i=1}^m \mathbf{1}\{T_i^{(k)} = t\} \l\{Y_{\hat{f}i}^{(k)}(t) - \overline{Y_{\hat{f}t}^{(k)}}\r\}^2,
\end{equation*}
where
$Y_{\hat{f}i}^{(k)}(t) = \mathbf{1}\{\hat{f}_{-k}(\bX_i^{(k)})=t\}
Y_i^{(k)}(t)$ and
$\overline{Y_{\hat{f}t}^{(k)}}=\sum_{i=1}^m \mathbf{1}\{T_i^{(k)} = t
\}Y_{\hat{f}i}^{(k)}(t) /m_t$.  In contrast, the estimation of this
cross-validation variance requires care.  In particular, although it
is tempting to estimate $\E(S_F^2)$ using the realization of $S_F^2$,
this estimate is highly variable especially when $K$ is small.  As a
result, it often yields a negative overall variance estimate.  We
address this problem by first noting that Lemma~\ref{lemma:cvcov}
implies,
\begin{equation*}
  \V(\hat{\lambda}_{\hat{f}_{-k}}(\cZ_k)) \ = \
\frac{\E(S_{\hat{f}1}^2)}{m_1} +
\frac{\E(S_{\hat{f}0}^2)}{m_0}
+\E \l\{\Cov(\hat{f}_{\cZ^{tr}}(\bX_i),
\hat{f}_{\cZ^{tr}}(\bX_j) \mid
\bX_i, \bX_j)\tau_i\tau_j\r\} \ \geq \ \E(S_{F}^2).
\end{equation*}
Then, this inequality suggests the following consistent estimator of
$\E(S^2_F)$,
\begin{equation*}
\widehat{\E(S_F^2)} \ = \ \min\l(S_F^2, \widehat{\V(\hat{\lambda}_{\hat{f}_{-k}}(\cZ_k))}\r).
\end{equation*}
Although this yields a conservative estimate of $\V(\hat\lambda_F)$ in
finite samples, the bias appears to be small in practice (see
Section~\ref{subsec:coverage}).

Finally, for the estimation of the covariance term, since
$\hat{f}_{\cZ^{tr}}$ is binary, we have,
\begin{eqnarray*}
& & \Cov(\hat{f}_{\cZ^{tr}}(\bX_i), \hat{f}_{\cZ^{tr}}(\bX_j) \mid \bX_i,
\bX_j) \\
& = & \Pr(\hat{f}_{\cZ^{tr}}(\bX_i)=\hat{f}_{\cZ^{tr}}(\bX_j)=1 \mid
      \bX_i, \bX_j)-\Pr(\hat{f}_{\cZ^{tr}}(\bX_i)=1 \mid \bX_i)
      \Pr(\hat{f}_{\cZ^{tr}}(\bX_j)=1 \mid \bX_j),
\end{eqnarray*}
for $i \ne j$.  An unbiased cross-validation estimator of this
covariance (given $\bX_i$ and $\bX_j$) is,
\begin{equation*}
\widehat{\Cov(\hat{f}_{\cZ^{tr}}(\bX_i),
  \hat{f}_{\cZ^{tr}}(\bX_j) \mid \bX_i, \bX_j)}
\ = \ \frac{1}{K}\sum_{k=1}^K \hat{f}_{-k}(\bX_i) \hat{f}_{-k}(\bX_j)-
\frac{1}{K}\sum_{k=1}^K\hat{f}_{-k}(\bX_i) \frac{1}{K}\sum_{k=1}^K \hat{f}_{-k}(\bX_j).
\end{equation*}
Thus, we have the following cross-validation
estimator of the required term,
\begin{align*}
 & \widehat{\E\l\{\Cov(\hat{f}_{\cZ^{tr}}(\bX_i), \hat{f}_{\cZ^{tr}}(\bX_j)
                \mid \bX_i, \bX_j) Y_i(s)Y_j(t) \r\}} \nonumber \\
  = \ &   \frac{\sum_{i=1}^n
        \sum_{j\neq i } \bone\{T_i = s, T_j
        = t\} Y_iY_j \cdot
        \widehat{\Cov(\hat{f}_{\cZ^{tr}}(\bX_i),
        \hat{f}_{\cZ^{tr}}(\bX_j) \mid \bX_i, \bX_j)}}{\sum_{i=1}^n \sum_{j\neq i } \bone\{T_i = s, T_j
        = t\}}.
\end{align*}
for $s,t \in \{0,1\}$.  However, since this naive calculation is
computationally expensive, we rewrite it as follows to reduce the
computational time from $O(n^2 K)$ to $O(nK)$,
\begin{align*}
      & \frac{\sum_{k=1}^K \left(\sum_{i=1}^n
    	\bone\{T_i = s\} Y_i \hat{f}_{-k}(\bX_i)\right)\left(\sum_{i=1}^n
    	\bone\{T_i = t\} Y_i \hat{f}_{-k}(\bX_i)\right)-\sum_{i=1}^n\bone\{T_i = s, T_i = t\}Y_i^2\hat{f}_{-k}(\bX_i)
    	}{K\left[\left(\sum_{i=1}^{n}\bone\{T_i =
          s\}\right) \left(\sum_{i=1}^{n}\bone\{T_i =
          t\}\right)-\sum_{i=1}^{n}\bone\{T_i = s, T_i = t\} \right]}\nonumber \\
      & - \  \frac{(\sum_{i=1}^n \sum_{k=1}^K
    	\bone\{T_i = s\} Y_i \hat{f}_{-k}(\bX_i))(\sum_{i=1}^n \sum_{k=1}^K
    	\bone\{T_i = t\} Y_i
         \hat{f}_{-k}(\bX_i))}{K^2\left[(\sum_{i=1}^{n}\bone\{T_i =
         s\}) (\sum_{i=1}^{n}\bone\{T_i =
         t\})-\sum_{i=1}^{n}\bone\{T_i = s, T_i = t\} \right]}\nonumber \\
     	 & + \ \frac{\sum_{i=1}^n\sum_{k=1}^K \bone\{T_i = s, T_i = t\}Y_i^2\hat{f}_{-k}(\bX_i)}{K^2\left[(\sum_{i=1}^{n}\bone\{T_i =
     		s\}) (\sum_{i=1}^{n}\bone\{T_i =
     		t\})-\sum_{i=1}^{n}\bone\{T_i = s, T_i = t\} \right]}.
\end{align*}

\subsubsection{The Population Average Prescriptive Effect (PAPE)}
\label{app:PAPEcvest}

Next, we propose the following cross-validation estimator of the PAPE
for an estimated ITR (equation~\eqref{eq:PAPEcv}),
\begin{equation}
  \hat{\tau}_F \ = \ \frac{1}{K}\sum_{k=1}^K \hat{\tau}_{\hat{f}_{-k}}(\cZ_k) \label{eq:PAPEcvest}
\end{equation}
where $\hat\tau_f(\cdot)$ is defined in equation~\eqref{eq:PAPEest}.
The next theorem shows the unbiasedness of this estimator and derives
its variance.
\begin{theorem} {\sc (Unbiasedness and Exact Variance of the
    Cross-Validation PAPE Estimator)} \label{thm:PAPEcvest}
  \spacingset{1} Under
  Assumptions~\ref{asm:SUTVA},~\ref{asm:randomsample},~and~\ref{asm:comrand},
  the expectation and variance of the cross-validation PAPE estimator
  defined in equation~\eqref{eq:PAPEcvest} are given by,
	\begin{eqnarray*}
          \E(\hat{\tau}_F)  & = & \tau_F \\
           \V(\hat{\tau}_F)
          & = & \frac{m^2}{(m-1)^2}\Bigg[\frac{\E(\widetilde{S}_{\hat{f}1}^2)}{m_1} +
                                 \frac{\E(\widetilde{S}_{\hat{f}0}^2)}{m_0} + \frac{1}{m^2}\l\{\tau_F^2- m
                                 p_F(1-p_F)\tau^2+2(m-1)(2p_F-1)\tau  \tau_F\r\}\\
          & &  \hspace{.75in} +
              \frac{1}{m^2}\E\Big\{\l\{(m-3)(m-2)\tau^2
              + (m^2-2m+2)\tau_i\tau_j -
             2(m-2)^2\tau\tau_i
              \r\}   \\  & & \hspace{1.35in}
              \times\Cov(\hat{f}_{\cZ^{tr}}(\bX_i), \hat{f}_{\cZ^{tr}}(\bX_j)
              \mid \bX_i, \bX_j) \Big\} \Bigg]  -  \frac{K-1}{K}\E(\widetilde{S}_{F}^2)
	\end{eqnarray*}
	for $i\neq j$, where
        $\widetilde{S}_{\hat{f}t}^2 = \sum_{i=1}^{m}
        (\widetilde{Y}_{\hat{f}i}(t) -
        \overline{\widetilde{Y}_{\hat{f}}(t)})^2/(m-1)$,
        $\widetilde{S}_{F}^2 = \sum_{k=1}^K
        (\hat\tau_{\hat{f}_{-k}}(\cZ_{k}) -
        \overline{\hat\tau_{\hat{f}_{-k}}(\cZ_k)})^2/(K-1)$ with
        $\widetilde{Y}_{\hat{f}i}(t) = (\hat{f}_{-k}(\bX_i) -
        \hat{p}_{\hat{f}_{-k}})Y_i(t)$,
        $\overline{\widetilde{Y}_{\hat{f}}(t)} = \sum_{i=1}^{m}
        \widetilde{Y}_{\hat{f}i}(t)/m$, and
        $\overline{\hat\tau_{\hat{f}_{-k}}(\cZ_k)} = \sum_{k=1}^K
        \hat\tau_{\hat{f}_{-k}}(\cZ_k)/K$, for
        $t=\{0,1\}$.
\end{theorem}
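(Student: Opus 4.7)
The plan is to mirror the proof of Theorem~\ref{thm:PAVcvest} in Appendix~\ref{app:PAVcvest} while substituting in the PAPE machinery of Theorem~\ref{thm:PAPEest}. For \emph{unbiasedness}, I would apply the tower property by conditioning on the training data $\cZ^{tr}$: given $\cZ^{tr}$, the map $\hat{f}_{\cZ^{tr}}$ is a fixed ITR, so $\hat\tau_{\hat{f}_{\cZ^{tr}}}(\cZ_k)$ is exactly the fixed-ITR PAPE estimator applied to the test fold $\cZ_k$, which by Theorem~\ref{thm:PAPEest} satisfies $\E[\hat\tau_{\hat{f}_{\cZ^{tr}}}(\cZ_k)\mid \cZ^{tr}]=\tau_{\hat{f}_{\cZ^{tr}}}$. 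Taking expectation over $\cZ^{tr}$ yields $\tau_F$, and averaging over the $K$ folds preserves unbiasedness.

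For the \emph{variance}, I would first invoke the same symmetric-folds identity as in Lemma~\ref{lemma:cvcov}, whose proof relies only on exchangeability of folds and therefore carries over verbatim to any cross-validated estimator. This gives
\begin{equation*}
  \V(\hat\tau_F) \ = \ \V(\hat\tau_{\hat{f}_{-k}}(\cZ_k)) - \frac{K-1}{K}\E(\widetilde{S}_F^2),
\end{equation*}
reducing the problem to computing $\V(\hat\tau_{\hat{f}_{-k}}(\cZ_k))$ for a single fold. For that, I would follow the proof of Theorem~\ref{thm:PAPEest} step by step: rewrite $\hat\tau_{\hat{f}_{-k}}^s(\cZ_k) - \tau_{\hat{f}_{-k}}^s(\cZ_k)$ as a $D_i$-weighted sum and apply Lemma~\ref{lemma:SAPE} conditional on both the training data and the test-fold potential outcomes to extract the $\E(\widetilde{S}_{\hat f1}^2)/m_1 + \E(\widetilde{S}_{\hat f0}^2)/m_0$ piece (with $m$ in place of $n$). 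The remaining ``super-population'' variance piece $\V(m^{-1}\sum_i(\widetilde Y_{\hat f i}(1)-\widetilde Y_{\hat f i}(0)))$ is handled by Lemma~1 of \citet{nadeau2000inference}, exactly as in equation~\eqref{eq:cov.decomp}, producing marginal-variance terms plus a cross-unit covariance term.

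The main obstacle is computing the between-unit covariance
\begin{equation*}
  \Cov\bigl(\widetilde Y_{\hat f i}(1) - \widetilde Y_{\hat f i}(0),\ \widetilde Y_{\hat f j}(1) - \widetilde Y_{\hat f j}(0)\bigr), \quad i\neq j,
\end{equation*}
which in the fixed-ITR derivation of Appendix~\ref{app:PAPEest} collapsed into polynomials of $p_f$ because $f$ was deterministic. Now $\hat{f}_{\cZ^{tr}}(\bX_i)$ and $\hat{f}_{\cZ^{tr}}(\bX_j)$ are correlated through the shared training data, so after expanding via the covariance-of-products formula of \citet{bohr:gold:69} and conditioning on $(\bX_i,\bX_j)$, a genuinely new quantity $\Cov(\hat{f}_{\cZ^{tr}}(\bX_i),\hat{f}_{\cZ^{tr}}(\bX_j)\mid \bX_i,\bX_j)$ appears that was identically zero before. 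Moreover, $\hat{p}_{\hat f}$ averages $\hat f$ over all $m$ test units, so its interaction with $\hat f$ at units $i$ and $j$ generates several cross-terms involving a third or fourth test unit; careful bookkeeping of how many such ``outside'' units contribute — analogous to the counting $(n-2), (n-1)$ that appeared in the fixed-ITR proof but now with the additional factor $\Cov(\hat f(\bX_i),\hat f(\bX_j)\mid \bX_i,\bX_j)$ pulled out — produces precisely the combinatorial coefficients $(m-3)(m-2)$, $(m^2-2m+2)$, and $2(m-2)^2$ on $\tau^2$, $\tau_i\tau_j$, and $\tau\tau_i$, respectively. Collecting all pieces and subtracting $\frac{K-1}{K}\E(\widetilde{S}_F^2)$ yields the stated variance.
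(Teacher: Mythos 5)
Your proposal is correct and follows essentially the same route as the paper's proof: unbiasedness by conditioning on the training data and invoking the fixed-ITR result, the variance decomposition $\V(\hat\tau_F)=\V(\hat\tau_{\hat f_{-k}}(\cZ_k))-\tfrac{K-1}{K}\E(\widetilde S_F^2)$ via Lemma~\ref{lemma:cvcov}, and the single-fold variance computed as in Theorem~\ref{thm:PAPEest} with the new cross-unit covariance $\Cov(\hat f_{\cZ^{tr}}(\bX_i),\hat f_{\cZ^{tr}}(\bX_j)\mid\bX_i,\bX_j)$ extracted by the same index bookkeeping (the paper's seven-term expansion over distinct test units $i,j,k,\ell$) that yields the coefficients $(m-2)(m-3)$, $m^2-2m+2$, and $-2(m-2)^2$. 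The only thing left implicit in your plan is actually carrying out that seven-term expansion, but you have correctly identified its structure and outcome.
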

\begin{proof}
The proof of unbiasedness is similar to that of
Appendix~\ref{app:PAPEest} and thus is omitted. The derivation of the
variance is similar to that of Appendix~\ref{app:PAVcvest}. The key
difference is the calculation of the following covariance term, which
needs care due to the randomness of $\hat{f}_{\cZ^{tr}}$,
\begin{align*}
	&  {\rm Cov}(Y_i^\ast(1) - Y_i^\ast(0), Y_j^\ast(1) - Y_j^\ast(0)) \\
 = \ &  {\rm Cov}\l\{\l(\hat{f}_{\cZ^{tr}}(\bX_i) - \frac{1}{{m}}\sum_{i^\prime=1}^{m}
	\hat{f}_{\cZ^{tr}}(\bX_{i^\prime})\r)\tau_i, \l(\hat{f}_{\cZ^{tr}}(\bX_j) -
	\frac{1}{{m}} \sum_{j^\prime=1}^{m}
   \hat{f}_{\cZ^{tr}}(\bX_{j^\prime})\r)\tau_j\r\},
\end{align*}
where $i \ne j$. There are seven terms that
need to be carefully expanded,
\begin{align*}
	& \frac{{m}-2}{m^2}\Cov(\hat{f}_{\cZ^{tr}}(\bX_k)\tau_i,\hat{f}_{\cZ^{tr}}(\bX_k)\tau_j)	+  \frac{({m}-2)({m}-3)}{m^2}
              \Cov(\hat{f}_{\cZ^{tr}}(\bX_k)\tau_i,\hat{f}_{\cZ^{tr}}(\bX_\ell)\tau_j) \\
  + \ & \frac{({m}-1)^2}{m^2}\Cov(\hat{f}_{\cZ^{tr}}(\bX_i)\tau_i,\hat{f}_{\cZ^{tr}}(\bX_j)\tau_j)
 + \frac{1}{m^2}\Cov(\hat{f}_{\cZ^{tr}}(\bX_j)\tau_i,\hat{f}_{\cZ^{tr}}(\bX_i)\tau_j) \\
 - \	& \frac{2({m}-1)}{m^2}\Cov(\hat{f}_{\cZ^{tr}}(\bX_i)\tau_i,\hat{f}_{\cZ^{tr}}(\bX_i)\tau_j)
 + \frac{2({m}-2)}{m^2}\Cov(\hat{f}_{\cZ^{tr}}(\bX_j)\tau_i,\hat{f}_{\cZ^{tr}}(\bX_k)\tau_j) \\
 - \ &
       \frac{2({m}-2)({m}-1)}{m^2}\Cov(\hat{f}_{\cZ^{tr}}(\bX_i)\tau_i,\hat{f}_{\cZ^{tr}}(\bX_k)\tau_j),
\end{align*}
where $i,j,k,\ell$ represent indices that do not take an identical
value at the same time (e.g., $i \ne j$). Then, we rewrite the above
terms using the properties of covariance as follows,
\begin{align*}
  & \frac{({m}-2)\tau^2}{m^2}\V(\hat{f}_{\cZ^{tr}}(\bX_i))+\frac{({m}-2)({m}-3)\tau^2}{m^2} \Cov(\hat{f}_{\cZ^{tr}}(\bX_i),\hat{f}_{\cZ^{tr}}(\bX_j)) \\
 & +   \frac{({m}-1)^2}{m^2}\l[\E\l\{\Cov(\hat{f}_{\cZ^{tr}}(\bX_i),\hat{f}_{\cZ^{tr}}(\bX_j)
       \mid \bX_i, \bX_j)\tau_i\tau_j\r\}+\Cov(\bar{f}_F(\bX_i)\tau_i,\bar{f}_F(\bX_j)\tau_j)\r] \\
& +  \frac{1}{m^2}\l[\E\l\{\Cov(\hat{f}_{\cZ^{tr}}(\bX_i),\hat{f}_{\cZ^{tr}}(\bX_j)
       \mid \bX_i, \bX_j)\tau_i\tau_j\r\}+\Cov(\bar{f}_F(\bX_j)\tau_i,\bar{f}_F(\bX_i)\tau_j)\r] \\
& - \frac{2({m}-1)\tau}{m^2}(1-p_F)\E(\hat{f}_{\cZ^{tr}}(\bX_i)\tau_i) \\
& + \frac{2({m}-2)\tau}{m^2}\l[\E\l\{\Cov(\hat{f}_{\cZ^{tr}}(\bX_i),\hat{f}_{\cZ^{tr}}(\bX_j)
      \mid \bX_i, \bX_j)\tau_i]+p_F\Cov(\bar{f}_F(\bX_i),\tau_i)\r\}\r] \\
& -  \frac{2({m}-2)({m}-1)\tau}{m^2}\l[\E\l\{\Cov(\hat{f}_{\cZ^{tr}}(\bX_i),\hat{f}_{\cZ^{tr}}(\bX_j)
      \mid \bX_i, \bX_j)\tau_i\r\} +
      \Cov(\bar{f}_F(\bX_i)\tau_i, \bar{f}_F(\bX_j))\r] \\
= \ & \frac{({m}-2)\tau^2}{m^2}p_F(1-p_F)+\frac{({m}-2)({m}-3)\tau^2}{m^2} \Cov(\hat{f}_{\cZ^{tr}}(\bX_i),\hat{f}_{\cZ^{tr}}(\bX_j)) \\
& + \frac{m^2-2m+2}{m^2}\E\l\{\Cov(\hat{f}_{\cZ^{tr}}(\bX_i),\hat{f}_{\cZ^{tr}}(\bX_j)
      \mid \bX_i, \bX_j)\tau_i\tau_j\r\}+ \frac{1}{m^2}(\tau_F^2+2\tau p_F\tau_F) \\
& - \frac{2({m}-1)\tau^2}{m^2}p_F(1-p_F)
      -\frac{2({m}-1)\tau\tau_F}{m^2}(1-p_F) +\frac{2({m}-2)\tau}{m^2}p_F\tau_F\\
 & -  \frac{2({m}-2)^2\tau}{m^2} \E\l\{\Cov(\hat{f}_{\cZ^{tr}}(\bX_i),\hat{f}_{\cZ^{tr}}(\bX_k)
    \mid \bX_i, \bX_j)\tau_i\r\}\\
  = \  & \frac{1}{m^2}\l(\tau_F^2-
         mp_F(1-p_F)\tau^2+2({m}-1)(2p_F-1)\tau \tau_F\r)\\
  & + \E\l[\frac{({m}-2)({m}-3)\tau^2}{m^2}
    \Cov(\hat{f}_{\cZ^{tr}}(\bX_i),\hat{f}_{\cZ^{tr}}(\bX_j) \mid
    \bX_i, \bX_j)\r. \\
  & \hspace{.25in}-\frac{2(m-2)^2\tau}{m^2}\Cov(\hat{f}_{\cZ^{tr}}(\bX_i),\hat{f}_{\cZ^{tr}}(\bX_j) \mid
    \bX_i, \bX_j)\tau_i. \\
  &\l. \hspace{.5in} +\frac{m^2-2m+2}{m^2}\Cov(\hat{f}_{\cZ^{tr}}(\bX_i),\hat{f}_{\cZ^{tr}}(\bX_j)
  \mid \bX_i, \bX_j)\tau_i\tau_j\r],
\end{align*}
for $i\neq j$, where we used the
results from Appendix~\ref{app:PAPEest} as well as
$\V(\hat{f}_{\cZ^{tr}}(\bX_i))=p_F(1-p_F)$ and
$\tau_F=\Cov(\bar{f}_F(\bX_i),\tau_i)=\Cov(\hat{f}_{\cZ^{tr}}(\bX_i),\tau_i)$.
\end{proof}

Like the case of the
population average value, the variance has two extra terms when compared to the case of a
fixed ITR (see Theorem~\ref{thm:PAPEest} in
Appendix~\ref{app:PAPEest}).  The estimation of the variance is
similar to that for the population average value.

\subsection{Proof of Theorem~\ref{thm:PAPEpcvest}}
\label{app:PAPEpcvest}

We begin by deriving the variance.  The derivation proceeds in the
same fashion as the one for Theorem~\ref{thm:PAPEcvest} (see
Appendix~\ref{app:PAPEcvest}).  The only non-trivial change is the
derivation of the covariance term, which we detail below.  First,
similar to the proof of Theorem~\ref{thm:PAPEpest} (see
Appendix~\ref{app:PAPEpest}), we note the following relation:
\begin{eqnarray*}
	\Pr(\hat{f}_{\cZ^{tr}}(\bX_i, \hat{c}_p)=1)
	& = & \E\left(\int^{\infty}_{-\infty}
              \Pr(\hat{f}_{\cZ^{tr}}(\bX_i, c)=1 \mid \cZ^{tr},
              \hat{c}_p=c) P(\hat{c}_p=c \mid \cZ^{tr}) \d c\right)\\
	& = &  \E\left(\int^{\infty}_{-\infty} \frac{\lfloor mp
              \rfloor}{m} P(\hat{c}_p=c \mid \cZ^{tr}) \d c\right)\\
	& = & \frac{\lfloor mp \rfloor}{m},
\end{eqnarray*}
where the second equality follows from the fact that conditioned on a fixed training set $\cZ^{tr}$ and conditioned
on $\hat{c}_p = c$, exactly $\lfloor mp \rfloor$ out of $m$ units will
be assigned to the treatment condition.  Given this result, we can
compute the covariance as follows,
\begin{eqnarray*}
	& & {\rm Cov}(\widetilde{Y}_i(1) - \widetilde{Y}_i(0), \widetilde{Y}_j(1) - \widetilde{Y}_j(0)) \\
	& = & {\rm Cov}\l\{\left(\hat{f}_{\cZ^{tr}}(\bX_i,\hat{c}_p)- p\right)\tau_i, \left(\hat{f}_{\cZ^{tr}}(\bX_j,\hat{c}_p)- p\right)\tau_j\r\} \\
	& = & {\rm Cov}\l( \hat{f}_{\cZ^{tr}}(\bX_i,\hat{c}_p)\tau_i, \hat{f}_{\cZ^{tr}}(\bX_j,\hat{c}_p)\tau_j\r) - 2p{\rm Cov}\l( \tau_i, \hat{f}_{\cZ^{tr}}(\bX_j,\hat{c}_p)\tau_j\r)\\
	& = & \frac{m \lfloor mp \rfloor (\lfloor mp \rfloor -1)-{\lfloor mp\rfloor}^2(m-1)}{m^2(m-1)}\E(\tau_i\mid \hat{f}_{\cZ^{tr}}(\bX_i,\hat{c}_p)=1)^2 - 2p{\rm Cov}\l( \tau_i, \hat{f}_{\cZ^{tr}}(\bX_j,\hat{c}_p)\tau_j\r)\\
	& = & \frac{\lfloor mp \rfloor(\lfloor mp \rfloor-m)}{m^2(m-1)}\kappa_{F1}(p)^2 + \frac{2p\lfloor mp \rfloor(m-\lfloor mp \rfloor)}{m^2(m-1)}\l(\kappa_{F1}(p)^2-\kappa_{F1}(p)\kappa_{F0}(p)\r)\\
	& = & (2p-1)\frac{\lfloor mp \rfloor(m-\lfloor mp \rfloor)}{m^2(m-1)}\kappa_{F1}(p)^2-\frac{2p\lfloor mp \rfloor(m-\lfloor mp \rfloor)}{m^2(m-1)}\kappa_{F1}( p)\kappa_{F0}(p)\\
	& = & \frac{\lfloor mp \rfloor(m-\lfloor mp \rfloor)}{m^2(m-1)}\l\{(2p-1)\kappa_{F1}( p)^2-2p\kappa_1( p)\kappa_{F0}(p)\r\}
\end{eqnarray*}
Combining this covariance result with the expression for the marginal
variances yields the desired variance expression for
$\hat{\tau}_{Fp}$.

Next, we derive the upper bound of bias.  Using the same technique as
the proof of Theorem~\ref{thm:PAPEest}, we can rewrite the expectation
of the proposed estimator as,
\begin{equation*}
\E(\hat{\tau}_{Fp}) \ = \ \E\l[\frac{1}{n} \sum_{i=1}^n \l\{Y_i\l(\hat{f}_{\cZ^{tr}}(\bX_i,\hat{c}_p)\r) - pY_i(1)-(1-p)Y_i(0)\r\}
\r]
\end{equation*}
Now, define $F(c)=\mc{P}(s(\bX_i)\leq c)$. Without loss of generality,
assume $\hat{c}_p > c_p$ (If this is not the case, we simply switch
the upper and lower limits of the integrals below).  Then, the bias of
the estimator is given by,
\begin{eqnarray*}
	\l|\E(\hat{\tau}_{Fp})-\tau_{Fp}\r|&= &
                                                \l|\E\l\{\E(\hat{\tau}_{Fp}-\tau_{Fp}  \mid \cZ^{tr})\r\}\r|\\
                                           & \leq
                                              &\E\l\{\l|\E(\hat{\tau}_{Fp}-\tau_{Fp}\mid
                                                \cZ^{tr})\r|\r\}\\
	&= &\E\l\{\l|\E\l[\frac{1}{n} \sum_{i=1}^n
             \l\{Y_i\l(\hat{f}_{\cZ^{tr}}(\bX_i,\hat{c}_p)\r)-Y_i\l(\hat{f}_{\cZ^{tr}}(\bX_i,c_p)\r)\r\}
             \ \biggl | \ \cZ^{tr}\r]\r|\r\} \\
	& = & \E\l\{\l|\E\l[\int^{\hat{c}_p}_{c_p}
	\E(\tau_i \mid \hat{s}_{\cZ^{tr}}(\bX_i)=c, \cZ^{tr}) \d F(c)\r]\r|\r\}\\
	& = & \E\l\{\l|\E\l[\int^{F(\hat{c}_p)}_{F(c_p)}
	\E(\tau_i\mid \hat{s}_{\cZ^{tr}}(\bX_i)=F^{-1}(x), \cZ^{tr}) \d x\r]\r|\r\}\\
	& \leq & \E\l[\l|F(\hat{c}_p)-{1-p}\r| \times \max_{c \in
		[c_p,\hat{c}_p]} \l | \E(\tau_i\mid
                 \hat{s}_{\cZ^{tr}}(\bX_i)=c, \cZ^{tr})\r|\r].
\end{eqnarray*}
By the definition of $\hat{c}_p$, $F(\hat{c}_p)$ is the
$m-\lfloor mp\rfloor$th order statistic of $n$ independent uniform
random variables. This statistic does not depend on the training set
$\cZ^{tr}$ as the test samples are independent.  Thus, $F(\hat{c}_p)$
follows the Beta distribution with the shape and scale parameters
equal to $m-\lfloor mp \rfloor$ and $\lfloor mp \rfloor+1$,
respectively.  Therefore, we have,
\begin{equation}
\mc{P}(|F(\hat{c}_p)-p|>\epsilon) \ = \ 1-B(1-p+\epsilon, m-\lfloor mp
\rfloor, \lfloor mp \rfloor+1)+B(1-p-\epsilon, m-\lfloor mp \rfloor,
\lfloor mp \rfloor+1),
\end{equation}
where $B(\epsilon, \alpha, \beta)$ is the incomplete beta function,
i.e.,
\begin{equation*}
B(\epsilon,\alpha,\beta) \ = \ \int^\epsilon_0 t^{\alpha -1} (1-t)^{\beta -1} \d t.
\end{equation*}
Combining with the result above, the desired result follows. \qed

\subsection{The Population Average Prescriptive Difference of
  Estimated ITRs under a Budget Constraint}
\label{app:PAPDpcv}

We consider the estimation and inference for the PAPD of an estimated
ITR.  The cross-validation estimator of this quantity is given by,
\begin{equation}
  \widehat{\Delta}_p(F,G) \ = \ \frac{1}{K}\sum_{k=1}^K
  \widehat\Delta_p(\hat{f}_{-k},\hat{g}_{-k},\cZ_k),
  \label{eq:PAPDpcvest}
\end{equation}
where $\widehat\Delta_p(\hat{f}_{-k},\hat{g}_{-k},\cZ_k)$ is defined
in equation~\eqref{eq:PAPDpest}.  Although the bias of the proposed
estimator is not zero, we derive its upper bound as done in
Theorem~\ref{thm:PAPEpcvest}.
\begin{theorem}
  {\sc (Bias and Variance of the Cross-validation PAPD Estimator with
    a Budget Constraint)}
  \label{thm:PAPDpcvest}
  \spacingset{1} Under
  Assumptions~\ref{asm:SUTVA},~\ref{asm:randomsample},~and~\ref{asm:comrand},
  the bias of the cross-validation PAPD estimator with a budget
  constraint $p$ defined in equation~\eqref{eq:PAPDpcvest} can be
  bounded as follows,
  \begin{align*}
    &\E_{\cZ^{tr}}[\mc{P}_{c_p(\hat{f}_{\cZ^{tr}}),c_p(\hat{g}_{\cZ^{tr}})}(|\E\{\widehat\Delta_p(F,G) -\Delta_p(F,G) \mid c_p(\hat{f}_{\cZ^{tr}}),c_p(\hat{g}_{\cZ^{tr}})\}|\geq  \epsilon)]
    \\ &\leq \ 1-2B(1-p+\gamma_p(\epsilon),n-\lfloor np \rfloor, \lfloor np
    \rfloor+1)+2B(1-p-\gamma_p(\epsilon), n-\lfloor np \rfloor,
    \lfloor np \rfloor+1),
  \end{align*}
  where any given constant $\epsilon > 0$,
  $B(\epsilon, \alpha, \beta)$ is the incomplete beta function (if
  $\alpha = 0$ and $\beta > 0$, we set
  $B(\epsilon,\alpha, \beta):=H(\epsilon)$ for all $\epsilon$ where
  $H(\epsilon)$ is the Heaviside step function), and
  \begin{equation*}
    \gamma_p(\epsilon)  =
    \frac{\epsilon}{\E_{\cZ^{tr}}[\max_{c \in  [c_{p}(\hat{f}_{\cZ^{tr}})-\epsilon, c_{p}(\hat{f}_{\cZ^{tr}})+\epsilon],
        d\in  [c_{p}(\hat{g}_{\cZ^{tr}})-\epsilon, c_{g}(\hat{g}_{\cZ^{tr}})+\epsilon]}
      \{\E(\tau_i\mid  \hat{s}_{\hat{f}_{\cZ^{tr}}}(\bX_i)=c),  \E(\tau_i\mid  \hat{s}_{\hat{g}_{\cZ^{tr}}}(\bX_i)=d)\}]}.
  \end{equation*}
The variance of the estimator is,
  \begin{eqnarray*}
    \V(\widehat \Delta_p(F,G))
    & = &  \frac{\E(S_{\hat{f}\hat{g}1}^2)}{n_1} +  \frac{\E(S_{\hat{f}\hat{g}0}^2)}{n_0}
             +  \frac{\lfloor  np \rfloor(\lfloor np \rfloor-n)}{n^2(n-1)}(\kappa_{F1}(p)^2
             +\kappa_{G1}(p)^2)\\
    & &- 2\left(\Pr(\hat{f}_{\cZ^{tr}}(\bX_i,\hat{c}_p(\hat{f}_{\cZ^{tr}}))=\hat{g}_{\cZ^{tr}}(\bX_i,\hat{c}_p(\hat{f}_{\cZ^{tr}}))=1) -\frac{\lfloor np
    	\rfloor^2}{n^2}\right)\kappa_{F1}(p)\kappa_{G1}(p)\\
    & &      - \frac{K-1}{K}\E(S_{FG}^2),
  \end{eqnarray*}
  where
  $S_{\hat{f}\hat{g}t}^2 = \sum_{i=1}^n
  (\widetilde{Y}_{\hat{f}\hat{g}i}(t) -
  \overline{\widetilde{Y}_{\hat{f}\hat{g}}(t)})^2/(n-1)$,
  $S_{FG}^2 = \sum_{k=1}^K
  (\widehat\Delta_p(\hat{f}_{-k},\hat{g}_{-k},\cZ_k)) -
  \overline{\widehat\Delta_p(\hat{f}_{-k},\hat{g}_{-k},\cZ_k)})^2/(K-1)$,
  $\kappa_{Ft}(p)=\E(\tau_i\mid
  \hat{f}_{\cZ^{tr}}(\bX_i,\hat{c}_{p}(\hat{f}_{\cZ^{tr}}))=t)$, and
  $\kappa_{Gt}(p) = \E(\tau_i\mid
  \hat{g}_{\cZ^{tr}}(\bX_i,\hat{c}_p(g))=t)$ with
  $Y_{\hat{f}\hat{g}i}(t) =
  \left\{\hat{f}_{\cZ^{tr}}(\bX_i,\hat{c}_p(\hat{f}_{\cZ^{tr}})) -
    \hat{g}_{\cZ^{tr}}(\bX_i,\hat{c}_p(\hat{g}_{\cZ^{tr}}))\right\}Y_i(t)$,
  $\overline{Y(t)}=\sum_{i=1}^n Y_{\hat{f}\hat{g}i}(t)/n$, and
  $\overline{\widehat\Delta_p(\hat{f}_{-k}, \hat{g}_{-k},\cZ_k)} =
  \sum_{k=1}^K \widehat\Delta_p(\hat{f}_{-k},\hat{g}_{-k},\cZ_k)/K$
  for $t=0,1$.
\end{theorem}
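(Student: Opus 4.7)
The plan is to establish Theorem~\ref{thm:PAPDpcvest} by combining the techniques used for Theorem~\ref{thm:PAPDpest} (fixed-ITR PAPD under a budget constraint) with those used for Theorem~\ref{thm:PAPEpcvest} (cross-validation PAPE under a budget constraint). Since $\widehat\Delta_p(F,G)$ in equation~\eqref{eq:PAPDpcvest} is simply the average of $\widehat\Delta_p(\hat{f}_{-k},\hat{g}_{-k},\cZ_k)$ across the $K$ folds, the natural strategy is to first study a single fold conditional on its training set, and then aggregate across folds using the cross-validation variance identity from Lemma~\ref{lemma:cvcov}.

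For the variance, I would first apply the analogue of Lemma~\ref{lemma:cvcov} with $\hat\lambda$ replaced by $\widehat\Delta_p(\hat{f}_{-k},\hat{g}_{-k},\cZ_k)$, which yields
\[
\V(\widehat\Delta_p(F,G)) \;=\; \V\!\left(\widehat\Delta_p(\hat{f}_{-k},\hat{g}_{-k},\cZ_k)\right) - \frac{K-1}{K}\,\E(S_{FG}^2).
\]
The single-fold variance is then handled by the calculation already carried out in Theorem~\ref{thm:PAPDpest}, applied with $m$ in place of $n$ and with the random ITRs $\hat f_{\cZ^{tr}}$, $\hat g_{\cZ^{tr}}$ in place of the fixed $f,g$. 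The marginal-sum and covariance terms go through verbatim once we note, as in the proof of Theorem~\ref{thm:PAPEpcvest} in Appendix~\ref{app:PAPEpcvest}, that $\Pr(\hat{f}_{\cZ^{tr}}(\bX_i,\hat c_p(\hat f_{\cZ^{tr}}))=1) = \lfloor mp\rfloor/m$ by conditioning on the training set and then on the realised threshold. The only alteration is that all CATE-type conditional expectations $\kappa_{ft}(p)$, $\kappa_{gt}(p)$ are now defined with respect to the estimated scoring rules, producing $\kappa_{F1}(p)$, $\kappa_{G1}(p)$, and the joint-probability term $\Pr(\hat f_{\cZ^{tr}}=\hat g_{\cZ^{tr}}=1)$ inherits the same non-identifiability that appeared in Theorem~\ref{thm:PAPDpest}.

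For the bias bound, I would condition on $\cZ^{tr}$ and invoke the conditional bound from Theorem~\ref{thm:PAPEpcvest} twice—once for the $\hat f_{\cZ^{tr}}$ component and once for the $\hat g_{\cZ^{tr}}$ component—since, by the triangle inequality,
\[
\bigl|\E\{\widehat\Delta_p(F,G)-\Delta_p(F,G)\mid \cZ^{tr}\}\bigr|
\le \bigl|\E\{\hat\tau_{Fp}-\tau_{Fp}\mid \cZ^{tr}\}\bigr|+\bigl|\E\{\hat\tau_{Gp}-\tau_{Gp}\mid \cZ^{tr}\}\bigr|.
\]
A union bound over the two threshold events, combined with the Beta-distribution tail identity derived for $F(\hat c_p)$ in Appendix~\ref{app:PAPEpcvest}, yields exactly the factor of $2$ on the $B(\cdot)$ terms that appears in the statement. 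Taking the outer expectation over $\cZ^{tr}$ and using Jensen on the $\max$ in the denominator of $\gamma_p(\epsilon)$ then delivers the claimed unconditional probability bound.

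The main obstacle is not the algebra of the covariance—this essentially recapitulates the derivation for Theorem~\ref{thm:PAPDpest}—but rather the correct bookkeeping of what is random and what is conditioned on. Specifically, the scoring rules $\hat s_{\hat f_{\cZ^{tr}}}$ and $\hat s_{\hat g_{\cZ^{tr}}}$ are themselves random through $\cZ^{tr}$, so the two thresholds $\hat c_p(\hat f_{\cZ^{tr}})$ and $\hat c_p(\hat g_{\cZ^{tr}})$ are coupled through the shared training data, and the joint probability $\Pr(\hat f_{\cZ^{tr}}=\hat g_{\cZ^{tr}}=1)$ cannot be factored. I would handle this exactly as in the remark after Theorem~\ref{thm:PAPDpest}, retaining the unfactored joint probability in the exact variance and, for implementation, replacing it by the conservative upper bound $\lfloor np\rfloor\max\{\lfloor np\rfloor,n-\lfloor np\rfloor\}/\{n^2(n-1)\}$ on $|\kappa_{F1}(p)\kappa_{G1}(p)|$.
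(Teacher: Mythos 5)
Your proposal follows essentially the same route as the paper, which omits this proof entirely with the remark that it is ``similar to that of Theorem~\ref{thm:PAPEpcvest}'': you splice the single-fold covariance calculation of Theorem~\ref{thm:PAPDpest} (with the unfactorable joint probability retained and later bounded conservatively) into the cross-validation decomposition of Lemma~\ref{lemma:cvcov}, and obtain the bias bound by conditioning on $\cZ^{tr}$, splitting the two thresholds by the triangle inequality, and applying the Beta-order-statistic tail bound to each. Your bookkeeping of the fold size (using $m$ in place of $n$ for the single-fold variance, consistent with Theorem~\ref{thm:PAPEpcvest}) is if anything more careful than the paper's own statement of the theorem.
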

Proof is similar to that of Theorem~\ref{thm:PAPEpcvest}, and hence is
omitted.  To estimate the variance, it is tempting to replace all
unknowns with their sample analogues.  However, the empirical analogue
for the joint probability
$\Pr(\hat{f}_{\cZ^{tr}}(\bX_i,\hat{c}_p(\hat{f}_{\cZ^{tr}}))=\hat{g}_{\cZ^{tr}}(\bX_i,\hat{c}_p(\hat{g}_{\cZ^{tr}}))=1)$
under general $\hat{f}_{\cZ^{tr}},\hat{g}_{\cZ^{tr}}$ is not a good
estimate because it is solely based on one realization.  Thus, we use
the following conservative bound,
\begin{align*}
- &\left(\Pr(\hat{f}_{\cZ^{tr}}(\bX_i,\hat{c}_p(\hat{f}_{\cZ^{tr}}))=\hat{g}_{\cZ^{tr}}(\bX_i,\hat{c}_p(\hat{f}_{\cZ^{tr}}))=1) -\frac{\lfloor np
	\rfloor^2}{n^2}\right)\kappa_{F1}(p)\kappa_{G1}(p)\\&\leq \frac{\lfloor np  \rfloor  \max\{\lfloor  np \rfloor,
	n-\lfloor np \rfloor\}}{n^2(n-1)}|\kappa_{F1}(p)\kappa_{G1}(p)|,
\end{align*}
where the inequality follows because the maximum is achieved when the
scoring rules of $\hat{f}_{\cZ^{tr}}$ and $\hat{g}_{\cZ^{tr}}$ are
perfectly negatively correlated. We use this upper bound in our simulation and
empirical studies. In Section~\ref{sec:synthetic}, we find that this
upper bound estimate of the variance produces only a small
conservative bias.

\subsection{An Additional Empirical Application}
\label{app:trans}

In this section, we describe an additional empirical application based
on the canvassing experiment \citep{broockman2016durably}.  This study
was also re-analyzed by \citet{kunz:etal:18}.  The original authors
find little heterogeneity in treatment effect.  Our analysis below
confirms this finding.

\subsubsection{The Experiment and Setup}

We analyze the transgender canvas study of
\citet{broockman2016durably}.  This is an experiment that randomly
assigned a door-to-door canvassing treatment to over 1,200 households
(with a total of over 1,800 members) in Florida to estimate the
treatment effect on support for a transgender rights law. The placebo
group received a conversation on recycling, while the treatment group
received a conversation about transgender issues. The support is
measured at various time points after the intervention (i.e., 3 days,
3 weeks, 6 weeks, 3 months) using an online survey. The treatment
effect heterogeneity is important in this scenario as canvassing is
both costly and time-consuming. An ITR may allow canvassers to contact
only those who are positively influenced by the message.

We follow the pre-experiment analysis plan by the original authors,
and select a total of 26 baseline covariates including political
inclination, gender, race, and opinions on social issues.  Our
treatment variable is whether or not the individual received the
conversation about transgender issues (as opposed to the recycling
message). Since the randomization was conducted on the household
level, we randomly select one individual from each household for our
analysis. We focus on the primary target (support for the transgender
law) at the 3 day time point after the intervention, which is measured
on a discrete scale with 7 possible values $\{-3,-2,-1,0,1,2,3\}$,
with positive values indicating support.

The resulting dataset consists of 409 observations.  We randomly
select approximately 70\% of the sample (i.e., 287 observations) as
the training data and the reminder of the sample (i.e., 122
observations) as the evaluation data. We center the outcome variable
using the mean in the training data to minimize variance, as discussed
in Section \ref{sec:evalmetrics}. We train three machine learning models designed to
measure heterogeneous treatment effects: Causal Forests, Bayesian
Causal Forests \citep{hahn2020bayesian}, and R-Learner
\citep{nie2017quasi}. All tuning was done through the 5-fold cross
validation procedure on the training set using the PAPE as the
evaluation metric. For Causal Forest, we set {\tt tune.parameters =
  TRUE}. For Bayesian Causal Forests (BCF), tuning was done using a
burn-in sample for MCMC sampling. For R-Learner, we utilized the lasso
loss function and the default cross-validation for the regularization
parameter.  We then create an ITR as $\mathbf{1}\{\hat\tau(\bx)>0\}$
where $\hat\tau(\bx)$ is the estimated conditional average treatment
effect obtained from each fitted model.  We will evaluate these ITRs
using the evaluation sample.

\subsubsection{Results}

\begin{table}[t!]
  \centering \spacingset{1}
  \resizebox{\linewidth}{!}{
    \begin{tabular}{l..c|..c|..c}
      \hline
      & \multicolumn{3}{c|}{\textbf{BCF}} & \multicolumn{3}{c|}{\textbf{Causal Forest}}
      & \multicolumn{3}{c}{\textbf{R-Learner}}\\
      & \multicolumn{1}{c}{est.}  & \multicolumn{1}{c}{s.e.} & \multicolumn{1}{c|}{treated}
      & \multicolumn{1}{c}{est.}  & \multicolumn{1}{c}{s.e.} & \multicolumn{1}{c|}{treated}
      & \multicolumn{1}{c}{est.}  & \multicolumn{1}{c}{s.e.} & \multicolumn{1}{c}{treated}\\\hline
      No budget constraint     &  -0.104 & 0.128  & 48.4\%  &  -0.349 & 0.137 & 47.5\% & 0 & 0 &  100\% \\
      20\% Budget constraint   & -0.02 & 0.121  & 20\%  & -0.120   & 0.107 & 20\% & 0 & 0.104 &  20\% \\\hline
    \end{tabular}
  }
  \caption{The Estimated Population Average Prescription Effect (PAPE)
    for Bayesian Causal Forest (BCF), Causal Forest, and R-Learner
    with and without a Budget Constraint.  For each of the three
    outcomes, the point estimate, the standard error, and the
    proportion treated are shown.  The budget constraint considered
    here implies that the maximum proportion treated is
    20\%.} \label{tb:comparison_add}
\end{table}

Table~\ref{tb:comparison_add} presents the results.  We find that
without a budget constraint, none of the ITRs based on the machine learning methods
significantly improves upon the random treatment rule. In particular, the R-Learner
leads to an ITR that treats everyone. Furthermore, we see that the ITR
based on Causal Forest performs worse than the random treatment rule by 0.349 (out of a
$-3$ to $3$ scale) with a standard error of 0.137. The results are
similar when we impose a budget constraint, and none of the resulting
ITRs perform statistically significantly better than the random treatment rule. The
result based on R-Learner is consistent with the conclusion of the
original study indicating that there was no heterogeneity detected
using LASSO.

\begin{figure}[t!]
  \spacingset{1}
  \includegraphics[width=\linewidth]{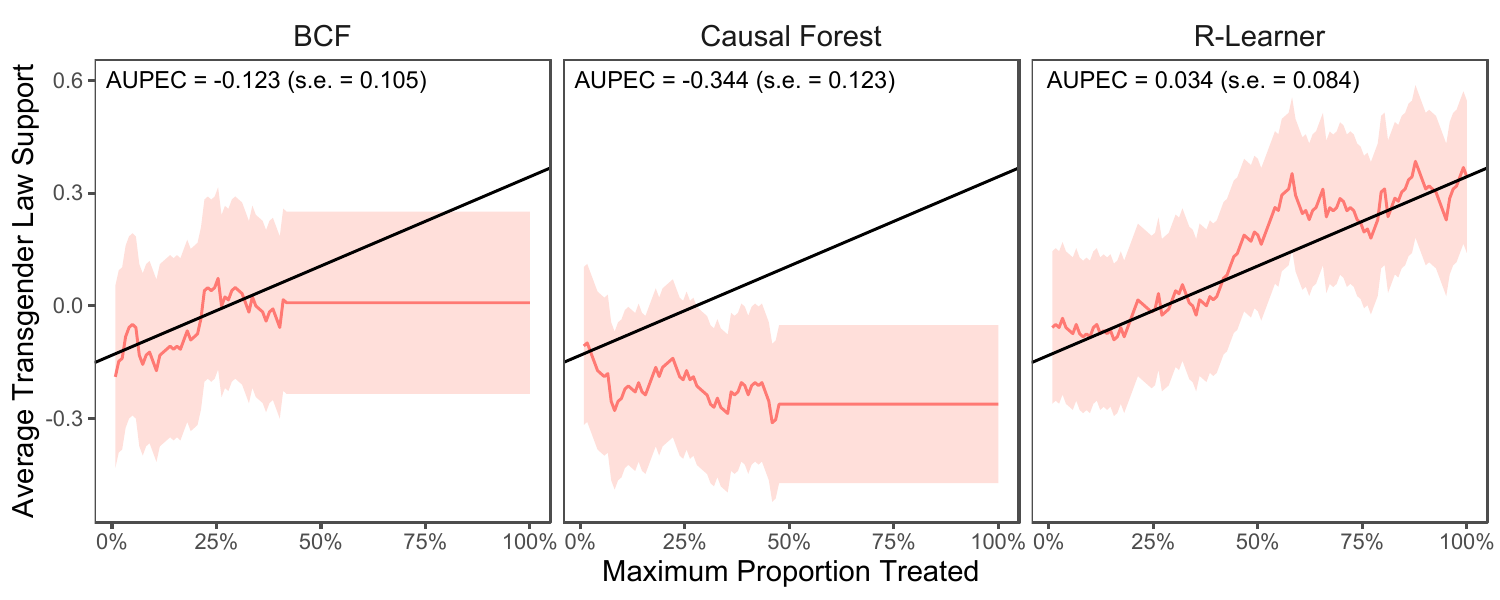}
  \caption{Estimated Area Under the Prescriptive Effect Curve (AUPEC).
    A solid red line in each plot represents the Population Average
    Prescriptive Effect (PAPE) with pointwise 95\% confidence
    intervals shaded.  The area between this line and the black line
    (representing random treatment) is the AUPEC.  The results are
    presented for the individualized treatment rules based on Bayesian
    Causal Forest (BCF), Causal Forest, and R-Learner.
  } \label{fg:AUPEC_add}
\end{figure}

We plot the estimated PAPE (with 95\% pointwise confidence interval)
as a function of budget constraint in Figure~\ref{fg:AUPEC_add}.  The
area between this line and the black horizontal line at zero
corresponds to the AUPEC.  In each plot, the horizontal axis
represents the budget constraint as the maximum proportion treated,
and the point estimate and standard error of the AUPEC are
shown. While BCF and R-Learner fail to create an ITR that is
significantly different from the random treatment rule, Causal Forest produces an ITR
that is statistically significantly worse than the random treatment rule. The result
illustrates a potential danger of using an advanced machine learning algorithm to
create an ITR.  Indeed, there is no guarantee that the resulting ITR
outperforms the random treatment rule.

\end{document}